\pdfoutput=1
\documentclass[onecolumn,a4paper,nopdfoutputerror]{quantumarticle}

\usepackage[utf8]{inputenc}
\usepackage{amsmath,amssymb,mathtools}
\usepackage{array}
\usepackage{amsthm}
\usepackage{graphicx}
\usepackage{xcolor}
\usepackage{ragged2e}
\usepackage{tikz}
\usetikzlibrary{positioning,arrows.meta,calc}
\usepackage[numbers,sort&compress]{natbib}
\usepackage{hyperref}

\theoremstyle{plain}
\makeatletter
\@ifundefined{theorem}{\newtheorem{theorem}{Theorem}}{}
\@ifundefined{lemma}{\newtheorem{lemma}{Lemma}}{}
\@ifundefined{corollary}{\newtheorem{corollary}{Corollary}}{}
\@ifundefined{proposition}{\newtheorem{proposition}{Proposition}}{}
\makeatother

\newcommand{\Tr}{\operatorname{Tr}}
\newcommand{\Swap}{S}
\newcommand{\EAC}{\mathcal C_{\mathrm{abs}}}
\newcommand{\EACF}{\mathcal C_F}
\newcommand{\PhiP}{\Phi^{+}}
\newcommand{\PhiM}{\Phi^{-}}
\newcommand{\PsiP}{\Psi^{+}}
\newcommand{\PsiM}{\Psi^{-}}
\newcommand{\Sep}{\mathrm{Sep}}
\newcolumntype{C}[1]{>{\centering\arraybackslash}p{#1}}

\begin{document}

\title{Absorption capacity of separable noise: Bell-mixing thresholds on
entanglement and teleportation}

\author{Xuan Du Trinh}
\affiliation{Stony Brook University, Stony Brook, New York 11794, USA}
\email{xtrinh@cs.stonybrook.edu}
\orcid{0009-0009-5610-462X}

\begin{abstract}
We study Bell-mixing lines
$\rho_\lambda=\lambda\PhiP+(1-\lambda)\sigma$, where $\PhiP$ is a fixed Bell
reference and $\sigma$ is a separable two-qubit noise state. Along this line
there are two operational crossings: the state becomes entangled, and it
reaches quantum teleportation advantage over classical strategies. We
package these crossings as capacities of the noise state. The entanglement
absorption capacity $\EAC(\sigma)$ is the
largest amount of Bell reference that $\sigma$ can absorb while the partial
transpose remains positive. The fidelity absorption capacity $\EACF(\sigma)$
is the largest amount of Bell reference that $\sigma$ can absorb while
keeping the maximal teleportation fidelity at or below the classical bound
$2/3$. The thresholds corresponding to the two crossing points are obtained
from the same M\"obius map, $\lambda_*=\EAC/(1+\EAC)$ and
$\lambda_F=\EACF/(1+\EACF)$. We derive closed-form capacities and
thresholds for product noise states and separable complex $X$ noise states.
For product noise, $\EAC$ depends only on local
marginal purities, while $\EACF$ also depends on orientation relative to the
maximally entangled reference. For $X$ noise states, both capacities are
explicit in all four Bell frames. We also study three extensions:
arbitrary pure-state references, the evolution of $X$ noise states and their
capacities under local amplitude-damping and dephasing channels, and
decomposition certificates that give lower bounds on the capacities, hence
on the thresholds, for general separable noise.
\end{abstract}

\maketitle

\section{Introduction}
\label{sec:intro}
Entanglement is fragile. It is the resource behind protocols from
teleportation~\cite{BennettTeleportation1993} to entanglement-based key
distribution~\cite{Ekert1991}, but in any real device the entangled state
must first survive preparation, storage, and transmission through a noisy
environment. How that entanglement changes, and when it is finally gone,
is a basic question of open-system dynamics. The loss need not be
asymptotic: under common noise processes a two-qubit state can become
separable at a finite time, the phenomenon of
entanglement sudden death (ESD)~\cite{YuEberly2004,AlmeidaEtAl2007Science,
YuEberly2009Science,TerraCunha2007,AolitaMeloDavidovich2015}. A second
transition matters for applications: a state may remain entangled after it
has already stopped beating the classical teleportation bound. This paper
gives a framework for locating both transitions along a Bell-mixing line
and following them as the noise state evolves.

We write the noisy state as a mixture of the intended Bell component and a
separable background,
\begin{equation}
  \rho_\lambda \,=\, \lambda\,|\PhiP\rangle\!\langle\PhiP|
  + (1-\lambda)\,\sigma,\qquad \lambda\in[0,1],
\label{eq:bell-mixing-line}
\end{equation}
where $|\PhiP\rangle=(|00\rangle+|11\rangle)/\sqrt2$, $\sigma$ is
separable, and $\lambda$ is the surviving Bell weight. This line is not
only a parametrization. It is the output of a replacement process that
transmits the input pair with probability $\lambda$ and otherwise prepares
$\sigma$. It also describes depolarization toward a fixed separable
endpoint. In this parametrization, $\lambda$ has the meaning of a survival probability, and the
line runs from the noise state $\rho_0=\sigma$ to the Bell reference
$\rho_1=|\PhiP\rangle\!\langle\PhiP|$.

The first transition is the entanglement threshold
\begin{equation}
\lambda_*(\sigma)\,:=\,\inf\{\lambda\in[0,1]:\rho_\lambda\text{ is entangled}\},
\label{eq:lambda-star-def}
\end{equation}
the smallest Bell weight above which the mixture is entangled. On two
qubits, separability is equivalent to positive partial transpose (PPT) by
the Peres--Horodecki criterion~\cite{Peres1996,Horodecki1996}. The second
transition is the teleportation threshold
\begin{equation}
\lambda_F(\sigma)\,:=\,\inf\{\lambda\in[0,1]:F(\rho_\lambda)>2/3\},
\label{eq:lambda-F-def}
\end{equation}
where $F(\rho_\lambda)$ is the maximal teleportation fidelity that
$\rho_\lambda$ can provide, and $2/3$ is the classical
bound~\cite{Popescu1994,MassarPopescu1995,HHH96tele}. At the
Werner point $\sigma=I_4/4$ the two thresholds coincide,
$\lambda_*=\lambda_F=1/3$~\cite{Werner1989,Peres1996,Horodecki1996}. Away
from that point, the useful object is the map from the noise state to
the two transition weights, $\sigma\mapsto\lambda_*(\sigma)$ and
$\sigma\mapsto\lambda_F(\sigma)$. For calibration, or for a
time-dependent channel, this map turns a measured or evolving noise state
$\sigma(t)$ directly into the Bell weight at which each qualitative change
occurs.

Several physically motivated noise families make this map concrete.
Bell-diagonal noise is the standard reference case, with separability fixed by
the largest Bell
weight~\cite{BennettDiVincenzoSmolinWootters1996,HorodeckiRMP}. Product
noise $\sigma=A\otimes B$ describes independent local decoherence of the
two qubits, although the dependence of the threshold on the two local
marginals is not obvious from the pointwise criteria. We also consider
$X$-state noise, where the separable endpoint has the form
$$
\sigma_X=\begin{pmatrix}
a&0&0&u\\ 0&b&v&0\\ 0&v^{*}&c&0\\ u^{*}&0&0&d
\end{pmatrix}
\qquad(a,b,c,d\ge0,\ \ u,v\in\mathbb{C}),
$$
with nonzero entries only on the diagonal and antidiagonal in the
computational basis. This family is closed under partial
transposition~\cite{YuEberly2007,AliRauAlber2010,QuesadaAlQasimiJames2012}
and carries explicit entanglement and teleportation
criteria~\cite{Hu2013XStates}. It is also preserved by the local
amplitude-damping and dephasing channels used in standard relaxation
models, and includes the setting where ESD was first
identified~\cite{YuEberly2004,TerraCunha2007}.

Our approach works from the separable side. Instead of asking how much
noise destroys a fixed entangled state, we ask how much of the Bell
reference a fixed separable background can absorb before a transition is
reached. The two resulting scalars are the \emph{entanglement absorption
capacity} $\EAC(\sigma)$ and the \emph{fidelity absorption capacity}
$\EACF(\sigma)$. Writing $\eta=\lambda/(1-\lambda)$, the PPT condition for
\eqref{eq:bell-mixing-line} becomes positivity of
$\sigma^{T_B}+\eta Q$, where
$Q=(|\PhiP\rangle\!\langle\PhiP|)^{T_B}$. We define $\EAC(\sigma)$ as the
largest feasible $\eta$ for this positivity problem. We define
$\EACF(\sigma)$ with the same mixing parameter, replacing PPT positivity by
the condition that the fully entangled fraction is still at most $1/2$,
equivalently that the state does not beat the classical teleportation
fidelity $2/3$ for two qubits~\cite{HHH96tele}. After these capacities are
computed in the $\eta$ parameter, the Bell-weight thresholds are obtained by
converting back to $\lambda$:
$$
\lambda_*(\sigma)=\frac{\EAC(\sigma)}{1+\EAC(\sigma)},
\qquad
\lambda_F(\sigma)=\frac{\EACF(\sigma)}{1+\EACF(\sigma)}.
$$

This construction is directional. The noise state $\sigma$ is the input to
the capacities, and for each input we test a single mixing line from
$\sigma$ to the chosen reference. In \eqref{eq:bell-mixing-line} the
reference is $|\PhiP\rangle$. Each capacity is therefore a threshold value
along a specified line, not a direction-independent measure of $\sigma$.
What is new is the closed-form treatment of both threshold crossings in one joint
framework, not either quantity alone. Both capacities are
concave under convex decomposition of the separable noise
as proved below, so decompositions into solved components give certified
lower bounds.
Table~\ref{tab:prior-positioning} places this viewpoint against the
case-by-case treatments.

For a general separable endpoint, $\EAC$ is still obtained from a small
SDP. The structured families below are useful because they turn the exact
threshold, or a certified lower bound from a convex decomposition into
solved pieces, into a formula. This is often enough to identify ESD or the
loss of teleportation advantage. The point is not computational speed,
since the SDP is small, but analytic control: a closed threshold can be
bounded, compared, differentiated, and followed along a channel.

\paragraph*{Summary of results.}
For product noise, $\EAC$ depends on the two
marginals only through their
purities,
\begin{equation}
\EAC(A\otimes B)=\sqrt{(1-\Tr A^2)(1-\Tr B^2)},
\label{eq:intro-product}
\end{equation}
and this value is the same for every maximally entangled reference. By
contrast, $\EACF$ is not fixed by these purities alone: it also depends on
how the product state is oriented relative to the chosen reference. The
separable $X$ family gives closed forms for both capacities
in all four Bell frames, and therefore for $\lambda_*$, $\lambda_F$, and
the gap $\lambda_F-\lambda_*$. For a pure reference with concurrence
$C_\psi\in(0,1]$~\cite{Wootters1998}, $\EAC$ rescales by $1/C_\psi$ for
every product noise. On $X$ noise, the pure-reference rescaling used here
applies when the noise state is $X$-shaped in the Schmidt basis of the
reference. Because amplitude damping and dephasing
preserve $X$ form, substituting the time-dependent $X$ entries gives
explicit threshold curves $(\lambda_*(t),\lambda_F(t))$ along a
decoherence trajectory. Section~\ref{sec:summary} states the laws as
theorems, and Section~\ref{subsec:using-framework} gives the recipe to
apply them.

\section{The framework}
\label{sec:eac}
This section defines the framework's two absorption capacities, the
entanglement absorption capacity $\EAC(\sigma)$ and the fidelity
absorption capacity $\EACF(\sigma)$, from the Peres--Horodecki and
teleportation criteria. It proves the M\"obius identity that turns $\EAC$
into the entanglement threshold and $\EACF$ into the teleportation
threshold, and states the closed-form theorems that follow on structured
noise families. The
remaining sections derive the product-noise, $X$-state, and channel
trajectory formulas.

\subsection{Definition and operational meaning}
We use the standard Bell basis
\begin{equation}
|\Phi^\pm\rangle=\frac{|00\rangle\pm|11\rangle}{\sqrt2},
\qquad
|\Psi^\pm\rangle=\frac{|01\rangle\pm|10\rangle}{\sqrt2}.
\label{eq:Bell-basis}
\end{equation}
The same symbols without kets denote the corresponding rank-one
projectors, e.g.\ $\PhiP=|\PhiP\rangle\!\langle\PhiP|$, and a reference
drawn from $\Phi^\pm,\Psi^\pm$ is called a \emph{Bell frame}. The paper
works throughout with two qubits and takes $|\PhiP\rangle$ as the default
reference, except where a different Bell frame or a pure-state reference is
stated explicitly. The noise state is assumed separable whenever a
Bell-mixing entanglement threshold is quoted. We define
\begin{equation}
Q:=(|\PhiP\rangle\!\langle\PhiP|)^{T_B},
\qquad
\Swap:=\sum_{i,j=0}^{1}|ij\rangle\!\langle ji|.
\label{eq:Q-swap-identity}
\end{equation}
Here $Q$ denotes the partial transpose of the projector onto the Bell
reference state, while
$\Swap$ denotes the standard swap (or flip) operator on
$\mathbb{C}^2\otimes\mathbb{C}^2$. We refer to $Q$ as the
\emph{Bell-reference partial-transpose operator}. Throughout, it serves as the
fixed direction in which the noise state absorbs the Bell component as
the entanglement absorption capacity is computed. The swap identity $Q=\Swap/2$ is the
key step in the $U\otimes U$ symmetry argument in the product-state law of
Section~\ref{sec:eval1}.

\paragraph*{Entanglement absorption.} Partial transposition turns the mixture
$\rho_\lambda=\lambda|\PhiP\rangle\!\langle\PhiP|+(1-\lambda)\sigma$
into $\rho_\lambda^{T_B}=\lambda Q+(1-\lambda)\sigma^{T_B}$, so the
separability question (when does $\rho_\lambda^{T_B}$ stay positive?)
is a one-parameter positivity problem along the fixed direction $Q$ starting
from $\sigma^{T_B}$. We summarize that positivity problem by a single
number:
\begin{equation}
\EAC(\sigma):=\sup\{\eta\ge 0:\ \sigma^{T_B}+\eta Q\succeq 0\},\qquad \sigma\in\Sep,
\label{eq:EAC-def}
\end{equation}
i.e.\ the largest coefficient $\eta$ of $Q$ that $\sigma^{T_B}$ can
absorb while remaining positive semidefinite.
Restricting the parameter to $\eta\ge0$ makes $\EAC$ a \emph{non-negative}
capacity on its natural domain, the separable set $\Sep$: a separable
$\sigma$ has $\sigma^{T_B}\succeq0$, so $\eta=0$ is feasible and
$\EAC(\sigma)\ge0$. The feasible set in \eqref{eq:EAC-def} is a closed
bounded interval $[0,\EAC(\sigma)]$, so the supremum is attained and finite
(proof of Theorem~\ref{thm:mobius}). The value $\EAC(\sigma)=0$ means that
$\sigma$ cannot absorb any positive amount of the Bell component $Q$.

\paragraph*{Fidelity absorption.} The framework's second
question, teleportation usefulness, is answered by the same one-parameter
construction with a different stopping condition. At parameter $\eta$,
the Bell-mixing point is
$$
\rho_\eta=\frac{\sigma+\eta\,\PhiP}{1+\eta},
\qquad
\lambda=\frac{\eta}{1+\eta}.
$$
For a two-qubit state $\rho$, let
$$
f(\rho)=\max_{|\beta\rangle\in\mathcal M}
\langle\beta|\rho|\beta\rangle
$$
be the fully entangled fraction, where $\mathcal M$ is the set of maximally
entangled states. Since the maximal teleportation fidelity is
$F(\rho)=(2f(\rho)+1)/3$~\cite{HHH96tele}, the classical bound
$F\le2/3$ is equivalent to $f\le1/2$. We define the
\emph{fidelity absorption capacity}
\begin{equation}
\EACF(\sigma):=\sup\bigl\{\eta\ge0:\ f(\rho_\eta)\le\tfrac12\bigr\},
\label{eq:EACF-def}
\end{equation}
the largest amount of the Bell reference that $\sigma$ can absorb before
the mixture beats the classical teleportation bound. As with $\EAC$, this
supremum is well posed: the feasible set is a closed interval
$[0,\EACF(\sigma)]$ containing $\eta=0$, so $\EACF(\sigma)$ is attained and
finite, as shown in the proof of Theorem~\ref{thm:mobius}. The two capacities stand on equal
footing: $\EAC$ decides whether the mixture is entangled and $\EACF$
whether it is useful for teleportation, and the next subsection turns each
into its Bell-weight threshold by the same M\"obius map. The closed form
of $\EACF$, like that of $\EAC$, is stated with the other results in
Section~\ref{sec:summary}. Its derivation, and the precise sense in which
\eqref{eq:EACF-def} is the first teleportation crossing, are given in
Sections~\ref{sec:eval1} and~\ref{sec:eval2}.

\paragraph*{Reference convention.} We keep the reference
implicit in the notation: $\EAC(\sigma)\equiv\EAC(\sigma;\PhiP)$, and we
show it explicitly, as $\EAC(\sigma;\beta)$ for another Bell frame
$\beta\in\{\Phi^\pm,\Psi^\pm\}$ or $\EAC(\sigma;\psi)$ for a pure-state
reference, only when it is not $|\PhiP\rangle$. The same applies to
$\EACF$.

\subsection{The M\"obius threshold map}

\begin{theorem}[M\"obius threshold identities]
\label{thm:mobius}
For every separable two-qubit state $\sigma$, the Bell-mixing
entanglement threshold is the M\"obius transform
$\eta\mapsto \eta/(1+\eta)$ of the entanglement absorption capacity, and
the teleportation threshold is the same transform of the fidelity
absorption capacity,
\begin{equation}
\lambda_*(\sigma)=\frac{\EAC(\sigma)}{1+\EAC(\sigma)},\qquad
\lambda_F(\sigma)=\frac{\EACF(\sigma)}{1+\EACF(\sigma)}.
\label{eq:mobius}
\end{equation}
\end{theorem}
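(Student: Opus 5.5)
The plan is to reduce both identities to a single monotonicity statement along the line: the set of $\eta$ at which the relevant property fails (PPT holds, or $f\le1/2$) is a closed interval $[0,\eta_0]$ with $\eta_0<\infty$. Once this is shown, the strictly increasing bijection $\eta\mapsto\lambda=\eta/(1+\eta)$ from $[0,\infty)$ onto $[0,1)$ does the rest. Under this map $\rho_\eta$ is exactly $\rho_\lambda$, since $(\sigma+\eta\PhiP)/(1+\eta)=\lambda\PhiP+(1-\lambda)\sigma$. Hence $\{\lambda:\rho_\lambda$ entangled$\}$ is $(\eta_0/(1+\eta_0),1]$. Its infimum is $\eta_0/(1+\eta_0)$, which is the claimed formula. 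The endpoint $\lambda=1$, which no finite $\eta$ reaches, lies inside the entangled and useful regions in both cases, so it does not affect the infimum.

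For $\EAC$, I first note that the feasible set $\{\eta\ge0:\sigma^{T_B}+\eta Q\succeq0\}$ is convex, because it is the preimage of the PSD cone under an affine map, and closed by continuity. It contains $0$ because $\sigma$ is separable and hence PPT. It is bounded because $Q=\Swap/2$ has the eigenvalue $-1/2$ on the antisymmetric vector $|\PsiM\rangle$. Then $\langle\PsiM|\sigma^{T_B}+\eta Q|\PsiM\rangle=\langle\PsiM|\sigma^{T_B}|\PsiM\rangle-\eta/2$, which becomes negative for $\eta>2$ because the first term is at most $1$. So the set is $[0,\EAC]$ with $\EAC\le 2$. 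Since $\rho_\eta^{T_B}=(\sigma^{T_B}+\eta Q)/(1+\eta)$ and $1+\eta>0$, Peres--Horodecki shows that $\rho_\eta$ is separable exactly when $\eta\le\EAC$. This gives the first identity.

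For $\EACF$, I use that $f(\rho_\eta)=\max_\beta\bigl(\langle\beta|\sigma|\beta\rangle+\eta|\langle\beta|\PhiP\rangle|^2\bigr)/(1+\eta)$. The condition $f\le1/2$ is therefore equivalent to $g_\beta(\eta):=\langle\beta|\sigma|\beta\rangle-\tfrac12+\eta(|\langle\beta|\PhiP\rangle|^2-\tfrac12)\le0$ for every maximally entangled $|\beta\rangle$. Each $g_\beta$ is affine in $\eta$, so the feasible set is an intersection of closed half-lines, hence closed and convex. At $\eta=0$, feasibility is the known fact that separable states have $f\le1/2$. I would cite this, or prove it from the PPT condition via $\langle\beta|\sigma|\beta\rangle=\Tr(\sigma^{T_B}(\beta)^{T_B})$ with $(\beta)^{T_B}=\tfrac12$ times a unitary-conjugated swap, whose eigenvalues are at most $1/2$. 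Boundedness follows from taking $\beta=\PhiP$: then $g=\langle\PhiP|\sigma|\PhiP\rangle-\tfrac12+\eta/2>0$ for $\eta>1$, so $\EACF\le1$. Then $f(\rho_\lambda)>1/2$ holds exactly when $\lambda>\lambda_F$, and the second identity follows as above.

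I expect the main obstacle to be the interval and monotonicity structure for the fidelity condition, since $f$ is a maximum over a continuum of states. The affine-in-$\eta$ rewriting after clearing the positive denominator $1+\eta$ is the step I would rely on, together with the separable-implies-$f\le1/2$ fact needed at $\eta=0$.
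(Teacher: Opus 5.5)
Your proof is correct and follows essentially the same route as the paper. You show that the set of $\eta$ where PPT holds, respectively where $f\le\tfrac12$, is a closed interval $[0,\eta_0]$ containing $0$, bound the PPT interval with the $|\PsiM\rangle$ eigenvector of $Q$, and carry the endpoint through the monotone bijection $\eta\mapsto\eta/(1+\eta)$. The differences are minor: you get the interval structure for the fidelity condition from affine constraints in $\eta$ after clearing the factor $1+\eta$, where the paper uses convexity of $f(\rho_\lambda)$ in $\lambda$, and you bound $\EACF\le1$ with the witness $\beta=\PhiP$, where the paper excludes $\lambda=1$ via $f(\PhiP)=1$.
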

\begin{proof}[Proof]
On two qubits, the Peres--Horodecki criterion identifies separability
with positive partial transpose~\cite{Peres1996,Horodecki1996}. Hence
the threshold in \eqref{eq:lambda-star-def} is the largest
$\lambda$ for which $\rho_\lambda^{T_B}$ remains positive semidefinite. For
$\sigma\in\Sep$, $\sigma^{T_B}\succeq 0$, so the feasible set in
\eqref{eq:EAC-def} contains $\eta=0$, and since $Q=\Swap/2$ has its only
negative eigenvalue on $|\PsiM\rangle$, the constraint
$\langle\PsiM|(\sigma^{T_B}+\eta Q)|\PsiM\rangle\ge0$ bounds $\eta$ above,
so this feasible set is a closed bounded interval $[0,\EAC(\sigma)]$ with
the supremum attained. The partial transpose of the
mixing line is
$\rho_\lambda^{T_B}=\lambda Q+(1-\lambda)\sigma^{T_B}$, and (for
$\lambda<1$) the condition $\rho_\lambda^{T_B}\succeq 0$ is
equivalent (after dividing by $1-\lambda$) to
$\sigma^{T_B}+\tfrac{\lambda}{1-\lambda}Q\succeq 0$, i.e., to
$\lambda/(1-\lambda)\le\EAC(\sigma)$. The change of variable
$\eta=\lambda/(1-\lambda)$ is monotone increasing from $[0,1)$ onto
$[0,\infty)$ with inverse $\lambda=\eta/(1+\eta)$. Solving for
$\lambda$ at the crossing gives the first identity in \eqref{eq:mobius}.
The second follows by the same change of variables with the stopping
condition $f(\rho_\lambda)\le\tfrac12$ in place of PPT positivity. Since
$\rho_\lambda$ is affine in $\lambda$, each overlap
$\lambda\mapsto\langle\beta|\rho_\lambda|\beta\rangle$ is affine, so the
fully entangled fraction
$f(\rho_\lambda)=\max_{|\beta\rangle\in\mathcal M}\langle\beta|\rho_\lambda|\beta\rangle$
is convex in $\lambda$ on $[0,1]$, as a pointwise maximum of affine
functions of $\lambda$. By convexity, if
$f(\rho_{\lambda_1}),f(\rho_{\lambda_2})\le\tfrac12$ then
$f(\rho_{t\lambda_1+(1-t)\lambda_2})\le
t\,f(\rho_{\lambda_1})+(1-t)\,f(\rho_{\lambda_2})\le\tfrac12$ for all
$t\in[0,1]$, so $\{\lambda\in[0,1]:f(\rho_\lambda)\le\tfrac12\}$, as a
sublevel set of the convex function $f(\rho_\lambda)$, is convex, hence an
interval. It contains $\lambda=0$, where
$f(\sigma)\le\tfrac12$ because $\sigma$ is separable, and excludes
$\lambda=1$, where $f(\PhiP)=1$, so it equals $[0,\lambda_F]$ with right
endpoint $\lambda_F=\inf\{\lambda:f(\rho_\lambda)>\tfrac12\}$. In the
variable $\eta=\lambda/(1-\lambda)$ the same endpoint is
$\sup\{\eta\ge0:f(\rho_\eta)\le\tfrac12\}=\EACF(\sigma)$, so the monotone
inverse $\lambda=\eta/(1+\eta)$ gives $\lambda_F=\EACF/(1+\EACF)$.
\end{proof}
Theorem~\ref{thm:mobius} is the structural identity that the rest of the
paper specializes: the two operational thresholds are one M\"obius map
applied to the two capacities.
\subsection{Main results on closed forms}
\label{sec:summary}
Having defined the two capacities and proved the M\"obius identity
\eqref{eq:mobius}, the closed-form results for the two
capacities are now stated for two structured families, product states and
$X$ states. Table~\ref{tab:prior-positioning} compares this viewpoint
against prior uses of the same underlying criteria.

\emph{Product noise} $\sigma=A\otimes B$ models independent local
decoherence of the two qubits, with single-qubit marginals
$A=(I+\vec a\!\cdot\!\boldsymbol\sigma)/2$ and
$B=(I+\vec b\!\cdot\!\boldsymbol\sigma)/2$ of Bloch vectors
$\vec a,\vec b$. The $X$-state notation used below is
\begin{equation}
\sigma_X=\begin{pmatrix}
a&0&0&u\\ 0&b&v&0\\ 0&v^{*}&c&0\\ u^{*}&0&0&d
\end{pmatrix},\qquad
\begin{gathered}
a,b,c,d\ge 0,\\
a+b+c+d=1,\\
ad\ge |u|^2,\\
bc\ge |v|^2 .
\end{gathered}
\label{eq:summary-Xstate}
\end{equation}
Here $a,b,c,d$ are the diagonal populations and $u,v$ are the
antidiagonal coherences. For this $X$ form, the conditions in
\eqref{eq:summary-Xstate} are exactly $\Tr\sigma_X=1$ and
$\sigma_X\succeq0$. Separability adds the PPT constraints
$ad\ge|v|^2$ and $bc\ge|u|^2$. In these two additional constraints, the
interchange of $u$ and $v$ compared with the positivity conditions in
\eqref{eq:summary-Xstate} is intentional: partial transposition swaps the two
antidiagonal coherences.
We call the
$\{|00\rangle,|11\rangle\}$ subspace the $\Phi$ block and the
$\{|01\rangle,|10\rangle\}$ subspace the $\Psi$ block, since they support
the Bell states $\Phi^\pm$ and $\Psi^\pm$, respectively. The $\Phi$ block
carries $u$ and the $\Psi$ block carries $v$.

\begin{theorem}[Product law]
\label{thm:product}
For product noise the entanglement absorption capacity is the geometric
mean of the two single-qubit impurities,
\begin{equation*}
\EAC(A\otimes B)=\sqrt{(1-\Tr A^2)(1-\Tr B^2)},
\end{equation*}
the same for every maximally entangled reference. Its threshold obeys
$\lambda_*(A\otimes B)\le\tfrac13$, with equality if and only if
$A=B=I/2$. The fidelity absorption capacity also has a closed form, but it
additionally records the product state's orientation relative to the reference: in the
$|\PhiP\rangle$ frame, with $h=|\vec a|\,|\vec b|$ and
$m=\vec a^{T}\!\operatorname{diag}(1,-1,1)\,\vec b$,
\begin{equation*}
\EACF(A\otimes B)=\frac{\sqrt{1-h^2+m^2}-m}{2}.
\end{equation*}
The corresponding formulas for the other Bell frames are given in
Section~\ref{sec:eval1}. A pure-state reference
of concurrence $C_\psi\in(0,1]$ rescales the entanglement absorption capacity
universally, $\EAC(A\otimes B;\psi)=\EAC(A\otimes B)/C_\psi$.
\textnormal{(Derived in Section~\ref{sec:eval1}.)}
\end{theorem}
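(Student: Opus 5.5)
The plan is to prove the four claims in turn: the product law for $\EAC$, the threshold bound, the $\EACF$ formula, and the pure-reference rescaling.

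\textbf{Product law for $\EAC$.} Use $Q=\Swap/2$. For $\sigma=A\otimes B$ we have $\sigma^{T_B}=A\otimes B^T$, and $B^T$ is again a qubit state with the same purity. Choose a unitary $U$ that diagonalizes $A=\mathrm{diag}(p,1-p)$. Conjugation by $U\otimes U$ leaves $\Swap$ invariant, so we may take $A$ diagonal. The $U\otimes U$ symmetry does not diagonalize $B^T$ at the same time. So instead of diagonalizing both, compute the smallest $\eta$ at which $A\otimes B^T+\tfrac{\eta}{2}\Swap$ acquires a zero eigenvalue, i.e. study
\[
\det\bigl(A\otimes B^T+\tfrac{\eta}{2}\Swap\bigr)=0 .
\]
A cleaner route uses the identity $\Swap\,(X\otimes Y)\,\Swap=Y\otimes X$ and the fact that $\Swap=\PhiP^{T_B}\cdot 2$. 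Equivalently, the condition is $\langle v|A\otimes B^T|v\rangle\ge \tfrac{\eta}{2}$ for every unit $v$ in the antisymmetric direction, mixed with the symmetric sector. The first route I would actually try is the direct one. Write $M=A\otimes B^T$ in the basis that block-diagonalizes $\Swap$ (triplet plus singlet), and use $\Swap=P_{\mathrm{sym}}-P_{\mathrm{anti}}$. Then $M+\tfrac{\eta}{2}\Swap\succeq0$ iff $M+\tfrac{\eta}{2}\succeq \eta P_{\mathrm{anti}}$. By a Schur complement on the rank-one singlet direction, this is
\[
\eta\,\langle\PsiM|\bigl(M+\tfrac{\eta}{2}I\bigr)^{-1}|\PsiM\rangle\le 1 .
\]
With $M$ a tensor product, $(M+cI)^{-1}$ is computable in the joint eigenbasis: the eigenvalues are $p_iq_j$ with eigenvectors $e_i\otimes f_j$. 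Writing $|\langle\PsiM|e_i f_j\rangle|^2$ in terms of the overlap matrix $W=E^\dagger F^{*}$ gives a rational equation in $\eta$. Since $W$ is a $2\times2$ unitary, $|W_{ij}|^2$ is doubly stochastic, with entries $t,1-t$. I expect the orientation parameter $t$ to cancel, leaving $\eta^2=4\,p(1-p)\,q(1-q)=(1-\Tr A^2)(1-\Tr B^2)$. This cancellation is the main obstacle. To organize it, I would check whether the determinant of $M+\tfrac{\eta}{2}\Swap$ factors as
\[
\bigl(\det A\det B-\tfrac{\eta^2}{4}\bigr)\cdot(\text{positive}),
\]
using $\det(A\otimes B^T)$ together with the trace invariants. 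Independence from the maximally entangled reference follows because $(I\otimes V)\PhiP(I\otimes V^\dagger)$ changes $B$ to $V^TBV^{*}$, whose purity is the same.

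\textbf{Threshold bound.} Purities are at least $1/2$, so $\EAC\le 1/2$ and therefore $\lambda_*\le 1/3$ by Theorem~\ref{thm:mobius}. Equality forces both purities to equal $1/2$, i.e. $A=B=I/2$.

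\textbf{Fidelity capacity.} For $\EACF$, write $\rho_\eta=(\sigma+\eta\PhiP)/(1+\eta)$ with correlation matrix $T=\vec a\vec b^{T}$ for $\sigma$. The state $\PhiP$ has $T=\mathrm{diag}(1,-1,1)$. Use the known formula $f=\tfrac14\bigl(1+\max \text{ over } O\in SO(3) \text{ of } -\Tr(O\,T_\rho)\bigr)$, which reduces to singular values of $T$ with sign fixed by $\det$. The mixed correlation matrix is $\vec a\vec b^T+\eta\,\mathrm{diag}(1,-1,1)$, a rank-one perturbation of a signed identity. Its trace norm, with the determinant sign accounted for, is computable from $h$ and $m$. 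Setting $f=\tfrac12$ then yields a quadratic equation
\[
\eta^2+m\eta-\tfrac{1-h^2}{4}=0,
\]
whose positive root is the stated formula.

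\textbf{Pure reference.} For a reference $\psi=\sqrt{\alpha}|00\rangle+\sqrt{1-\alpha}|11\rangle$, write $\psi=(D\otimes I)\PhiP\sqrt2$ with $D$ diagonal. Apply the invertible congruence $D^{-1}\otimes I$, which preserves positivity, to transfer the problem to $\PhiP$ with noise $D^{-1}AD^{-1}\otimes B$, up to normalization. Applying the product law to this rescaled noise produces the extra factor $1/(2\sqrt{\alpha(1-\alpha)})=1/C_\psi$.
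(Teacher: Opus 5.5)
Your proposal is correct in outline. For the $\EAC$ product law you take a genuinely different route from the paper, and it works.

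\textbf{Comparison for $\EAC$.} The paper rotates $A$ to diagonal form with $U\otimes U$ and reorders to $|00\rangle,|11\rangle,|01\rangle,|10\rangle$. It takes a Schur complement against the diagonal $\{|00\rangle,|11\rangle\}$ block and checks by direct expansion that the Schur-determinant numerator factors as $(t_*^2-t^2)[t_*^2+t^2+(ap+dr)t]$; the rank-deficient case is handled separately. You write $Q=\tfrac12 I-|\PsiM\rangle\langle\PsiM|$ and reduce positivity to the secular inequality $g(c):=c\sum_{ij}w_{ij}/(\lambda_{ij}+c)\le1$, with $c=\eta/2$, $\lambda_{ij}=p_iq_j$ and $w_{ij}=2|\langle\PsiM|e_i\otimes f_j\rangle|^2$.

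The cancellation you flag as the main obstacle closes in one line. The weights are doubly stochastic because $|\PsiM\rangle$ has maximally mixed marginals, so $w_{11}=w_{22}=t$ and $w_{12}=w_{21}=1-t$. Your explicit $W=E^\dagger F^{*}$ is not literally the singlet-overlap matrix (the weights are $|x_0y_1-x_1y_0|^2$ with $x=e_i$, $y=f_j$), but only double stochasticity is used. Since $\lambda_{11}\lambda_{22}=\lambda_{12}\lambda_{21}=\det A\det B$, the identity $1-\frac{c}{x+c}-\frac{c}{y+c}=\frac{xy-c^2}{(x+c)(y+c)}$ gives
\[
1-g(c)=\bigl(\det A\det B-c^2\bigr)\left[\frac{t}{(\lambda_{11}+c)(\lambda_{22}+c)}+\frac{1-t}{(\lambda_{12}+c)(\lambda_{21}+c)}\right].
\]
The bracket is strictly positive for every $c>0$. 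This one identity gives feasibility up to $c_*=\sqrt{\det A\det B}$, failure beyond it, and the rank-deficient case. It also shows structurally why the orientation parameter drops out.

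Multiplying by $\det(M+cI)$ via the matrix determinant lemma yields exactly the factorization you guessed. It is the paper's \eqref{eq:supp-Nfactor}, with $ap+dr=t(\lambda_{12}+\lambda_{21})+(1-t)(\lambda_{11}+\lambda_{22})$. On its own, a vanishing determinant would need a further argument that only one eigenvalue can cross; your secular form supplies that. The paper's computation is more mechanical but never needs the eigenbasis of $B$.

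\textbf{Remaining parts of $\EAC$.} The threshold bound matches the paper. So does the pure-reference step; you filter the $A$ side instead of the $B$ side, which is equivalent. You should add two small points. First, reduce a general $|\psi\rangle$ to Schmidt form by local unitaries; this keeps the noise a product with the same determinants. Second, note that the product law is homogeneous in each factor, which handles the unnormalized $D^{-1}AD^{-1}$.

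\textbf{The $\EACF$ part.} Your outline is the paper's, but the two steps that carry it are only asserted. Set $M_\eta:=\eta\operatorname{diag}(1,-1,1)+\vec a\,\vec b^{T}$ and $\vec c=\operatorname{diag}(1,-1,1)\vec a$.
\begin{itemize}
\item \emph{Singular values.} The matrix $\operatorname{diag}(1,-1,1)M_\eta=\eta I+\vec c\,\vec b^{T}$ has singular value $\eta$ on any unit vector orthogonal to $\vec b$ and $\vec c$. The other two are fixed by $s_+^2+s_-^2=2\eta^2+2\eta m+h^2$ and $s_+s_-=\eta|\eta+m|$, using $\det M_\eta=-\eta^2(\eta+m)$.
\item \emph{Branch check.} When $\eta+m<0$ the determinant is positive and $s_++s_-=h$. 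Then $f(\rho_\eta)\le\frac14\bigl(1+\frac{\eta+h}{1+\eta}\bigr)\le\frac12$, so no crossing occurs there. When $\eta+m\ge0$, the condition $f(\rho_\eta)\le\frac12$ is exactly $h^2+4\eta(\eta+m)\le1$. The root satisfies $\eta_F+m=\frac12\bigl(\sqrt{1-h^2+m^2}+m\bigr)\ge0$, so the feasible set is exactly $[0,\eta_F]$.
\end{itemize}
Without these, solving $f=\frac12$ with the $\det T<0$ formula does not by itself show that you have found the first crossing.
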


\begin{theorem}[$X$-state absorption capacity]
\label{thm:xstate}
For a separable $X$ state $\sigma_X$, the entanglement absorption capacity in the
$|\PhiP\rangle$ frame is
\begin{equation*}
\EAC(\sigma_X;\Phi^+)=-2\operatorname{Re}u+2\sqrt{bc-(\operatorname{Im}u)^2},
\end{equation*}
The other
three Bell frames have the parallel closed forms
\eqref{eq:EAC-X-PhiM}--\eqref{eq:EAC-X-PsiM}, and the Bell-diagonal and
product laws with diagonal marginals follow as special cases. If
$|\psi\rangle=\alpha|00\rangle+\beta|11\rangle$ and the noise state is
$X$-shaped in this Schmidt basis, then
$\EAC(\sigma_X;\psi)=\EAC(\sigma_X;\Phi^+)/C_\psi$.
\textnormal{(Derived in Section~\ref{sec:eval2}.)}
\end{theorem}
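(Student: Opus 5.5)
Since $X$ form is preserved by partial transposition, I will compute $\sigma_X^{T_B}+\eta Q$ explicitly and read off its positivity block by block. Partial transposition on $B$ exchanges the antidiagonal coherences, so $\sigma_X^{T_B}$ is again $X$-shaped: its $\Phi$ block is $\begin{pmatrix}a&v\\ v^*&d\end{pmatrix}$ and its $\Psi$ block is $\begin{pmatrix}b&u\\ u^*&c\end{pmatrix}$, up to conjugation conventions that I will fix by direct index bookkeeping. Since $Q=\Swap/2$, it acts as $\tfrac12 I$ on the $\Phi$ block and as $\tfrac12\begin{pmatrix}0&1\\1&0\end{pmatrix}$ on the $\Psi$ block (in the ordered basis $|01\rangle,|10\rangle$). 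Thus $\sigma_X^{T_B}+\eta Q\succeq0$ splits into two $2\times2$ conditions.

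On the $\Phi$ block the matrix becomes $\begin{pmatrix}a+\eta/2&v\\ v^*&d+\eta/2\end{pmatrix}$. This is positive for every $\eta\ge0$, because $ad\ge|v|^2$ holds by separability and increasing the diagonal preserves positivity. So this block imposes no constraint.

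On the $\Psi$ block the matrix is $\begin{pmatrix}b&u+\eta/2\\ u^*+\eta/2&c\end{pmatrix}$. The diagonal entries are nonnegative, so positivity reduces to $|u+\eta/2|^2\le bc$, that is, $(\operatorname{Re}u+\eta/2)^2+(\operatorname{Im}u)^2\le bc$. Separability gives $bc\ge|u|^2\ge(\operatorname{Im}u)^2$, so the square root is real. The feasible $\eta$ then form the interval
\[
-\operatorname{Re}u-\sqrt{bc-(\operatorname{Im}u)^2}\;\le\;\eta/2\;\le\;-\operatorname{Re}u+\sqrt{bc-(\operatorname{Im}u)^2}.
\]
This interval contains $\eta=0$, since $|u|^2\le bc$. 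Taking the right endpoint, as in the proof of Theorem~\ref{thm:mobius}, gives $\EAC(\sigma_X;\Phi^+)=-2\operatorname{Re}u+2\sqrt{bc-(\operatorname{Im}u)^2}$.

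The main point requiring care is the sign and conjugation bookkeeping: whether the $\Psi$-block off-diagonal is $u$ or $u^*$, and that $Q$ contributes $+\eta/2$ there. Only $\operatorname{Re}u$ and $(\operatorname{Im}u)^2$ enter, so conjugation does not matter, but the sign of $\operatorname{Re}u$ does. I will check it against the Werner case $u=0$, $b=c=1/4$, which should give $\EAC=1/2$ and hence $\lambda_*=1/3$.

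The other frames follow by applying a local unitary that maps $\Phi^+$ to $\Phi^-,\Psi^\pm$. The unitaries $I\otimes Z$, $I\otimes X$ and $I\otimes XZ$ preserve $X$ shape while permuting and phasing $(a,b,c,d,u,v)$, which yields \eqref{eq:EAC-X-PhiM}--\eqref{eq:EAC-X-PsiM}. The Bell-diagonal and diagonal-marginal product cases then come from specializing the entries. For example, a product of diagonal marginals has $u=0$ and $bc=p(1-p)q(1-q)$, which gives $2\sqrt{bc}$; this matches Theorem~\ref{thm:product}, since $1-\Tr A^2=2p(1-p)$.

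For the pure reference $\psi=\alpha|00\rangle+\beta|11\rangle$, I will take $\alpha,\beta$ real by absorbing phases. Then $(\psi\psi^*)^{T_B}$ is diagonal $\alpha^2,\beta^2$ on $|00\rangle,|11\rangle$ and has off-diagonal $\alpha\beta$ between $|01\rangle$ and $|10\rangle$. The $\Phi$ block remains unconstrained. In the $\Psi$ block the condition becomes $|u+\eta\alpha\beta|^2\le bc$, so the right endpoint scales by $1/(2\alpha\beta)=1/C_\psi$, as claimed.
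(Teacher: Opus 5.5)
Your proposal is correct and follows essentially the paper's own derivation in Section~\ref{sec:eval2}. You partial-transpose, split $\sigma_X^{T_B}+\eta Q$ (or $+\eta Q_\psi$) into the $\Phi$ and $\Psi$ blocks, note that the $\Phi$ block stays positive because $ad\ge|v|^2$, solve $|u+\eta/2|^2\le bc$ (resp.\ $|u+\eta\alpha\beta|^2\le bc$) for the right endpoint, and obtain the other frames by one-qubit Pauli conjugations, exactly as the paper does. One small caution: making $\alpha,\beta$ real is not a free normalization, because a relative phase in $\alpha\bar\beta$ rotates $u$ inside the $\Psi$-block condition; it is justified only by the Schmidt convention $\alpha,\beta\ge0$ of \eqref{eq:psi-schmidt}, under which the statement is posed.
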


\begin{theorem}[Teleportation threshold and gap]
\label{thm:xstate-tele}
On the separable $X$ sector the fidelity absorption capacity in the
$|\PhiP\rangle$ frame is
\begin{equation*}
\EACF(\sigma_X;\Phi^+)=-2\operatorname{Re}u+\sqrt{(b+c)^2-4(\operatorname{Im}u)^2},
\end{equation*}
The other three Bell frames have the parallel closed forms
\eqref{eq:EACF-X-PhiM}--\eqref{eq:EACF-X-PsiM}. If
$|\psi\rangle=\alpha|00\rangle+\beta|11\rangle$ and the noise state is
$X$-shaped in this Schmidt basis, then
$\EACF(\sigma_X;\psi)=\EACF(\sigma_X;\Phi^+)/C_\psi$. The
inequality $\lambda_*\le\lambda_F$ holds for every separable two-qubit
noise state. On the $X$ sector, the equality $\lambda_*=\lambda_F$ holds
if and only if the two middle populations coincide, $b=c$, and the gap is
given explicitly in Corollary~\ref{cor:sandwich-gap}.
\textnormal{(Derived in Section~\ref{sec:eval2}.)}
\end{theorem}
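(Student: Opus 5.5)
The plan is to reduce everything to an explicit formula for the fully entangled fraction on the $X$ sector. Then $\EACF$ follows by solving one scalar inequality along the Bell-mixing line, and the comparison with Theorem~\ref{thm:xstate} gives the gap statement.

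\textbf{Step 1 (fully entangled fraction of an $X$ state).} I would first establish that for any $X$ state of the form \eqref{eq:summary-Xstate},
\begin{equation*}
f(\sigma_X)=\max\Bigl\{\tfrac{a+d}{2}+|u|,\ \tfrac{b+c}{2}+|v|\Bigr\}.
\end{equation*}
The route I would try first is the Horodecki formula for $f$ in terms of the correlation matrix $T$. A local diagonal unitary $\operatorname{diag}(1,e^{i\theta})\otimes\operatorname{diag}(1,e^{i\phi})$ rotates the phases of $u$ and $v$ independently. It therefore brings $T$ to diagonal form with entries built from $\operatorname{Re}$ of the rotated $u\pm v$ and from $a+d-b-c$. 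Maximizing over sign patterns of $\det T$ then reproduces the two block expressions. Alternatively, one can check directly that the maximal overlap over all maximally entangled states is attained on vectors supported in the $\Phi$ block or the $\Psi$ block. I expect this to be the main obstacle, since it requires care with the determinant sign and the phases; the rest is algebra.

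\textbf{Step 2 (solving for $\EACF$).} Along the line, $\sigma_X+\eta\PhiP$ is again $X$-shaped with $a,d\mapsto a+\eta/2,\,d+\eta/2$ and $u\mapsto u+\eta/2$, with $b,c,v$ unchanged, and the whole is divided by $1+\eta$.

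For the $\Psi$ block, the condition $\tfrac{b+c}{2}+|v|\le\tfrac{1+\eta}{2}$ holds at $\eta=0$ because $\sigma_X$ is separable. It only becomes easier as $\eta$ grows, so it never binds.

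For the $\Phi$ block, the condition $\tfrac{a+d+\eta}{2}+|u+\eta/2|\le\tfrac{1+\eta}{2}$ reduces, using $a+b+c+d=1$, to $|u+\eta/2|\le\tfrac{b+c}{2}$. Squaring gives $(\operatorname{Re}u+\eta/2)^2+(\operatorname{Im}u)^2\le(b+c)^2/4$. The largest admissible $\eta$ is $-2\operatorname{Re}u+\sqrt{(b+c)^2-4(\operatorname{Im}u)^2}$.

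This expression is real and nonnegative: the PPT constraint gives $|u|^2\le bc\le(b+c)^2/4$, so $|u|\le(b+c)/2$, which also makes $\eta=0$ feasible. The other three frames follow by applying the local Paulis that permute the Bell states. These permute $(a,d,u)$ and $(b,c,v)$, possibly changing signs of $u$ and $v$.

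\textbf{Step 3 (pure reference).} For $|\psi\rangle=\alpha|00\rangle+\beta|11\rangle$, adding $\eta\,|\psi\rangle\!\langle\psi|$ adds $\eta(|\alpha|^2+|\beta|^2)/2\cdot 2=\eta$ to $a+d$ in total and $\eta\alpha\beta^*$ to $u$. The $\Phi$-block condition becomes $|u+\eta\alpha\beta^*|\le(b+c)/2$. After absorbing the phase of $\alpha\beta^*$ into $u$ by a local diagonal unitary (the convention that makes $\operatorname{Re}u$ meaningful), this is $|u+\eta C_\psi/2|\le(b+c)/2$. Hence $\EACF(\sigma_X;\psi)=\EACF(\sigma_X;\Phi^+)/C_\psi$. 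The $\Psi$-block condition again does not bind.

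\textbf{Step 4 (inequality and equality case).} For general separable noise, any state with $f>1/2$ is entangled, since separable states satisfy $f\le1/2$. So on the line the PPT interval $[0,\EAC]$ is contained in the fidelity interval $[0,\EACF]$. This gives $\EAC\le\EACF$, and by monotonicity of the M\"obius map in Theorem~\ref{thm:mobius}, $\lambda_*\le\lambda_F$.

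On the $X$ sector, comparing with Theorem~\ref{thm:xstate}:
\begin{equation*}
\EACF-\EAC=\sqrt{(b+c)^2-4(\operatorname{Im}u)^2}-2\sqrt{bc-(\operatorname{Im}u)^2}.
\end{equation*}
The radicands differ by $(b-c)^2\ge0$, so this vanishes iff $b=c$. Since the M\"obius map is strictly increasing, $\lambda_*=\lambda_F$ iff $b=c$. The explicit gap is then $\EACF/(1+\EACF)-\EAC/(1+\EAC)$.
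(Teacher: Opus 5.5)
Your proposal is correct and follows the paper's argument step for step: the block formula for $f$ on $X$ states, the non-binding $\Psi$ branch, the reduction of the $\Phi$ branch to $|u+\eta/2|\le(b+c)/2$ (with $\eta/2\mapsto C_\psi\eta/2$ for an aligned Schmidt reference, where $\alpha,\beta\ge0$ already, so no phase needs to be absorbed into $u$), the universal bound $f\le\tfrac12$ on separable states giving $\lambda_*\le\lambda_F$, and the radicand difference $(b-c)^2$ for the equality case. The only variation is in Step~1: your main route diagonalizes the correlation tensor by a local phase rotation and reads $f$ off the Horodecki formula as the largest of the four Bell fidelities, whereas the paper parametrizes every maximally entangled vector directly and maximizes the overlap as a convex combination of the two optimized block terms; both give the same formula.
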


Both the product and $X$ families are preserved by the local relaxation
and dephasing channels considered below, so their threshold trajectories
are obtained by substituting the channel-evolved entries into the
closed-form formulas for $\EAC$ and $\EACF$. Sections~\ref{sec:eval1}
and~\ref{sec:eval2} derive the product and $X$-state laws,
Section~\ref{sec:trajectories} applies them to decoherence trajectories,
and Section~\ref{sec:outlook} collects the open directions. The
product-family $\EAC$ trajectories are derived in
Appendix~\ref{sec:supp-channel-calculus}.

\begin{table*}[!t]
\centering
\caption{Fixed-reference absorption-capacity framework. The standard PPT
test and fully entangled fraction remain the underlying criteria. The
framework expresses their crossings along a fixed reference-mixing line as
capacities of the separable endpoint.}
\label{tab:prior-positioning}
\renewcommand{\arraystretch}{1.6}
\setlength{\tabcolsep}{3pt}
\hyphenpenalty=10000\exhyphenpenalty=10000
\begin{tabular}{p{0.26\textwidth}|p{0.32\textwidth}|p{0.37\textwidth}}
\hline\hline
\RaggedRight{}Setting
& \RaggedRight{}Existing viewpoint
& \RaggedRight{}This paper's contribution
\tabularnewline
\hline
\RaggedRight{}Werner / maximally mixed noise
& \RaggedRight{}A single isotropic special case,
$\lambda_*=1/3$.
& \RaggedRight{}Recovered as the maximally mixed point of the product
law, where $\EAC=1/2$ and $\lambda_*=1/3$.
\tabularnewline
\hline
\RaggedRight{}Bell-diagonal noise
& \RaggedRight{}Linear PPT and teleportation thresholds in the Bell
weights.
& \RaggedRight{}Recovered as a Bell-frame specialization of the
complex separable $X$ formula.
\tabularnewline
\hline
\RaggedRight{}Product noise $A\otimes B$
& \RaggedRight{}The Werner case gives the maximally mixed product
endpoint, but it does not determine the threshold for general product
endpoints.
& \RaggedRight{}Exact arbitrary-product formula
$\EAC=\sqrt{(1-\Tr A^2)(1-\Tr B^2)}$, giving
$\lambda_*=\EAC/(1+\EAC)$ from two local purities. The fidelity
absorption capacity $\EACF$ additionally depends on the product state's
orientation relative to the reference, and this determines $\lambda_F$.
\tabularnewline
\hline
\RaggedRight{}Pure-state reference
& \RaggedRight{}Requires recomputing where the partially transposed
mixture ceases to be positive.
& \RaggedRight{}For product noise, $\EAC$ rescales by $1/C_\psi$ for
every pure reference. For product-noise $\EACF$, we derive the corresponding
law for arbitrary pure references. For $X$-state noise, the rescaling is used
when the noise
state is $X$-shaped in the Schmidt basis of the reference.
\tabularnewline
\hline
\RaggedRight{}Separable $X$ noise under relaxation and dephasing
& \RaggedRight{}Without the closed forms, the PPT or teleportation
condition must be checked separately at each channel parameter.
& \RaggedRight{}Threshold trajectories are obtained by substituting the
channel-evolved $X$ entries into the closed forms for $\EAC$ and
$\EACF$.
\tabularnewline
\hline
\RaggedRight{}Teleportation threshold and entangled-but-unuseful interval
& \RaggedRight{}The fully entangled fraction is usually checked
separately. The interval where the state is entangled but not directly
useful for teleportation is not expressed directly in terms of the noise
state.
& \RaggedRight{}$\EACF$ gives $\lambda_F=\EACF/(1+\EACF)$. Together
with $\EAC$, it identifies the interval
$\lambda_*(\sigma)<\lambda\le\lambda_F(\sigma)$ where the Bell-mixed
state is entangled but still below the teleportation threshold.
\tabularnewline
\hline\hline
\end{tabular}
\end{table*}

\subsection{Using the framework}
\label{subsec:using-framework}
The theorems above are used from the noise state $\sigma$. Once the
capacity of $\sigma$ is known, the Bell weight at which
$\rho_\lambda$ becomes entangled, and the Bell weight at which it becomes
useful for teleportation, follow from the same M\"obius map
\eqref{eq:mobius}. This avoids solving a positivity or teleportation test
at every value of $\lambda$. The procedure is the same for a calibrated
noise model, for the closed-form product and $X$ families, and for a noise
state tracked along a channel: fix the reference, evaluate the capacity of
$\sigma$, and apply \eqref{eq:mobius}.

The \emph{reference} is the pure state whose projector is mixed into the
noise state. Unless stated otherwise, it is $|\PhiP\rangle$. Other Bell
references are handled by replacing the Bell frame. Pure references are
also allowed, but the dependence on the reference differs for the two
capacities. For product noise, $\EAC$ has the $1/C_\psi$ rescaling of
Eq.~\eqref{eq:EAC-pure-signal}. For product-noise $\EACF$, Bell
references use Eq.~\eqref{eq:EACF-product}, while arbitrary pure
references are evaluated in Appendix~\ref{subsec:supp-product-tele}. The
simple $1/C_\psi$ rescaling for product-noise $\EACF$ appears only in the
Schmidt-$z$ aligned case of Eq.~\eqref{eq:EACF-product-pure-aligned}. For
$X$-state noise, the pure-reference rescaling is used only when the noise
state is $X$-shaped in the Schmidt basis of the reference.

The input data depend on the noise family:
\begin{itemize}
\item \emph{Product noise.} The entanglement threshold needs only the two
local purities, through Eq.~\eqref{eq:impurity}. The teleportation
threshold also needs the Bell-frame alignment $m_\beta$ of
Eq.~\eqref{eq:EACF-product}.
\item \emph{Separable $X$ noise.} Read off the entries $(a,b,c,d,u,v)$ in
the convention of \eqref{eq:summary-Xstate}. Then use
Eqs.~\eqref{eq:EAC-X-PhiP}--\eqref{eq:EAC-X-PsiM} for $\EAC$ and
Eqs.~\eqref{eq:EAC-F-X}--\eqref{eq:EACF-X-PsiM} for $\EACF$.
\item \emph{General separable noise.} Use the definitions
\eqref{eq:EAC-def} and \eqref{eq:EACF-def}, or the variational
evaluations in Appendix~\ref{subsec:evaluating-capacities}: a
Rayleigh-quotient formula for $\EAC$ and a maximally entangled witness
formula for $\EACF$.
\end{itemize}

Decompositions are also useful when $\sigma$ itself is not in a
closed-form family. Suppose $\sigma=\sum_i p_i\sigma_i$, and suppose each
component has a feasible absorption amount $\eta_i$. Then the weighted
amount $\sum_i p_i\eta_i$ is feasible for $\sigma$:
$$
\sigma^{T_B}+\Bigl(\sum_i p_i\eta_i\Bigr)Q
=\sum_i p_i\bigl(\sigma_i^{T_B}+\eta_iQ\bigr)\succeq0.
$$
This identity is a certificate: it proves a lower bound on $\EAC(\sigma)$
even when the exact capacity is unknown. The same decomposition argument
applies to $\EACF$, since each maximally entangled witness condition is
affine in $(\sigma,\eta)$. Product or $X$ decompositions can therefore
certify lower bounds for both thresholds. Section~\ref{subsec:decomposition-certificates}
gives the corresponding concavity statements.

The two thresholds split the Bell-mixing line into three regimes:
$\rho_\lambda$ is separable for $\lambda\le\lambda_*$, entangled but below
the classical teleportation bound for $\lambda_*<\lambda\le\lambda_F$, and
teleportation-useful for $\lambda>\lambda_F$.

The product formulas make the workflow explicit on replacement channels.
Section~\ref{subsec:replacement-examples} gives a minimal demonstration
for a one-sided replacement channel, including the finite-step loss of
teleportation usefulness and separability for an initial pure entangled
state.

\section{Product law: entanglement and teleportation}
\label{sec:eval1}
On product noise
$\sigma=A\otimes B$, the joint two-qubit Bell-mixing entanglement
threshold $\lambda_*(\sigma)$ is fixed by the two single-qubit
marginal purities $\Tr A^2$ and $\Tr B^2$ alone. The required inputs are
only local marginal purities, obtainable from single-qubit marginal
tomography. No semidefinite program needs to be solved. The teleportation
threshold $\lambda_F$ is given in closed form on the same family. Unlike
$\lambda_*$ it also records the product state's orientation relative to the
reference.

\subsection{Entanglement absorption capacity}
\begin{lemma}[Bell-block inequality]
\label{lem:bell-block}
For any pair of single-qubit density matrices $A,B$,
\begin{equation}
A\otimes B^T+2\sqrt{\det A\,\det B}\,Q\succeq 0,
\label{eq:bellblock}
\end{equation}
with the sharp constant $2\sqrt{\det A\,\det B}$, the largest for which the
left-hand side stays positive semidefinite.
\end{lemma}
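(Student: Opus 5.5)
The plan is to reduce the inequality to an eigenvalue computation for a congruence-transformed swap. Because $Q=\Swap/2$, the claim reads
\[
A\otimes B^T+t\,\Swap\succeq0\quad\text{for } t\le\sqrt{\det A\,\det B},
\]
with failure for every larger $t$. Write $B'=B^T$; it is again a density matrix and $\det B'=\det B$. I first treat the generic case where $A$ and $B'$ are both full rank. Let $C=A^{1/2}$ and $D=B'^{1/2}$ be the positive square roots.

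Conjugating by the invertible operator $C^{-1}\otimes D^{-1}$ preserves positive semidefiniteness and its failure. It turns the operator into
\[
I\otimes I+t\,(C^{-1}\otimes D^{-1})\,\Swap\,(C^{-1}\otimes D^{-1}).
\]
Using $\Swap(X\otimes Y)=(Y\otimes X)\Swap$, the second term becomes $t\,\Swap(K^\dagger\otimes K)$ with $K=C^{-1}D^{-1}$. This operator is Hermitian, since it is a congruence of $\Swap$. The condition is therefore equivalent to $t\,\mu_{\min}\ge-1$, where $\mu_{\min}$ is the smallest eigenvalue of $M=\Swap(K^\dagger\otimes K)$.

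To find the spectrum of $M$, I take the singular value decomposition $K=U\Sigma V^\dagger$ with $\Sigma=\mathrm{diag}(s_1,s_2)$. Then
\[
M=(V\otimes U)\,\Swap\,(\Sigma\otimes\Sigma)\,(V\otimes U)^\dagger,
\]
which I will check using $\Swap(V\otimes U)=(U\otimes V)\Swap$ together with a relabelling of the unitaries. The matrix $\Swap(\Sigma\otimes\Sigma)$ acts diagonally on $|00\rangle$ and $|11\rangle$ with eigenvalues $s_1^2$ and $s_2^2$. On $\mathrm{span}\{|01\rangle,|10\rangle\}$ it acts as $s_1s_2\,\sigma_x$, with eigenvalues $\pm s_1s_2$. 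Hence
\[
\mu_{\min}=-s_1s_2=-|\det K|=-1/\sqrt{\det A\,\det B}.
\]
This gives positivity exactly for $t\le\sqrt{\det A\det B}$, i.e.\ coefficient $2\sqrt{\det A\det B}$ on $Q$, and sharpness in the full-rank case. The step I expect to need the most care is this unitary bookkeeping in the SVD conjugation, namely the ordering of $U$ and $V$ under the swap. As a fallback, $M^2=KK^\dagger\otimes K^\dagger K$ already fixes the moduli of the eigenvalues, and $\Tr M=\Tr(K^\dagger K)=s_1^2+s_2^2$ then fixes the signs.

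In the degenerate case, say $A=|a\rangle\!\langle a|$ is pure, the constant is $0$. Positivity is then just $A\otimes B^T\succeq0$, and it remains to show that no $t>0$ works. I will use the test vector $v=|a^\perp a\rangle+\varepsilon|a\,a^\perp\rangle$. Its first component lies in the kernel of $A\otimes B'$, and $\langle a^\perp a|\Swap|a\,a^\perp\rangle=1$. This gives
\[
\langle v|(A\otimes B'+t\Swap)|v\rangle=\varepsilon^2\langle a^\perp|B'|a^\perp\rangle+2t\varepsilon,
\]
and the diagonal $\Swap$ terms vanish. Choosing a small $\varepsilon<0$ makes this negative. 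Alternatively, sharpness follows by continuity from the full-rank formula. The case where only $B$ is rank-deficient is symmetric.
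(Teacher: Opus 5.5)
Your proof is correct, but in the full-rank case it takes a different route from the paper. The paper first uses the $U\otimes U$ invariance of $\Swap$ to make $A$ diagonal. It then reorders to the basis $\{|00\rangle,|11\rangle,|01\rangle,|10\rangle\}$ and takes a Schur complement over the diagonal $\Phi$-block. Finally it checks by direct expansion that the Schur-determinant numerator factors as $N(t)=(t_*^2-t^2)\,[t_*^2+t^2+(ap+dr)t]$. Positivity at $t_*$ and failure for $t>t_*$ are then read off from the sign of the first factor.

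You instead whiten $A\otimes B^T$ to the identity by the congruence $C^{-1}\otimes D^{-1}$. The question then becomes the spectrum of the single Hermitian operator $\Swap(K^\dagger\otimes K)$, which the SVD gives as $\{s_1^2,s_2^2,s_1s_2,-s_1s_2\}$. This buys several things:
\begin{itemize}
\item Positivity and sharpness come out in one step, with no diagonalization of $A$ and no polynomial factorization.
\item It explains why only the determinants enter: the critical coefficient of $\Swap$ is $1/|\det K|=\sqrt{\det A\,\det B}$.
\item It gives the filtered lemma for pure references with no extra work. Since $Q_\psi=(I\otimes R)\Swap(I\otimes R)$, $K$ is simply replaced by $C^{-1}RD^{-1}$. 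Its determinant picks up the factor $\alpha\beta=C_\psi/2$, which yields the coefficient $2\sqrt{\det A\,\det B}/C_\psi$.
\end{itemize}
The paper's computation is more pedestrian but fully explicit, each step being a finite identity checkable by expansion.

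In the rank-deficient case the two arguments coincide. Your test vector lives in the span of $|a^\perp a\rangle$ and $|a\,a^\perp\rangle$. That is exactly the $\{|01\rangle,|10\rangle\}$ principal minor, of determinant $-t^2$, that the paper uses.

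Two small remarks. First, the bookkeeping you flagged resolves as $\Swap(K^\dagger\otimes K)=(U\otimes V)\,\Swap(\Sigma\otimes\Sigma)\,(U\otimes V)^\dagger$. With $V\otimes U$ you get $\Swap(K\otimes K^\dagger)$ instead, a different but isospectral operator, so the conclusion is unaffected.

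Second, the alternative you mention for the degenerate case, sharpness by continuity from the full-rank formula, does not work as stated. Approximating a pure $A$ by full-rank $A_n$ only shows $\lambda_{\min}(A\otimes B^T+t\Swap)\le 0$, because a limit of non-positive-semidefinite matrices can be positive semidefinite. Keep the test-vector argument, which is complete.
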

The standalone inequality is equally valid with $B$ in place of $B^T$, since
transposition preserves positivity and determinant. We keep $B^T$ because
this is the form produced by the partial transpose,
$(A\otimes B)^{T_B}=A\otimes B^T$, in the PPT test. Reading off the
supremum in \eqref{eq:EAC-def},
\begin{equation}
\EAC(A\otimes B)=2\sqrt{\det A\,\det B}.
\label{eq:EAC-product}
\end{equation}
Using
$2\det X=1-\Tr X^2$ for any single-qubit density matrix gives the
impurity form
\begin{equation}
\EAC(A\otimes B)=\sqrt{(1-\Tr A^2)(1-\Tr B^2)},
\label{eq:impurity}
\end{equation}
read off from the two local marginals alone. In Bloch coordinates
$A=(I+\vec a\!\cdot\!\boldsymbol\sigma)/2$,
$B=(I+\vec b\!\cdot\!\boldsymbol\sigma)/2$ this reads
\begin{equation}
\EAC(A\otimes B)
=\frac{\sqrt{(1-|\vec a|^2)(1-|\vec b|^2)}}{2}.
\label{eq:bloch-form}
\end{equation}
The product entanglement absorption capacity therefore depends only on the local
Bloch radii: it vanishes when either marginal is pure and is maximal
only when both marginals are maximally mixed. Although
Eq.~\eqref{eq:impurity} is derived in the $\PhiP$ frame, it holds
verbatim for every maximally entangled reference: any such reference is
$(W\otimes I)|\PhiP\rangle$ for a single-qubit unitary $W$, and
conjugating the mixing line by $W^\dagger\otimes I$ sends $A\otimes B$ to
$(W^\dagger A W)\otimes B$, a product noise with the same local
determinants and partial-transpose spectrum, so $\EAC$ is
reference-independent on the product family. The four Bell frames are the
special case $W\in\{I,\sigma_x,\sigma_y,\sigma_z\}$.

\emph{Robustness reading.} Let $R(\PhiP\,\|\,\sigma)$ denote the
Vidal--Tarrach relative robustness: the least $t\ge0$ for which
$(\PhiP+t\sigma)/(1+t)$ is separable~\cite{VidalTarrach1999}. The
robustness parameter $t$ and the absorption parameter $\eta$ are
reciprocal on the same mixing line, since
$(\PhiP+t\sigma)/(1+t)=(\sigma+\eta\PhiP)/(1+\eta)$ when
$\eta=1/t$. The same line gives $\EAC(\sigma)=1/R(\PhiP\,\|\,\sigma)$, with
$\EAC=0$ corresponding to $R=\infty$. For product noise,
Eq.~\eqref{eq:impurity} gives the closed form
\begin{equation}
R(\PhiP\,\|\,A\otimes B)
=\frac{1}{2\sqrt{\det A\,\det B}}
=\frac{1}{\sqrt{(1-\Tr A^2)(1-\Tr B^2)}}.
\label{eq:product-robustness}
\end{equation}
At the maximally mixed product endpoint $A=B=I_2/2$, the formula gives
$R=2$. Away from that symmetric point, it resolves the full dependence on
the two local marginals.

Substituting \eqref{eq:impurity} into the threshold identity
\eqref{eq:mobius} gives the corresponding Bell-mixing threshold,
\begin{equation}
\lambda_*(A\otimes B)
=
\frac{\sqrt{(1-\Tr A^2)(1-\Tr B^2)}}
{1+\sqrt{(1-\Tr A^2)(1-\Tr B^2)}}
\,\le\,\tfrac13,
\label{eq:product-threshold}
\end{equation}
with equality if and only if $A=B=I_2/2$. Werner's classical
$\lambda_*=1/3$ is the unique maximum on the product family, attained at
maximally mixed marginals. For every other product endpoint,
$\lambda_*(A\otimes B)<1/3$, so the Bell mixture becomes entangled at a
smaller Bell weight.
Figure~\ref{fig:product-threshold} shows both
forms: the map from the capacity interval
$\EAC\in[0,1/2]$ to the threshold interval $\lambda_*\in[0,1/3]$, and
the same surface in local Bloch coordinates.

The maximum at maximally mixed marginals is specific to the product
family, not the whole separable set. Concavity of $\EAC$ gives
lower-bound certificates from decompositions of $\sigma$, but it does not
make the product-surface maximum global. To see the difference, suppose a
Bell-mixing point is separable and write
\begin{equation*}
\rho_\lambda=\sum_k p_k\,
|a_k\rangle\!\langle a_k|\otimes |b_k\rangle\!\langle b_k| .
\end{equation*}
Then
\begin{equation*}
\langle\PhiP|\rho_\lambda|\PhiP\rangle
=\sum_k p_k\,|\langle\PhiP|a_k\otimes b_k\rangle|^2
\le \sum_k p_k\,\frac12
=\frac12 .
\end{equation*}
For the Bell-mixing line,
$\langle\PhiP|\rho_\lambda|\PhiP\rangle=\lambda+(1-\lambda)\langle\PhiP|\sigma|\PhiP\rangle$,
this gives
\begin{equation}
\lambda\le
\frac{\tfrac12-\langle\PhiP|\sigma|\PhiP\rangle}
{1-\langle\PhiP|\sigma|\PhiP\rangle}
\le \tfrac12,
\qquad
\EAC(\sigma)\le 1-2\langle\PhiP|\sigma|\PhiP\rangle\le1 .
\label{eq:global-sep-EAC-bound}
\end{equation}
The capacity bound follows from the first inequality by the inverse
M\"obius map $\EAC=\lambda_*/(1-\lambda_*)$. The bound $\EAC\le1$ is
saturated, but not at the maximally mixed state, which only reaches
$\tfrac12$. For example,
$\sigma=\tfrac12(|01\rangle\!\langle01|+|10\rangle\!\langle10|)$
has $\EAC(\sigma;\PhiP)=1$ and $\lambda_*=1/2$. This shows that
$I_4/4$ is the unique maximizer only within the product-noise family,
where the independent local-noise constraint enforces the sharper
cap $\EAC\le1/2$.

\begin{figure*}[!t]
\centering
\includegraphics[width=0.92\textwidth]{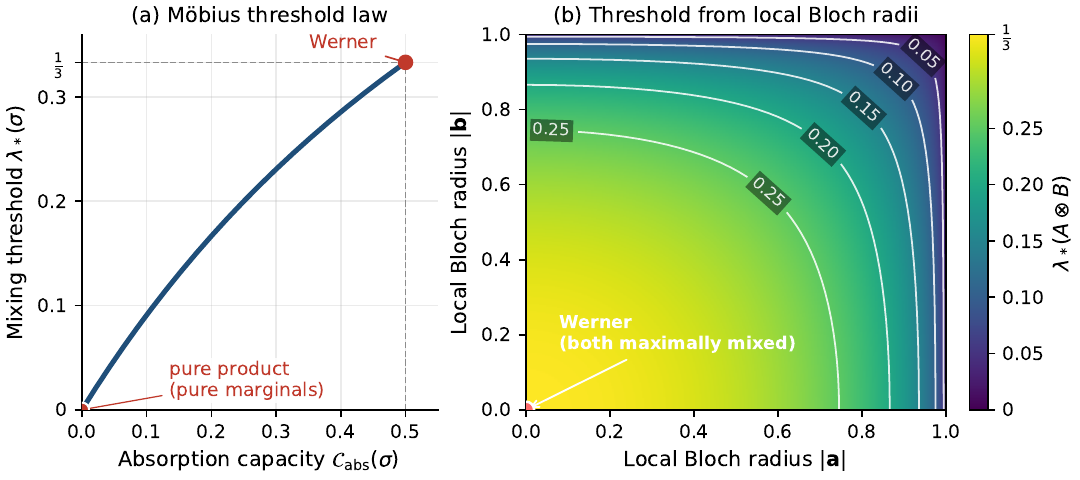}
\caption{Two complementary views of the product-law threshold (absorption
capacity $\EAC$ defined in Section~\ref{sec:eac}, product-noise closed form
in Eq.~\eqref{eq:impurity}).  (a) The
map $\lambda_*=\EAC/(1+\EAC)$ sends the product-noise capacity
interval $\EAC\in[0,1/2]$ to the threshold interval
$\lambda_*\in[0,1/3]$, with the
Werner point $\lambda_*=1/3$ reached only at $\EAC=1/2$.  (b) In Bloch
coordinates the same threshold depends only on the local radii
$|\vec a|$ and $|\vec b|$ of the two marginals.  The threshold
vanishes when either marginal becomes pure and is maximal only at the
double maximally mixed point $(|\vec a|,|\vec b|)=(0,0)$.}
\label{fig:product-threshold}
\end{figure*}

\textit{Proof sketch of Lemma~\ref{lem:bell-block}.}
If $\det A=0$ or $\det B=0$, the result is immediate. Otherwise:
the swap operator
$\Swap=2Q=\sum_{i,j=0}^{1}|ij\rangle\!\langle ji|$
on $\mathbb C^2\!\otimes\!\mathbb C^2$ commutes with $U\otimes U$ for
every unitary $U$, so
choosing $U$ with $UAU^\dagger=\operatorname{diag}(a,d)$
reduces the problem to $A=\operatorname{diag}(a,d)$ with $B^T$
replaced by $U B^T U^\dagger$, a positive Hermitian matrix with the
same determinant. With $t_*:=\sqrt{ad\det B}=\sqrt{\det A\det B}$
and the transformed factor $UB^TU^\dagger$ written as
$
\begin{psmallmatrix}
p&\overline q\\ q&r
\end{psmallmatrix}
$,
the matrix $M:=A\otimes B^T+2t_*Q$, expressed in the permuted basis
$\{|00\rangle,|11\rangle,|01\rangle,|10\rangle\}$, takes a block form
whose upper-left block is diagonal positive.  A direct Schur-complement
computation (see Appendix~\ref{sec:supp-proof}) shows that the Schur
complement has positive trace and identically vanishing determinant
at $t_*$, hence is positive semidefinite.  Therefore $M\succeq0$.
The full pointwise sharpness argument is given in Appendix~\ref{sec:supp-proof}: an
explicit factorization of the Schur-determinant numerator shows that
the analogous matrix $M(t):=A\otimes B^T+2tQ$ fails $\succeq 0$ for
every $t>t_*$, so the constant in Eq.~\eqref{eq:bellblock} is
pointwise sharp in $(A,B)$.
The Werner point $A=B=I_2/2$ illustrates saturation: $I_4/4+\eta Q$
has minimum eigenvalue $1/4-\eta/2$, vanishing at
$\eta=1/2=2\sqrt{\det A\,\det B}$.

\paragraph*{Arbitrary reference state.}
An entanglement-distribution source may prepare a
pure-state reference $|\psi\rangle$ rather than an ideal
Bell reference state. The operational question is then how the entanglement
threshold changes when the reference is an arbitrary pure state. On product
noise the answer is again
compact: the dependence on the pure-state reference enters only through
the scalar $C_\psi$, the concurrence of $|\psi\rangle$~\cite{Wootters1998},
and the product-state law
\eqref{eq:EAC-product} generalizes from the maximally entangled Bell
reference state $|\PhiP\rangle$ to any two-qubit pure-state reference
$|\psi\rangle$. We denote the corresponding rank-one projector by
$\psi:=|\psi\rangle\!\langle\psi|$. Conjugating the entire mixing line
$\rho_\lambda=\lambda\psi+(1-\lambda)\sigma$
by a pair of local unitaries $U\otimes V$ leaves the entanglement
threshold (and hence $\EAC(\sigma;\psi)$) invariant, because local
unitaries preserve the partial-transpose spectrum. Choosing
$U\otimes V$ to send $|\psi\rangle$ to its Schmidt form simultaneously
sends a product noise state $\sigma=A\otimes B$ to
$(UAU^\dagger)\otimes(VBV^\dagger)$, with the same local determinants.
The product-noise capacity depends only on these determinants, so we may
assume Schmidt form
\begin{equation}
|\psi\rangle = \alpha\,|00\rangle + \beta\,|11\rangle,
\quad \alpha,\beta\ge 0,\quad \alpha^2+\beta^2=1,
\label{eq:psi-schmidt}
\end{equation}
without loss of generality on product noise. The concurrence of
$|\psi\rangle$ is then
$C_\psi = 2\alpha\beta\in[0,1]$~\cite{Wootters1998}.
In this form,
$|\psi\rangle = \sqrt2\,(I\otimes R)|\PhiP\rangle$ with
$R=\mathrm{diag}(\alpha,\beta)$.

\paragraph*{Filtered Bell-block inequality.}
Let $Q_\psi:=(|\psi\rangle\!\langle\psi|)^{T_B}$ denote the partial
transpose of the projector onto the pure-state reference. For this
reference, the absorption capacity is
$$
\EAC(\sigma;\psi):=
\sup\{\eta\ge0:\ \sigma^{T_B}+\eta Q_\psi\succeq0\}.
$$
The Bell-block inequality of Lemma~\ref{lem:bell-block} extends to
arbitrary pure-state references as follows.
\begin{lemma}[Filtered Bell-block inequality]
\label{lem:filtered-bell-block}
For any two-qubit pure-state reference $|\psi\rangle$ with concurrence
$C_\psi>0$ and any single-qubit density matrices $A,B$,
\begin{equation}
A\otimes B^T+\frac{2\sqrt{\det A\,\det B}}{C_\psi}\,Q_\psi
\succeq 0,
\label{eq:filtered-bellblock}
\end{equation}
with the sharp constant $2\sqrt{\det A\,\det B}/C_\psi$, the largest for
which the left-hand side stays positive semidefinite.
\end{lemma}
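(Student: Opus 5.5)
We reduce to Lemma~\ref{lem:bell-block} by a local filtering congruence. As in the preceding paragraph, local unitaries $U\otimes V$ preserve $\det A$, $\det B$, positivity, and $C_\psi$. We may therefore assume the Schmidt form \eqref{eq:psi-schmidt}, with $\alpha,\beta>0$ because $C_\psi>0$. On the $B$ side the unitary enters as $\overline V B^T V^T$, which is still a density matrix with determinant $\det B$.

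Write $|\psi\rangle=\sqrt2(I\otimes R)|\PhiP\rangle$ with $R=\operatorname{diag}(\alpha,\beta)$, which is real diagonal and invertible. Then
\[
\psi=2(I\otimes R)\PhiP(I\otimes R),
\]
and partial transposition on $B$ acts as $X\otimes Y\mapsto X\otimes Y^T$ with $R^T=R$. Checking on the basis $|ij\rangle\langle kl|$ gives
\[
Q_\psi=2(I\otimes R)\,Q\,(I\otimes R).
\]
Since $I\otimes R$ is invertible, congruence by $(I\otimes R^{-1})$ shows that
\[
A\otimes B^T+cQ_\psi\succeq0
\quad\Longleftrightarrow\quad
A\otimes R^{-1}B^TR^{-1}+2cQ\succeq0 .
\]
Set $\tilde B:=R^{-1}B^TR^{-1}$. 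It is positive semidefinite but not normalized, with
\[
\det\tilde B=\frac{\det B}{\alpha^2\beta^2}=\frac{4\det B}{C_\psi^2}.
\]

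Lemma~\ref{lem:bell-block} holds for any positive semidefinite $2\times2$ matrix in place of $B^T$, since both sides are homogeneous of degree one under $B\mapsto sB$. It gives the sharp constant $2\sqrt{\det A\det\tilde B}=4\sqrt{\det A\det B}/C_\psi$ for the coefficient $2c$. Hence the sharp value of $c$ is $2\sqrt{\det A\det B}/C_\psi$, which is both \eqref{eq:filtered-bellblock} and its sharpness, since congruence preserves the feasible set of $c$ exactly. The degenerate case $\det A\det B=0$ reduces to $A\otimes B^T\succeq0$, and sharpness there (no $c>0$ feasible) follows from the same congruence and the sharpness clause of Lemma~\ref{lem:bell-block}.

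The main obstacle is justifying the two inputs just used. First, the identity $Q_\psi=2(I\otimes R)Q(I\otimes R)$ needs a short check of how partial transpose commutes with a diagonal (symmetric) filter on $B$. Second, we need that Lemma~\ref{lem:bell-block}, including sharpness, extends from density matrices to arbitrary positive semidefinite $B$. I would handle the second point first by scaling: write $\tilde B=s\hat B$ with $\Tr\hat B=1$, apply the lemma to $\hat B$, and multiply through by $s$.
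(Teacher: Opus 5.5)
Your proposal is correct and follows the paper's own proof. The paper also writes $Q_\psi=2(I\otimes R)Q(I\otimes R)$ with $R=\operatorname{diag}(\alpha,\beta)$ and applies Lemma~\ref{lem:bell-block} to the filtered pair $(A,R^{-1}BR^{-1})$. It then conjugates back by $I\otimes R$, and gets sharpness from the reverse congruence by $I\otimes R^{-1}$.

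Your two differences are minor tidy-ups rather than a different route. You phrase the argument as a single equivalence of feasible sets. You also give an explicit scaling argument for the unnormalized $\tilde B$; the paper uses this step without comment, although its Schur-complement proof never actually uses $\Tr B=1$.
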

The proof is a short congruence calculation, given in
Appendix~\ref{subsec:supp-filtered-block}. Reading off this supremum,
\begin{equation}
\EAC(A\otimes B;\psi)
=\frac{2\sqrt{\det A\,\det B}}{C_\psi}.
\label{eq:EAC-pure-signal}
\end{equation}
Using $2\det X=1-\Tr X^2$ gives the impurity form
\begin{equation}
\EAC(A\otimes B;\psi)
=\frac{\sqrt{(1-\Tr A^2)(1-\Tr B^2)}}{C_\psi}.
\label{eq:EAC-pure-impurity}
\end{equation}
For $C_\psi=1$ (Bell reference state) this reduces to
Eq.~\eqref{eq:impurity}. Lemma~\ref{lem:filtered-bell-block} assumes
$C_\psi>0$. When $C_\psi=0$, the reference is a product state, so the
product-noise mixture stays separable for all $\lambda$ and $\EAC$
is infinite.

\paragraph*{Remark: only the magnitudes enter.}
Equation~\eqref{eq:EAC-pure-signal} depends only on magnitudes, the
concurrence $C_\psi$ and the determinants $\det A,\det B$, and not on the
relative orientation of the vectors involved: the angles among $\vec a$,
$\vec b$, and the Schmidt frame of $|\psi\rangle$ play no role. This goes
beyond local-unitary invariance, which would only fix $\EAC$ under a
\emph{joint} rotation of $A$, $B$, and $|\psi\rangle$: here rotating $A$
alone, at fixed $\det A$, moves $\vec a$ relative to $\vec b$ and the
Schmidt frame yet leaves \eqref{eq:filtered-bellblock} saturated at the
same critical constant.

\paragraph*{Two-parameter threshold law.}
Substituting \eqref{eq:EAC-pure-signal} into the threshold identity
\eqref{eq:mobius} gives the Bell-mixing threshold,
\begin{equation}
\lambda_*(A\otimes B;\psi)
=\frac{\sqrt{(1-\Tr A^2)(1-\Tr B^2)}}
{C_\psi+\sqrt{(1-\Tr A^2)(1-\Tr B^2)}}.
\label{eq:product-pure-threshold}
\end{equation}
The noise enters only through its local impurities $1-\Tr A^2$ and $1-\Tr B^2$, the reference only through $C_\psi$. Two monotonicities
follow immediately, in opposite directions. For fixed product noise,
$\lambda_*$ strictly \emph{decreases} in $C_\psi$: a more entangled
reference makes $\rho_\lambda$ entangled at a smaller Bell weight. For a
fixed pure-state reference $|\psi\rangle$, $\lambda_*$ strictly \emph{increases}
in $\sqrt{(1-\Tr A^2)(1-\Tr B^2)}$: more mixed local marginals keep
$\rho_\lambda$ separable up to a larger Bell weight. The impurity factor $g:=\sqrt{(1-\Tr A^2)(1-\Tr B^2)}$ is at most
$\tfrac12$, and $\EAC=g/C_\psi$. It is largest, $g=\tfrac12$, at the
maximally mixed marginals $A=B=I_2/2$. Therefore, for fixed reference
concurrence, the product-noise threshold satisfies
\begin{equation}
\lambda_*(A\otimes B;\psi)\le\frac{1}{2C_\psi+1},
\label{eq:product-pure-ceiling}
\end{equation}
with equality at $A=B=I_2/2$. The Werner value $\lambda_*=1/3$ is the
maximally entangled case $C_\psi=1$. As $C_\psi$ decreases, the upper
bound increases and tends to $1$ as $C_\psi\to0$. At $C_\psi=0$ the
reference is a product state, so the product-noise mixing line has no
entanglement crossing.

\paragraph*{Channel corollary.}
Equation~\eqref{eq:bloch-form} shows that on product noise
$\EAC(A\otimes B)$ depends only on the two marginal Bloch radii. Decreasing
either radius increases the corresponding local impurity, so $\EAC$ and
hence $\lambda_*=\EAC/(1+\EAC)$ cannot decrease. If
$A=(I+\vec a\cdot\boldsymbol\sigma)/2$ and
$\mathcal N(A)=(I+\vec a\,'\cdot\boldsymbol\sigma)/2$, then a local unital
qubit channel satisfies $|\vec a\,'|\le|\vec a|$. Therefore local unital
noise cannot lower $\EAC$ or $\lambda_*$ on the product family. This
monotonicity is not a property of arbitrary local channels. Nonunital
channels do not have to contract every Bloch radius, so they can either
increase or decrease the product capacity depending on the input
marginals. The precise statement and an explicit amplitude-damping
counterexample are given in
Theorem~\ref{thm:channel-monotonicity} of
Appendix~\ref{subsec:supp-monotonicity}.

\subsection{Fidelity absorption capacity}
The fidelity absorption capacity also has a closed form on the full
product family, but it uses more information than the local purities.
For a Bell reference $\beta\in\{\Phi^+,\Phi^-,\Psi^+,\Psi^-\}$, let
$D_\beta$ be the sign matrix of its correlation tensor,
\begin{equation}
\begin{aligned}
D_{\Phi^+}&=\operatorname{diag}(1,-1,1), &\quad
D_{\Phi^-}&=\operatorname{diag}(-1,1,1),\\
D_{\Psi^+}&=\operatorname{diag}(1,1,-1), &\quad
D_{\Psi^-}&=\operatorname{diag}(-1,-1,-1),
\end{aligned}
\label{eq:Dbeta}
\end{equation}
where $A=(I+\vec a\cdot\boldsymbol\sigma)/2$ and
$B=(I+\vec b\cdot\boldsymbol\sigma)/2$. Define
$h:=|\vec a|\,|\vec b|$ and the Bell-frame alignment
$m_\beta:=\vec a^{\,T}D_\beta\vec b$. Then
\begin{equation}
\EACF(A\otimes B;\beta)=\frac{\sqrt{1-h^2+m_\beta^2}-m_\beta}{2},
\label{eq:EACF-product}
\end{equation}
with $\lambda_F=\EACF/(1+\EACF)$.

This is the product-family distinction between the two capacities:
the entanglement absorption capacity in Eq.~\eqref{eq:impurity} depends
only on the two Bloch radii and is the same for every maximally entangled reference,
whereas $\EACF$
depends also on the signed alignment $m_\beta$. The difference is already
visible in two $|\PhiP\rangle$-frame orientations. If
$\vec a=|\vec a|\hat z$ and $\vec b=|\vec b|\hat z$, then
$$
\lambda_F=\frac{1-|\vec a|\,|\vec b|}{3-|\vec a|\,|\vec b|}.
$$
In this aligned case, $\lambda_F=\lambda_*$ exactly on the equal-radius
diagonal $|\vec a|=|\vec b|$. If instead
$\vec a=|\vec a|\hat z$ and $\vec b=|\vec b|\hat x$, then
$$
\lambda_F=\frac{\sqrt{1-|\vec a|^2|\vec b|^2}}
{2+\sqrt{1-|\vec a|^2|\vec b|^2}}.
$$
In this perpendicular case, $\lambda_F=\lambda_*$ only at
$|\vec a|=|\vec b|=0$ and at the pure-product boundary point
$|\vec a|=|\vec b|=1$. The equality set therefore depends on orientation
relative to the Bell frame. Away from the equality set, $\lambda_F>\lambda_*$,
leaving a Bell-weight interval in which the Bell-mixed state is entangled
but still below the teleportation threshold. Appendix~\ref{subsec:supp-product-tele}
derives the product-family formula \eqref{eq:EACF-product}, and
Appendix~\ref{subsec:supp-subfamily-checks} checks the two orientations
above.

\paragraph*{Arbitrary reference state.}
For $\EAC$, changing the Bell reference to a pure reference of concurrence
$C_\psi$ only introduces the factor $1/C_\psi$ in
Eq.~\eqref{eq:EAC-pure-signal}. The fidelity absorption capacity $\EACF$ is different:
it also depends on the directions of the product Bloch vectors relative to
the Schmidt basis of the reference state. One aligned case remains simple.
Let $|\psi\rangle=\alpha|00\rangle+\beta|11\rangle$ be in Schmidt form in
the computational basis, with $C_\psi=2\alpha\beta>0$, and take both product
Bloch vectors along the same $z$ axis,
$\vec a=|\vec a|\hat z$ and $\vec b=|\vec b|\hat z$. Then
\begin{equation}
\EACF(A\otimes B;\psi)=\frac{1-|\vec a|\,|\vec b|}{2C_\psi},
\qquad
\lambda_F=\frac{1-|\vec a|\,|\vec b|}
{2C_\psi+1-|\vec a|\,|\vec b|}.
\label{eq:EACF-product-pure-aligned}
\end{equation}
In this aligned case, $\EACF$ has the same $1/C_\psi$ rescaling as
$\EAC$. For general product directions the crossing depends on the full
relative orientation.
Appendix~\ref{subsec:supp-product-tele} gives the product-noise
teleportation calculation in two stages: first the closed Bell-reference
formula \eqref{eq:EACF-product}, and then its arbitrary pure-reference
extension, where the threshold is obtained from a quartic equation in
$\eta$ and the positive-determinant branch of the Horodecki formula gives
no crossing.

\subsection{Replacement-channel examples}
\label{subsec:replacement-examples}
Replacement channels give simple threshold examples because their endpoints
are product states. Consider the one-sided replacement channel
$$
\mathcal R_{\lambda,\tau}(\omega)=\lambda\omega+(1-\lambda)\Tr(\omega)\tau,
\qquad
\tau=\frac{I+\vec r\cdot\boldsymbol\sigma}{2},
$$
acting on the second qubit:
\begin{equation}
(I\otimes\mathcal R_{\lambda,\tau})(\rho)
=
\lambda\rho+(1-\lambda)\Tr_2(\rho)\otimes\tau .
\label{eq:one-sided-replacement}
\end{equation}
Let the input be a pure entangled state
$\psi=|\psi\rangle\!\langle\psi|$ and set $A=\Tr_2\psi$. The first
marginal is unchanged by the channel. Hence $\Tr_2\rho_n=A$ at every
step, and the recursion closes on the two states $\psi$ and
$A\otimes\tau$:
\begin{equation}
\rho_n
=
\lambda^n\psi+(1-\lambda^n)A\otimes\tau .
\label{eq:one-sided-iterate}
\end{equation}
The limiting endpoint $A\otimes\tau$ has, by
\eqref{eq:EAC-pure-signal},
$$
\EAC(A\otimes\tau;\psi)
=
\frac{2\sqrt{\det A\,\det\tau}}{C_\psi}
=
\sqrt{\det\tau}
=
\frac{\sqrt{1-|\vec r|^2}}{2},
\qquad
C_\psi=2\sqrt{\det A}.
$$
The dependence on the input concurrence cancels. For mixed $\tau$, the
entanglement threshold and the first separable iterate are
\begin{equation}
\lambda_*
=
\frac{\sqrt{1-|\vec r|^2}}{2+\sqrt{1-|\vec r|^2}},
\qquad
n_{\rm ESD}
=
\left\lceil
\frac{\log\lambda_*}{\log\lambda}
\right\rceil .
\label{eq:one-sided-esd-time}
\end{equation}
If $\tau$ is pure, then $\det\tau=0$ and $\lambda_*=0$. For
$0<\lambda<1$, the state remains entangled at every finite iterate and
becomes separable only in the limit $n\to\infty$.

For the Bell input $\psi=\PhiP$ the marginal is $A=I_2/2$, and a single
use gives the channel's Choi state,
$$
(I\otimes\mathcal R_{\lambda,\tau})(\PhiP)
=
\lambda\PhiP +(1-\lambda)\frac{I_2}{2}\otimes\tau .
$$
For this noise state,
$$
\EACF\!\left(\frac{I_2}{2}\otimes\tau;\PhiP\right)=\frac12,
$$
independent of $\tau$, so the teleportation threshold is
$\lambda_F=1/3$. Since $\lambda_*\le\lambda_F$, the Choi trajectory
crosses the teleportation threshold no later than the entanglement
threshold:
$$
n_F=
\left\lceil\frac{\log(1/3)}{\log\lambda}\right\rceil
\le n_{\rm ESD}.
$$

The maximally mixed replacement $\tau=I_2/2$ is the one-sided
depolarizing channel $\mathcal D_p(\omega)=(1-p)\omega+pI_2/2$, with
$\lambda=1-p$. For the Bell input,
$$
(I\otimes\mathcal D_p)(\PhiP)
=
(1-p)\PhiP + p\,\frac{I_4}{4}.
$$
Here
$\EAC(I_4/4)=\EACF(I_4/4)=1/2$, so
$\lambda_*=\lambda_F=1/3$. The one-sided depolarizing channel is
entanglement breaking for $p\ge2/3$, and its Choi state beats the
classical teleportation bound precisely for $p<2/3$. Depolarizing both
qubits gives Bell weight $(1-p)^2$, so the two-sided crossing occurs
at $p=1-1/\sqrt3$.

A second replacement example overwrites both parties. Replacing the pair
by $\tau_A\otimes\tau_B$, with local Bloch vectors
$\vec r_A,\vec r_B$, gives
$$
\rho_\lambda=\lambda\PhiP +(1-\lambda)\tau_A\otimes\tau_B .
$$
The noise state $\tau_A\otimes\tau_B$ has
$$
\EAC(\tau_A\otimes\tau_B)
=
\frac{\sqrt{(1-|\vec r_A|^2)(1-|\vec r_B|^2)}}{2},
$$
and
$$
\EACF(\tau_A\otimes\tau_B)
=
\frac{\sqrt{1-h^2+m^2}-m}{2},
\qquad
h=|\vec r_A|\,|\vec r_B|,
\qquad
m=\vec r_A^{\,T}\operatorname{diag}(1,-1,1)\vec r_B .
$$
From these, $\lambda_*=\EAC/(1+\EAC)$ and $\lambda_F=\EACF/(1+\EACF)$
give the two thresholds.

More generally, if the noise state is a separable mixture
$$
\sigma=\sum_i p_i A_i\otimes B_i,
$$
then concavity~(Eqs.~\eqref{eq:EAC-concavity},~\eqref{eq:EACF-concavity})
gives lower bounds on both capacities:
$$
L_*=\sum_i p_i\EAC(A_i\otimes B_i),
\qquad
L_F=\sum_i p_i\EACF(A_i\otimes B_i).
$$
These show that $\rho_\lambda$ is separable for $\lambda\le L_*/(1+L_*)$
and not teleportation-useful for $\lambda\le L_F/(1+L_F)$.
For example, if every term has fixed local purities
$\Tr A_i^2=\pi_A$ and $\Tr B_i^2=\pi_B$, then
$$
\EAC(\sigma)\ge
g:=\sqrt{(1-\pi_A)(1-\pi_B)},
\qquad
\lambda_*(\sigma)\ge\frac{g}{1+g},
$$
regardless of the orientations and weights of the components. If the
noise state $\sigma$ is an $X$ state, such as a Bell-diagonal separable
mixture, the $X$-state formulas apply similarly.

\section{\texorpdfstring{$X$-state}{X-state} law}
\label{sec:eval2}
We now consider separable noise in the two-qubit $X$-state family defined
in \eqref{eq:summary-Xstate}. These states keep the computational-basis
block structure and may carry the two antidiagonal coherences $u$ and $v$.
This is the next case beyond product noise where the Bell-mixing thresholds
can still be explicit.

The $X$ family contains product states with computational-basis diagonal
marginals and Bell-diagonal states. Both $\EAC$ and $\EACF$ admit closed
forms on the full separable $X$ sector, and the Bell-diagonal and
diagonal-product laws follow by substitution. We use the $\Phi/\Psi$ block
convention of \eqref{eq:summary-Xstate}: the $\Phi$ block carries $u$ and
the $\Psi$ block carries $v$. A separable $X$ state satisfies the four
inequalities $ad\ge|u|^2$, $bc\ge|v|^2$, $ad\ge|v|^2$, and
$bc\ge|u|^2$, where the first two are the density-matrix conditions in
\eqref{eq:summary-Xstate} and the last two are the PPT conditions.

\subsection{Entanglement absorption capacity}
Since $\Phi^+$ is itself an $X$ state, the Bell-mixing point
$$
\rho_\eta=\frac{\sigma_X+\eta\Phi^+}{1+\eta}
$$
remains in the $X$ family for every $\eta\ge0$, regardless of whether it
is separable or entangled.
For the $|\PhiP\rangle$ reference, the SDP becomes two $2\times2$
positivity tests after partial transposition. The partial transpose swaps
the two antidiagonal coherences of an $X$ state: $u$, originally in the
$(|00\rangle,|11\rangle)$ entry of $\sigma_X$, moves to the
$(|01\rangle,|10\rangle)$ entry of $\sigma_X^{T_B}$, while $v$ moves in
the opposite direction. In the computational basis,
$$
\sigma_X^{T_B}=
\begin{pmatrix}
a & 0 & 0 & v\\
0 & b & u & 0\\
0 & u^* & c & 0\\
v^* & 0 & 0 & d
\end{pmatrix}.
$$
The partial transpose of the $|\PhiP\rangle$ projector is
$$
Q=(|\PhiP\rangle\!\langle\PhiP|)^{T_B}
=\tfrac{1}{2}\bigl(|00\rangle\!\langle 00|
+|01\rangle\!\langle 10|+|10\rangle\!\langle 01|+|11\rangle\!\langle 11|\bigr)
=\frac12
\begin{pmatrix}
1&0&0&0\\
0&0&1&0\\
0&1&0&0\\
0&0&0&1
\end{pmatrix}
$$
in the computational basis. Adding $\eta Q$ to $\sigma_X^{T_B}$ and
reordering to the basis order
$|00\rangle,|11\rangle,|01\rangle,|10\rangle$ gives
\begin{equation}
\begin{aligned}
\sigma_X^{T_B}+\eta Q
&\cong
\begin{pmatrix}
a+\tfrac{\eta}{2} & v & 0 & 0\\
v^* & d+\tfrac{\eta}{2} & 0 & 0\\
0 & 0 & b & u+\tfrac{\eta}{2}\\
0 & 0 & u^*+\tfrac{\eta}{2} & c
\end{pmatrix}
\\[2pt]
&=\underbrace{\begin{pmatrix}
a+\tfrac{\eta}{2} & v\\
v^* & d+\tfrac{\eta}{2}
\end{pmatrix}}_{\Phi\text{ block }(00/11)}
\oplus
\underbrace{\begin{pmatrix}
b & u+\tfrac{\eta}{2}\\
u^*+\tfrac{\eta}{2} & c
\end{pmatrix}}_{\Psi\text{ block }(01/10)},
\end{aligned}
\label{eq:X-blocks}
\end{equation}
where $\oplus$ denotes block-diagonal concatenation.
Although the reference added before partial transposition is
$\Phi^+$, the negative eigenspace of $Q=(\Phi^+)^{T_B}=S/2$ lies in the
$\Psi$ block: $Q|\Psi^-\rangle=-\frac12|\Psi^-\rangle$.
The SDP \eqref{eq:EAC-def} is therefore a one-variable positivity
condition on the $\Phi$ and $\Psi$ blocks in \eqref{eq:X-blocks}. At $\eta=0$ both blocks are
positive semidefinite by the PPT conditions. For $\eta\ge 0$, the
$\Phi$ block remains positive: its diagonal entries are nonnegative, and
its determinant
$$
ad-|v|^2+\frac{a+d}{2}\eta+\frac{\eta^2}{4}
$$
is nondecreasing from the nonnegative value $ad-|v|^2$. Hence the only
remaining constraint comes from the $\Psi$ block, whose determinant
condition is
$$
bc-|u+\eta/2|^2
=bc-\left(\operatorname{Re}u+\frac{\eta}{2}\right)^2
-\bigl(\operatorname{Im}u\bigr)^2
\ge 0.
$$
Solving this inequality for the largest admissible $\eta$ gives the
entanglement absorption capacity in the $|\PhiP\rangle$ frame:
\begin{equation}
\EAC(\sigma_X;\Phi^+)
=-2\,\operatorname{Re} u+2\sqrt{bc-(\operatorname{Im} u)^2}.
\label{eq:EAC-X-PhiP}
\end{equation}
The corresponding Bell-mixing threshold follows immediately from
\eqref{eq:mobius}:
\begin{equation}
\lambda_*(\sigma_X;\Phi^+)
=
\frac{-2\,\operatorname{Re} u+2\sqrt{bc-(\operatorname{Im} u)^2}}
{1-2\,\operatorname{Re} u+2\sqrt{bc-(\operatorname{Im} u)^2}}.
\label{eq:X-threshold-PhiP}
\end{equation}
The remaining Bell frames follow from the one-qubit Pauli maps that send
$\Phi^+$ to $\Phi^-$, $\Psi^+$, and $\Psi^-$. These maps preserve the
$X$ form and only flip the relevant coherence or exchange the two blocks,
giving
\begin{align}
\EAC(\sigma_X;\Phi^-)
&=
2\,\operatorname{Re} u+2\sqrt{bc-\bigl(\operatorname{Im} u\bigr)^2},
\label{eq:EAC-X-PhiM}
\\
\EAC(\sigma_X;\Psi^+)
&=
-2\,\operatorname{Re} v+2\sqrt{ad-\bigl(\operatorname{Im} v\bigr)^2},
\label{eq:EAC-X-PsiP}
\\
\EAC(\sigma_X;\Psi^-)
&=
2\,\operatorname{Re} v+2\sqrt{ad-\bigl(\operatorname{Im} v\bigr)^2}.
\label{eq:EAC-X-PsiM}
\end{align}
In the $|\PhiP\rangle$ frame, the entanglement absorption capacity depends only on the
coherence $u$ and the two populations $b,c$ in the
$|01\rangle,|10\rangle$ positions. If $u$ is real, Eq.~\eqref{eq:EAC-X-PhiP}
reduces to $\EAC(\sigma_X;\Phi^+)=2(\sqrt{bc}-u)$.

Equations \eqref{eq:EAC-X-PhiP}--\eqref{eq:EAC-X-PsiM} give the
absorption capacity for any separable two-qubit $X$ state in all four Bell
frames. For each Bell frame $\beta$, the Bell-mixing entanglement threshold
is then
$$
\lambda_*(\sigma_X;\beta)
=\frac{\EAC(\sigma_X;\beta)}{1+\EAC(\sigma_X;\beta)}.
$$
Partial transposition swaps the coherences carried by the $\Phi$ and $\Psi$
blocks. Therefore the $\Phi$-frame formula compares $u$ with the opposite
block-population product $bc$, while the $\Psi$-frame formula compares $v$
with $ad$. For real $X$ states, the same two block comparisons also appear
in the usual $X$-state concurrence on the entangled side, as discussed in
Section~\ref{subsec:optimized-margins}.

\paragraph*{Arbitrary reference state.}
On the $X$ sector, the simple $1/C_\psi$ rescaling has an aligned
Schmidt-basis form. In this case,
$|\psi\rangle=\alpha|00\rangle+\beta|11\rangle$ and the noise state is
$X$-shaped in the same computational basis
$|00\rangle,|01\rangle,|10\rangle,|11\rangle$. Product noise is
basis-independent for this law: local unitaries preserve tensor-product
form, and the capacity depends only on local determinants. The $X$ case
is different because the block structure is basis-dependent. If the
Schmidt basis of the reference is not the basis in which the noise state
has $X$ form, the matrix $\sigma_X^{T_B}+\eta Q_\psi$ is generally not
block diagonal in the $\Phi/\Psi$ decomposition. The positivity test then
does not reduce to the two $2\times2$ blocks used below.

In the aligned case, the partial-transposed projector is
$$
Q_\psi
=\alpha^2|00\rangle\!\langle 00|
+\alpha\beta\bigl(|01\rangle\!\langle 10|+|10\rangle\!\langle 01|\bigr)
+\beta^2|11\rangle\!\langle 11|,
$$
so, after reordering to the basis
$|00\rangle,|11\rangle,|01\rangle,|10\rangle$,
$$
\sigma_X^{T_B}+\eta Q_\psi
\cong
\underbrace{\begin{pmatrix}
a+\eta\alpha^2 & v\\
v^* & d+\eta\beta^2
\end{pmatrix}}_{\Phi\text{ block }(00/11)}
\oplus
\underbrace{\begin{pmatrix}
b & u+\eta\alpha\beta\\
u^*+\eta\alpha\beta & c
\end{pmatrix}}_{\Psi\text{ block }(01/10)} .
$$
The $\Phi$ block stays positive for all $\eta\ge0$. Positivity of the
$\Psi$ block is equivalent to $|u+\eta\alpha\beta|^2\le bc$. Solving for the largest
admissible $\eta$ proves the aligned clause for pure-state references in
Theorem~\ref{thm:xstate}: for
$C_\psi=2\alpha\beta>0$,
\begin{align}
\EAC(\sigma_X;\psi)
&=\frac{-2\,\operatorname{Re}u+2\sqrt{bc-(\operatorname{Im}u)^2}}{C_\psi}
\nonumber\\
&=\frac{\EAC(\sigma_X;\Phi^+)}{C_\psi},
\label{eq:xstate-pure-signal}
\end{align}
with the corresponding Bell-mixing entanglement threshold
$\lambda_*(\sigma_X;\psi)$ following from
Theorem~\ref{thm:mobius}.

\emph{Subsumed sectors.} The $X$-state law contains two familiar families
as special cases. For a product state with computational-basis diagonal
marginals, $\sigma_X=A\otimes B$ has $u=v=0$ and
$bc=\det A\,\det B$. Equation~\eqref{eq:EAC-X-PhiP} becomes
$\EAC(\sigma_X;\Phi^+)=2\sqrt{bc}=2\sqrt{\det A\,\det B}$, which is the
diagonal-marginal case of the product law \eqref{eq:EAC-product}. Product
states with local coherences need not be $X$ states, so the arbitrary
product formula remains the separate result of Section~\ref{sec:eval1}.
For the Bell-diagonal separable family, with Bell weights
$p_{\Phi^+},p_{\Phi^-},p_{\Psi^+},p_{\Psi^-}$, the $X$ entries are
$a=d=(p_{\Phi^+}+p_{\Phi^-})/2$,
$b=c=(p_{\Psi^+}+p_{\Psi^-})/2$,
$u=(p_{\Phi^+}-p_{\Phi^-})/2$, and
$v=(p_{\Psi^+}-p_{\Psi^-})/2$. The density-matrix conditions in
\eqref{eq:summary-Xstate} are automatic from $p_\beta\ge0$. The PPT
conditions reduce to
$$
p_{\Psi^+}+p_{\Psi^-}\ge |p_{\Phi^+}-p_{\Phi^-}|,
\qquad
p_{\Phi^+}+p_{\Phi^-}\ge |p_{\Psi^+}-p_{\Psi^-}|,
$$
which is equivalent to $p_\beta\le\tfrac12$ for all four Bell weights.
This recovers the standard Bell-diagonal separability condition.
Substitution in
\eqref{eq:EAC-X-PhiP} gives the linear law
$\EAC(\sigma_{\mathrm{Bell}};\Phi^+)=1-2p_{\Phi^+}$, hence
$$
\lambda_*(\sigma_{\mathrm{Bell}};\Phi^+)
=\frac{1-2p_{\Phi^+}}{2(1-p_{\Phi^+})}.
$$

\subsection{Fidelity absorption capacity}
\label{subsec:xstate-fidelity}
We next evaluate the fidelity absorption capacity $\EACF$, the
teleportation analogue of the preceding $\EAC$ calculation. The
Bell-mixing line is the same, but the stopping condition changes: $\EAC$
is determined by PPT positivity, while $\EACF$ is determined by the first
point at which the fully entangled fraction $f(\rho_\eta)$ reaches
$\tfrac12$. In the $|\PhiP\rangle$ frame, write
$$
\rho_\eta=\frac{\sigma_X+\eta\Phi^+}{1+\eta}.
$$
The only new calculation is $f(\rho_\eta)$. We first record the
formula for a generic $X$ state and then apply it to $\rho_\eta$. Write a
two-qubit vector as $|\beta\rangle=\sum_{i,j}M_{ij}|ij\rangle$. Maximal
entanglement is equivalent to the one-qubit marginal $MM^\dagger$ being
$I_2/2$. This means the two rows of $M$ are orthogonal and each has norm
$1/\sqrt2$. After removing an overall phase, this gives the parametrization
$$
|\beta\rangle
=\frac{1}{\sqrt2}\Bigl(
e^{i\alpha}\sqrt p\,|00\rangle
-e^{-i\gamma}\sqrt{1-p}\,|01\rangle
+e^{i\gamma}\sqrt{1-p}\,|10\rangle
+e^{-i\alpha}\sqrt p\,|11\rangle
\Bigr),
$$
where $0\le p\le 1$ and $\alpha,\gamma\in[0,2\pi)$.
Substituting this vector into a generic $X$ state $\omega_X$ with entries
$(A,B,C,D,U,V)$ gives
$$
\langle\beta|\omega_X|\beta\rangle
=
p\left(\frac{A+D}{2}
+\operatorname{Re}(U e^{-2i\alpha})\right)
+(1-p)\left(\frac{B+C}{2}
-\operatorname{Re}(V e^{2i\gamma})\right).
$$
The maximization over $|\beta\rangle$ is elementary. For fixed $p$,
the two phase-dependent terms can attain the values $|U|$ and $|V|$.
The remaining expression is a convex combination of the two optimized
block overlaps, so maximizing over $0\le p\le1$ selects the larger one:
$$
f(\omega_X)
=
\max\left\{
\frac{A+D}{2}+|U|,\,
\frac{B+C}{2}+|V|
\right\}.
$$
We now set $\omega_X=\rho_\eta$. Its entries are those of
$\sigma_X+\eta\Phi^+$ divided by $1+\eta$, namely
$a+\eta/2,b,c,d+\eta/2,u+\eta/2,v$ divided by $1+\eta$:
$$
(A,B,C,D,U,V)
=\frac{1}{1+\eta}
\left(a+\frac{\eta}{2},\,b,\,c,\,d+\frac{\eta}{2},\,u+\frac{\eta}{2},\,v\right).
$$
Hence
$f(\rho_\eta)=\max\{f_\Phi(\rho_\eta),f_\Psi(\rho_\eta)\}$, where
$$
f_\Phi(\rho_\eta)
=\frac{(a+d)/2+\eta/2+|u+\eta/2|}{1+\eta},
\qquad
f_\Psi(\rho_\eta)
=\frac{(b+c)/2+|v|}{1+\eta}.
$$
The $\Psi$ branch cannot give the first crossing: separability gives
$|v|\le\sqrt{ad}\le(a+d)/2$, hence $f_\Psi(\rho_0)\le\tfrac12$, and the
branch decreases with $\eta$. The crossing therefore comes from
$f_\Phi(\rho_\eta)=\tfrac12$. Using $a+b+c+d=1$, this condition is
equivalent to
$$
|u+\eta/2|=\frac{b+c}{2}.
$$
Solving this equation for $\eta$ gives the largest nonnegative solution
$$
\eta=-2\operatorname{Re}u+\sqrt{(b+c)^2-4(\operatorname{Im}u)^2}.
$$
The four Bell-frame fidelity absorption capacities are therefore
\begin{align}
\EACF(\sigma_X;\Phi^+)
&=-2\operatorname{Re} u+\sqrt{(b+c)^2-4(\operatorname{Im} u)^2},
\label{eq:EAC-F-X}
\\
\EACF(\sigma_X;\Phi^-)
&=2\operatorname{Re} u+\sqrt{(b+c)^2-4(\operatorname{Im} u)^2},
\label{eq:EACF-X-PhiM}\\
\EACF(\sigma_X;\Psi^+)
&=-2\operatorname{Re} v+\sqrt{(a+d)^2-4(\operatorname{Im} v)^2},
\label{eq:EACF-X-PsiP}\\
\EACF(\sigma_X;\Psi^-)
&=2\operatorname{Re} v+\sqrt{(a+d)^2-4(\operatorname{Im} v)^2}.
\label{eq:EACF-X-PsiM}
\end{align}
As capacities on separable $X$ states, these quantities are nonnegative.
The teleportation threshold is
\begin{equation}
\lambda_F(\sigma_X;\Phi^+)
=\frac{\EACF(\sigma_X;\Phi^+)}{1+\EACF(\sigma_X;\Phi^+)}\,.
\label{eq:lambdaF-X}
\end{equation}
Equations \eqref{eq:EAC-F-X}--\eqref{eq:lambdaF-X} are the teleportation
counterpart of the separability formulas
\eqref{eq:EAC-X-PhiP}--\eqref{eq:EAC-X-PsiM}.

\begin{corollary}[Threshold order and gap]
\label{cor:sandwich-gap}
In the $|\PhiP\rangle$ frame, the capacity gap has the exact form
\begin{equation}
\begin{split}
\EACF(\sigma_X;\PhiP)-\EAC(\sigma_X;\PhiP)
={}&
\frac{(b-c)^2}
{\sqrt{(b+c)^2-4(\operatorname{Im}u)^2}
+2\sqrt{bc-(\operatorname{Im}u)^2}},
\end{split}
\label{eq:sandwich-gap-capacity}
\end{equation}
The denominator is nonnegative on separable $X$ states, so
\eqref{eq:sandwich-gap-capacity} gives $\EACF\ge\EAC$, with equality
exactly when $b=c$. Since $x\mapsto x/(1+x)$ is increasing,
\begin{equation}
\lambda_*(\sigma_X;\Phi^+)\,\le\,\lambda_F(\sigma_X;\Phi^+),
\qquad
\text{equality $\,\Longleftrightarrow\, b=c$.}
\label{eq:X-sandwich-equality}
\end{equation}
The exact threshold gap is
\begin{equation}
\lambda_F(\sigma_X;\PhiP)-\lambda_*(\sigma_X;\PhiP)
=\frac{\EACF-\EAC}{(1+\EAC)(1+\EACF)}.
\label{eq:sandwich-gap-threshold}
\end{equation}
For real $u$, the capacity gap reduces to $(\sqrt{b}-\sqrt{c})^2$.
For
$\lambda\in(\lambda_*,\lambda_F)$, the state $\rho_\lambda$ is already
entangled but still below the teleportation threshold. This is the
Bell-weight interval in which the link contains entanglement, but an
additional distillation or filtering step is needed before it is directly
useful for teleportation.
\end{corollary}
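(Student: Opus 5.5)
The plan is to subtract the two closed forms directly, using \eqref{eq:EAC-X-PhiP} and \eqref{eq:EAC-F-X}. The $-2\operatorname{Re}u$ terms cancel, which leaves
$$
\EACF-\EAC=\sqrt{(b+c)^2-4(\operatorname{Im}u)^2}-2\sqrt{bc-(\operatorname{Im}u)^2}.
$$
I would name the two radicands $X:=(b+c)^2-4(\operatorname{Im}u)^2$ and $Y:=4bc-4(\operatorname{Im}u)^2$. Both are nonnegative on the separable $X$ sector. The PPT condition $bc\ge|u|^2\ge(\operatorname{Im}u)^2$ gives $Y\ge0$, and $(b+c)^2\ge4bc$ gives $X\ge Y\ge0$. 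Since $2\sqrt{bc-(\operatorname{Im}u)^2}=\sqrt Y$, the gap is $\sqrt X-\sqrt Y$.

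Next I rationalize, using $\sqrt X-\sqrt Y=(X-Y)/(\sqrt X+\sqrt Y)$. The $(\operatorname{Im}u)^2$ terms cancel in the numerator, so $X-Y=(b+c)^2-4bc=(b-c)^2$. This is exactly \eqref{eq:sandwich-gap-capacity}.

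The one delicate point is a vanishing denominator, which happens when $X=Y=0$. In that case $X-Y=(b-c)^2=0$, so $b=c$. Then $b^2=(\operatorname{Im}u)^2\le|u|^2\le bc=b^2$, which forces $u$ to be purely imaginary. In this degenerate case I would state the identity in the unrationalized form $\sqrt X-\sqrt Y=0$, which is consistent with the claim. Either way the gap is nonnegative, because $X\ge Y$ and the square root is monotone. Equality $\sqrt X=\sqrt Y$ holds iff $X=Y$, that is iff $b=c$.

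The threshold statements follow from Theorem~\ref{thm:mobius}. The map $x\mapsto x/(1+x)$ is strictly increasing on $[0,\infty)$, so $\EACF\ge\EAC$ gives $\lambda_F\ge\lambda_*$, with equality iff $\EACF=\EAC$, iff $b=c$. For the exact threshold gap I would compute
$$
\frac{\EACF}{1+\EACF}-\frac{\EAC}{1+\EAC}=\frac{\EACF-\EAC}{(1+\EAC)(1+\EACF)},
$$
which is a one-line common-denominator computation and gives \eqref{eq:sandwich-gap-threshold}.

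For real $u$ the radicands simplify to $X=(b+c)^2$ and $Y=4bc$. The gap is then $b+c-2\sqrt{bc}=(\sqrt b-\sqrt c)^2$.

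The remaining sentence about $\lambda\in(\lambda_*,\lambda_F)$ is interpretive. It follows from the definitions \eqref{eq:lambda-star-def} and \eqref{eq:lambda-F-def}, together with the interval structure of the two feasible sets established in the proof of Theorem~\ref{thm:mobius}: on that interval $\rho_\lambda$ is non-PPT, hence entangled, while $f\le\tfrac12$.

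I expect the main obstacle to be the degenerate-denominator case. Its resolution is that the degenerate case forces $b=c$, so the rationalized identity is only needed where the denominator is positive.
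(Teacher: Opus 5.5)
Your proposal is correct and follows essentially the same route as the paper. Like the paper, you subtract the closed forms \eqref{eq:EAC-X-PhiP} and \eqref{eq:EAC-F-X} so that the $-2\operatorname{Re}u$ terms cancel, rationalize the difference of square roots to get $(b-c)^2$ over their sum, and carry the ordering through the strictly increasing M\"obius map.

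Your separate treatment of a vanishing denominator is a small but genuine tightening. That case forces $b=c$ and $\operatorname{Re}u=0$, and it does occur, e.g.\ $a=b=c=d=\tfrac14$, $u=\tfrac{i}{4}$, $v=0$. The paper only says the denominator is nonnegative, which leaves a $0/0$ there.
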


\paragraph*{Aligned pure-state references.}
The $1/C_\psi$ rescaling of $\EACF$ applies when $\sigma_X$ has the
$X$ shape in \eqref{eq:summary-Xstate} in the Schmidt basis of the pure reference
$|\psi\rangle=\alpha|00\rangle+\beta|11\rangle$ in
\eqref{eq:psi-schmidt}, with $C_\psi=2\alpha\beta>0$.
Along the mixing line
$\rho_\eta=(\sigma_X+\eta\psi)/(1+\eta)$ the state remains $X$-shaped.
Its entries are those of $\sigma_X+\eta\psi$ divided by $1+\eta$,
namely
$a+\eta\alpha^2,b,c,d+\eta\beta^2,u+\eta\alpha\beta,v$
divided by $1+\eta$:
$$
(A,B,C,D,U,V)
=\frac{1}{1+\eta}
\left(a+\eta\alpha^2,\,b,\,c,\,d+\eta\beta^2,\,
u+\eta\alpha\beta,\,v\right).
$$
Since $\alpha^2+\beta^2=1$,
$f(\rho_\eta)=\max\{f_\Phi(\rho_\eta),f_\Psi(\rho_\eta)\}$, where
$$
f_\Phi(\rho_\eta)
=\frac{(a+d)/2+\eta/2+|u+\eta\alpha\beta|}{1+\eta},
\qquad
f_\Psi(\rho_\eta)
=\frac{(b+c)/2+|v|}{1+\eta}.
$$
Compared with the Bell-reference case, the only change is the replacement
$\eta/2\mapsto \eta\alpha\beta=C_\psi\eta/2$ inside the $\Phi$-block
coherence.
The $\Psi$ branch cannot give the first crossing for the same separability
reason as above. Setting $f_\Phi(\rho_\eta)=\tfrac12$ gives
$$
\left|u+\frac{C_\psi}{2}\eta\right|=\frac{b+c}{2}.
$$
Solving gives the Bell-reference capacity divided by $C_\psi$:
\begin{equation}
\EACF(\sigma_X;\psi)=\frac{\EACF(\sigma_X;\Phi^+)}{C_\psi},
\label{eq:EACF-X-pure-aligned}
\end{equation}
The Bell-weight thresholds then follow from the usual map
$\lambda=\eta/(1+\eta)$.

\paragraph*{Remark: entangled noise $\sigma_X$.}
We consider absorption capacities only for separable noise states
$\sigma_X$. The same algebra above also applies to the line
$(\sigma_X+\eta\Phi^+)/(1+\eta)$ when $\sigma_X$ is already entangled, but
then the result should be read only as a trajectory calculation, not as a
capacity statement. The PPT status is obtained by checking the two block
inequalities
$$
(a+\eta/2)(d+\eta/2)\ge |v|^2,\qquad
bc\ge |u+\eta/2|^2 .
$$
Failure of either inequality means entanglement. Since
$\rho_\eta\to\Phi^+$ as $\eta\to\infty$, when $\eta$ increases from $0$ the
line either remains entangled for all $\eta$ or passes through a single
PPT/separable interval, possibly degenerate to a single point. Convexity of the
separable set implies that there is just one such separable window on the
full Bell-mixing line. Similarly,
teleportation usefulness is decided by
$$
\max\{f_\Phi(\rho_\eta),f_\Psi(\rho_\eta)\}>\frac12 .
$$
The teleportation-useless part of the line is likewise a single interval:
$f(\rho_\lambda)$ is convex in the Bell weight $\lambda=\eta/(1+\eta)$, as
shown in the proof of Theorem~\ref{thm:mobius}, so its sublevel set
$\{\lambda:f(\rho_\lambda)\le 1/2\}$ is an interval.

\section{Decoherence trajectories}
\label{sec:trajectories}
This section uses the closed forms to follow how the thresholds move as
the separable noise state evolves. Let $\sigma(t)=\mathcal N_t(\sigma)$ be
the noise after a local channel $\mathcal N_t$, with the reference in the
mixing line held fixed at $\Phi^+$. We compute $\EAC(\sigma(t))$ and
$\EACF(\sigma(t))$, the amounts of the fixed $\Phi^+$ reference the evolved
noise absorbs before crossing the entanglement and teleportation
boundaries, with the corresponding Bell-weight thresholds $\lambda_*(t)$
and $\lambda_F(t)$. Propagating the whole Bell-mixed state through
$\mathcal N_t$ is a different question, since the reference component would
become $\mathcal N_t(\Phi^+)$ rather than the fixed reference $\Phi^+$.

If the local channel keeps the noise in the $X$ family, the two thresholds
are obtained by substituting the evolved entries into
\eqref{eq:EAC-X-PhiP} and \eqref{eq:EAC-F-X}. We illustrate this with two
local channels that preserve $X$ form. Amplitude damping is the
channel $\mathcal A_\gamma(\rho)=K_0\rho K_0^\dagger+K_1\rho K_1^\dagger$
with Kraus operators
$K_0=|0\rangle\!\langle0|+\sqrt{1-\gamma}\,|1\rangle\!\langle1|$ and
$K_1=\sqrt{\gamma}\,|0\rangle\!\langle1|$. It transfers population from
$|1\rangle$ to $|0\rangle$ and damps coherences. Pure dephasing is
$\mathcal Z_p(\rho)=(1-p)\rho+p\,\sigma_z\rho\sigma_z$. It leaves
populations fixed and damps only the off-diagonal entries. The time
constants $T_1$ and $T_\varphi$ enter through
$\gamma(t)=1-e^{-t/T_1}$ and $1-2p(t)=e^{-t/T_\varphi}$.
When both processes are present, the total coherence time $T_2$ is defined by
$1/T_2=1/(2T_1)+1/T_\varphi$.
Figure~\ref{fig:xstate-AD} first shows the two effects separately as
functions of their channel strengths. Figure~\ref{fig:realistic-noise}
then uses the combined time-dependent entries.

\begin{figure*}[!t]
\centering
\includegraphics[width=0.92\textwidth]{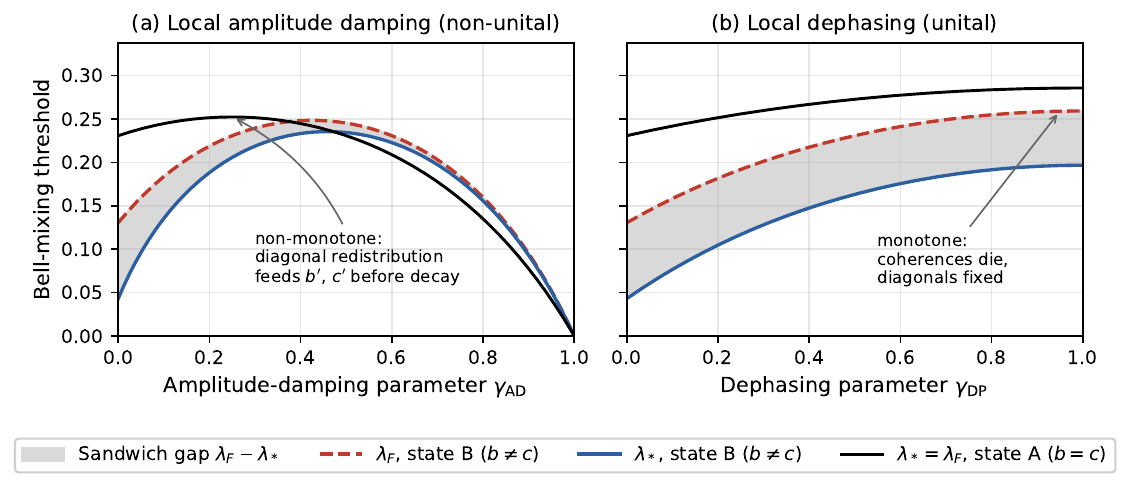}
\caption{Bell-mixing thresholds for separable $X$ noise evolving under local
amplitude damping and pure dephasing. Each panel shows the entanglement
threshold $\lambda_*$ and the teleportation threshold $\lambda_F$ as functions
of the channel strength. Noise state~A starts with equal middle populations $b=c$, and the
channels preserve this along the whole evolution, so the two thresholds
coincide by \eqref{eq:X-sandwich-equality}. For noise state~B the middle populations are
unequal, giving the shaded interval where the Bell mixture is entangled but
still below the teleportation threshold. Amplitude damping changes the
diagonal populations and can make the thresholds rise and fall, whereas pure dephasing
leaves the populations fixed and damps only the coherences. The curves are
obtained by substituting the evolved $X$ entries into \eqref{eq:EAC-X-PhiP}
and \eqref{eq:EAC-F-X}.}
\label{fig:xstate-AD}
\end{figure*}

\paragraph*{Amplitude-damped and dephased \texorpdfstring{$X$-state}{X-state} registers.}
The two channels used in Fig.~\ref{fig:xstate-AD} act locally on both
qubits and preserve the $X$ block structure. For symmetric amplitude
damping of strength $\gamma$, the map
$\mathcal A_\gamma\otimes\mathcal A_\gamma$ sends
\begin{equation*}
\begin{pmatrix}
a&0&0&u\\ 0&b&v&0\\ 0&v^{*}&c&0\\ u^{*}&0&0&d
\end{pmatrix}
\,\xrightarrow{\,\mathcal A_\gamma\otimes\mathcal A_\gamma\,}\,
\begin{pmatrix}
a'&0&0&u'\\ 0&b'&v'&0\\ 0&v'^{*}&c'&0\\ u'^{*}&0&0&d'
\end{pmatrix},
\end{equation*}
with
\begin{align}
a'&=a+\gamma(b+c)+\gamma^2 d,
\nonumber\\
b'&=(1-\gamma)(b+\gamma d),\qquad
c'=(1-\gamma)(c+\gamma d),
\label{eq:AD-X-map}
\\
d'&=(1-\gamma)^2 d,\qquad
u'=(1-\gamma)u,\qquad
v'=(1-\gamma)v.
\nonumber
\end{align}
Substituting these entries into \eqref{eq:EAC-X-PhiP} and
\eqref{eq:EAC-F-X}--\eqref{eq:lambdaF-X} gives
$\lambda_*(\gamma)$ and $\lambda_F(\gamma)$ directly.

For symmetric local dephasing $\mathcal Z_p\otimes\mathcal Z_p$, the
populations are unchanged and only the coherences contract:
$$
\sigma_X\ \xrightarrow{\,\mathcal Z_p\otimes\mathcal Z_p\,}\
\begin{pmatrix}
a&0&0&(1-2p)^2u\\
0&b&(1-2p)^2v&0\\
0&(1-2p)^2v^{*}&c&0\\
(1-2p)^2u^{*}&0&0&d
\end{pmatrix},
$$
so the same closed forms give $\lambda_*(p)$ and $\lambda_F(p)$ after this
replacement.

When amplitude damping and dephasing act together, as in
Fig.~\ref{fig:realistic-noise}, the populations are those of amplitude
damping alone, since dephasing leaves populations fixed, and the coherences
carry both the amplitude-damping factor $(1-\gamma)$ and the dephasing
factor $(1-2p)^2$:
\begin{align}
a_t&=a+\gamma(b+c)+\gamma^2 d,
\nonumber\\
b_t&=(1-\gamma)(b+\gamma d),\qquad
c_t=(1-\gamma)(c+\gamma d),
\label{eq:AD-dephasing-X-map}
\\
d_t&=(1-\gamma)^2 d,\qquad
u_t=(1-\gamma)(1-2p)^2u,\qquad
v_t=(1-\gamma)(1-2p)^2v.
\nonumber
\end{align}
With $\gamma=\gamma(t)$ and $p=p(t)$, these entries make
$\lambda_*(t)$ and $\lambda_F(t)$ explicit functions of time. Since
amplitude damping can transiently increase $b_tc_t$, neither $\EAC$ nor
$\EACF$ is forced to be monotone in the damping strength.
Figure~\ref{fig:xstate-AD} shows this behavior on representative
separable $X$-state noise states.

\paragraph*{Closed-form threshold diagram.}
The $X$-state formulas yield a threshold diagram along a noise trajectory.
Figure~\ref{fig:realistic-noise} shows a separable $X$ noise state under
combined amplitude damping and dephasing, evaluated from the time-dependent
entries \eqref{eq:AD-dephasing-X-map}. After substituting
$\gamma(t)$ and $p(t)$, the formulas give $\lambda_*(t)$ and
$\lambda_F(t)$ directly as functions of time. A fixed Bell weight then
crosses the entanglement or teleportation threshold at the solutions of
ordinary equations in $t$. The channels are local, so a separable initial
noise state remains separable and the separable-state formulas apply along
the whole trajectory.

\begin{figure*}[!t]
\centering
\includegraphics[width=0.7\textwidth]{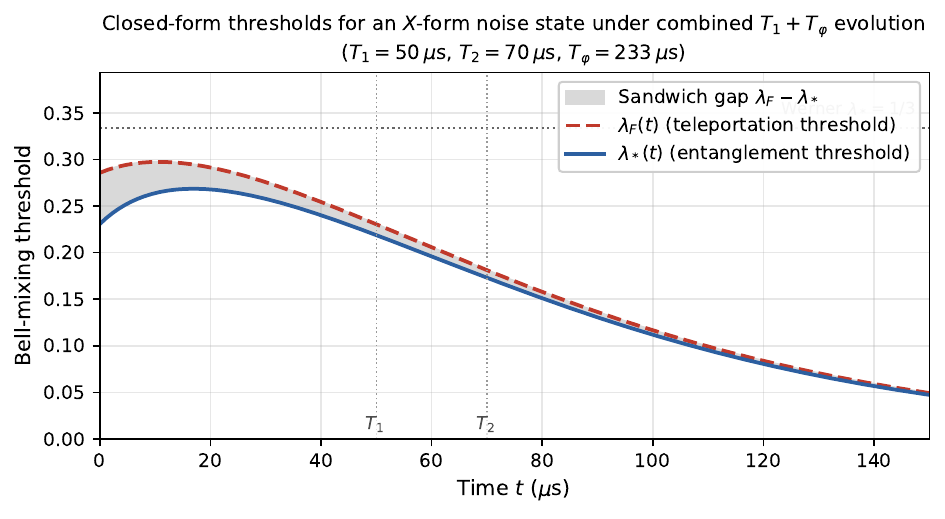}
\caption{Closed-form entanglement and teleportation thresholds for a
separable $X$ noise state under combined local amplitude damping and pure
dephasing. The illustrative parameters are $T_1=50\,\mu$s and
$T_2=70\,\mu$s, giving
$T_\varphi=1/(1/T_2-1/(2T_1))\approx 233\,\mu$s, in the transmon range
\cite{Krantz2019}. The noise state has biased middle populations
($b\neq c$), so the threshold gap is nonzero. Both curves are evaluated
from the combined entries in \eqref{eq:AD-dephasing-X-map} using
\eqref{eq:EAC-X-PhiP} and \eqref{eq:EAC-F-X}. The gap closes as the
middle-population asymmetry is
removed by relaxation, and the early-time nonmonotonicity comes from
amplitude damping transiently increasing $b_tc_t$. Werner's
$\lambda_*=1/3$ for $\sigma=I_4/4$ is shown for reference.}
\label{fig:realistic-noise}
\end{figure*}

\section{Discussion and outlook}
\label{sec:outlook}
The construction in this paper keeps the reference state fixed and treats
the separable noise state as the input. The capacities $\EAC$ and $\EACF$
then give the two Bell-mixing thresholds through the same M\"obius map.
On product noise and separable $X$ noise, these quantities become closed
forms, so the entanglement threshold, the teleportation threshold, and the
entangled-but-not-yet-useful interval can be compared directly.

\subsection{Fixed-reference interpretation}
\label{subsec:relation-existing-measures}
The absorption capacities are fixed-reference quantities. They do not
measure the entanglement of a state by optimizing over all directions.
Instead, they fix the noise state and the reference state, and ask where
that particular mixing line crosses the entanglement or teleportation
threshold. In this sense they are closer to relative robustness than to
state measures such as concurrence or negativity.

The closest standard object is the Vidal--Tarrach robustness of
entanglement relative to a fixed separable state~\cite{VidalTarrach1999}.
If $R(\rho\,\|\,\sigma)$ is the least $t\ge0$ for which
$(\rho+t\sigma)/(1+t)$ is separable, then for the $|\Phi^+\rangle$
reference
\begin{equation}
\EAC(\sigma)=\frac{1}{R(\PhiP\,\|\,\sigma)}.
\label{eq:EAC-relative-robustness}
\end{equation}
The two quantities describe the same mixing line, written in reciprocal
variables. Other robustness variants answer different questions, since
they either fix a different noise state or optimize over allowed noise
states~\cite{Steiner2003}. Here the noise state and the reference are both
part of the question.

\subsection{Decomposition certificates}
\label{subsec:decomposition-certificates}
The reciprocal relation to robustness is conceptually useful, but the
capacity variable is more convenient for decomposition certificates.
Feasible absorption amounts add linearly: each component can absorb its
own reference term, and the convex sum absorbs the weighted sum. Suppose
$\sigma=\sum_i p_i\sigma_i$. If $\eta_i$ is feasible for $\sigma_i$ for
each $i$, then $\sum_i p_i\eta_i$ is feasible for $\sigma$:
\begin{equation}
\sigma^{T_B}+\Bigl(\sum_i p_i\eta_i\Bigr)Q
=
\sum_i p_i\bigl(\sigma_i^{T_B}+\eta_i Q\bigr)\succeq0.
\label{eq:EAC-concavity-certificate}
\end{equation}
Consequently
\begin{equation}
\EAC(\sigma)\ge \sum_i p_i\EAC(\sigma_i),
\label{eq:EAC-concavity}
\end{equation}
with the usual limiting interpretation when a supremum is not attained.
We call this a certificate because the displayed decomposition itself is a
proof that the weighted absorption value is feasible for $\sigma$, even if
it is not the exact capacity.
The fidelity absorption capacity obeys the same certificate because its
condition $f(\rho_\eta)\le\tfrac12$ is an intersection of affine
constraints in $(\sigma,\eta)$ over maximally entangled test states. Hence
\begin{equation}
\EACF(\sigma)\ge \sum_i p_i\,\EACF(\sigma_i).
\label{eq:EACF-concavity}
\end{equation}
A product or $X$ decomposition gives a certified lower bound on both
thresholds. The certificate need not be tight, since the mixture can
absorb more reference than the weighted componentwise amount.

\subsection{Concurrence and negative entanglement measure}
\label{subsec:optimized-margins}
On real $X$ states, the connection with concurrence and the negative
entanglement measure of Zhang \emph{et al.}~\cite{ZhangHanZhang2010} is
visible at the level of the two Bell blocks.
Define the four signed quantities
$$
\Delta_{\Phi^\pm}=2(\pm u-\sqrt{bc}),\qquad
\Delta_{\Psi^\pm}=2(\pm v-\sqrt{ad}).
$$
The usual concurrence keeps the largest positive one,
$$
C(\rho_X)
=\max\{0,\Delta_{\Phi^+},\Delta_{\Phi^-},\Delta_{\Psi^+},\Delta_{\Psi^-}\}
=\max\{0,2(|u|-\sqrt{bc}),2(|v|-\sqrt{ad})\}.
$$
For separable $\sigma_X$ all four signed quantities are nonpositive. The
matched-frame capacities are the corresponding positive values
$-\Delta_\beta$,
$$
\EAC(\sigma_X;\Phi^\pm)=-\Delta_{\Phi^\pm},\qquad
\EAC(\sigma_X;\Psi^\pm)=-\Delta_{\Psi^\pm}.
$$
The negative entanglement measure minimizes these positive values over
Bell frames:
\begin{equation}
|N(\sigma_X)|
=\min_{\beta\in\{\Phi^\pm,\Psi^\pm\}}\EAC(\sigma_X;\beta)
=\min\bigl\{\,2(\sqrt{bc}-|u|),\ 2(\sqrt{ad}-|v|)\,\bigr\}.
\label{eq:nem-min}
\end{equation}
Concurrence uses the largest positive value among the four
$\Delta_\beta$. On the separable side, the negative entanglement measure
uses the smallest positive value among the four $-\Delta_\beta$. By
contrast, the entanglement absorption
capacity keeps the Bell frame fixed because the Bell-mixing threshold
fixes the resource state. For complex $u,v$, it also retains the phase
dependence visible in \eqref{eq:EAC-X-PhiP}--\eqref{eq:EAC-X-PsiM}.

\subsection{Open directions}
\paragraph*{Beyond separable noise states.}
In this paper the noise state is separable, so $\eta=0$ is feasible in
\eqref{eq:EAC-def} and the capacity measures how much reference can be
added before separability is lost. If the noise state is already
entangled, the same line no longer starts from the separable set, so
$\EAC$ is not an entanglement-threshold quantity without a new
convention. Whether a useful fixed-reference analogue can be defined on
the entangled side remains open.

\paragraph*{Beyond the solved separable families.}
The threshold identity \eqref{eq:mobius} and the variational evaluation
\eqref{eq:EAC-Rayleigh} are universal on the separable set, but the
closed forms in Sections~\ref{sec:eval1}--\ref{sec:trajectories} rely on
product or $X$-state structure. A natural next step is to identify other
separable families where the positivity or teleportation condition
reduces to a comparable closed form. For a generic separable state,
decomposition certificates from \eqref{eq:EAC-concavity} and
\eqref{eq:EACF-concavity} give analytic lower bounds.

\paragraph*{Optimizing decomposition certificates.}
Given a solved class $\mathcal C$, one can optimize the certificate over
all decompositions of $\sigma$ into components from that class:
$$
L_{\mathcal C}(\sigma)
:=
\sup_{\sigma=\sum_i p_i\sigma_i,\ \sigma_i\in\mathcal C}
\sum_i p_i\,\EAC(\sigma_i),
$$
and analogously for $\EACF$. The equality problem is to identify
conditions under which $\sigma$ has an optimal decomposition whose
components saturate the same capacity constraint, so that
$L_{\mathcal C}(\sigma)=\EAC(\sigma)$. Another concrete question is how
tight the certificate becomes when $\mathcal C$ is the product family,
the separable $X$ family, or their union.

\paragraph*{Higher dimensions and multipartite systems.}
The fixed-reference mixing question can also be asked in $d\times d$,
with $|\Phi_d\rangle=\sum_i|ii\rangle/\sqrt d$, but the capacity must
then be defined directly by separability:
$$
\EAC^{(d)}(\sigma;\Phi_d)
:=\sup\left\{\eta\ge0:
\frac{\sigma+\eta\Phi_d}{1+\eta}\in\Sep_{d,d}\right\}.
$$
On the isotropic line with maximally mixed noise, symmetry gives
$\EAC^{(d)}(I_{d^2}/d^2;\Phi_d)=1/d$, equivalently
$\lambda_*=1/(d+1)$. Beyond such special symmetric families, however,
the two-qubit method does not carry over directly. For $d\ge3$,
positivity of the partial transpose is not equivalent to separability,
and the product impurity law also fails: $\EAC^{(d)}(A\otimes B;\Phi_d)$ is
not determined by the local purities alone. Higher-dimensional and
multipartite versions, such as GHZ-mixing thresholds, therefore require
new ideas beyond the special symmetric cases above.

\subsection{Data availability}
This work contains no measured data. Every figure is a deterministic plot of
the closed-form expressions in the text, evaluated at fixed illustrative
parameters. Figure~\ref{fig:product-threshold} plots the product-noise closed
form over the full range of local Bloch radii and needs no further input.
Figure~\ref{fig:xstate-AD} uses two separable $X$ noise states, each specified
by its populations $(a,b,c,d)$ and the two coherences $u$ and $v$ of the $X$
matrix: state~A has $(a,b,c,d)=(0.30,0.20,0.20,0.30)$ with $u=v=0.05$ (equal
middle populations $b=c$), and state~B has $(a,b,c,d)=(0.15,0.05,0.30,0.50)$
with $u=0.10$, $v=0.05$. Figure~\ref{fig:realistic-noise} evolves the
separable $X$ noise state $(a,b,c,d)=(0.20,0.10,0.40,0.30)$ with $u=v=0.05$
under $T_1=50\,\mu$s and $T_2=70\,\mu$s. These values are fixed choices, not
sampled at random, and reproduce the plotted curves exactly through the
closed forms cited in each caption.

\section*{Author contributions}
Xuan Du Trinh is the sole author, responsible for all aspects of the work,
including its conception, the analytical results with their proofs and figures,
and the writing of the manuscript. Anthropic's Claude was
used as an assistant to clean up the LaTeX syntax, especially for tables, and
to produce figures. The author verified all
content and takes full responsibility for it.

\section*{Acknowledgements}
The author thanks Khanh Gia Nguyen for enjoyable discussions.

\bibliographystyle{unsrtnat}
\bibliography{refs_Quantum}

\appendix

\section*{Overview of the appendices}

The appendices contain the detailed technical proofs that support the results
stated in the main text.
Appendix~\ref{sec:supp-proof} proves the product-state law and its
pure-reference extension. Appendix~\ref{sec:supp-sandwich-notes} gives
the product-noise teleportation calculation for Bell and pure references,
including the two orientation examples used in the main text.
Appendix~\ref{sec:supp-channel-calculus} gives the product-family
channel calculations and the unital monotonicity statement. Appendix~\ref{subsec:evaluating-capacities}
gives the general evaluation formulas used outside the product and $X$
closed forms. Appendix~\ref{sec:supp-extra-closed-forms}
collects additional closed forms.

\section{Proof of the product-state law}
\label{sec:supp-proof}

This appendix proves Lemma~\ref{lem:bell-block}, the product-state inequality
used in the main text. For any single-qubit density matrices $A,B$,
\begin{equation}\label{eq:supp-bellblock}
M := A\otimes B^T+2\sqrt{\det A\,\det B}\,Q\,\succeq\,0,
\end{equation}
and the coefficient $2\sqrt{\det A\,\det B}$ is sharp for that fixed pair
$(A,B)$. Here $Q=(|\PhiP\rangle\!\langle\PhiP|)^{T_B}=\Swap/2$ and
$\Swap=\sum_{ij}|ij\rangle\!\langle ji|$ is the swap operator on
$\mathbb C^2\otimes\mathbb C^2$. Writing $t:=\sqrt{\det A\,\det B}$,
\eqref{eq:supp-bellblock} is equivalent to $M=A\otimes B^T+t\Swap\succeq 0$.
As a standalone matrix inequality, the transpose is inessential. Replacing
$B^T$ by $B$ gives the same statement, since transposition preserves
positivity and determinant. We keep $B^T$ to match the PPT calculation, where
$(A\otimes B)^{T_B}=A\otimes B^T$.
The proof uses the $U\otimes U$ symmetry of $\Swap$ to reduce to diagonal
$A$, then checks positivity by a Schur complement. The same calculation shows
sharpness. Reading off the largest admissible coefficient gives the product
law
$$
\EAC(A\otimes B)=2\sqrt{\det A\det B}
=\sqrt{(1-\Tr A^2)(1-\Tr B^2)}.
$$

\subsection{Proof of the Bell-block inequality}

\paragraph*{Rank-deficient case.}
If $\det A=0$ or $\det B=0$, the claimed coefficient is zero and
$A\otimes B^T\succeq0$. It remains to show that no positive coefficient
can be added. Write $M(t)=A\otimes B^T+t\Swap$. Suppose first that
$\det A=0$ and diagonalize $A=\operatorname{diag}(a,0)$ with $a>0$. Write
$B^T=(b_{ij})_{i,j=0}^1$. In the computational basis $00,01,10,11$,
$$
M(t)=
\begin{pmatrix}
a b_{00}+t & a b_{01} & 0 & 0\\
a b_{10} & a b_{11} & t & 0\\
0 & t & 0 & 0\\
0 & 0 & 0 & t
\end{pmatrix},
$$
so the principal submatrix on $\{|01\rangle,|10\rangle\}$ has determinant
$-t^2<0$ for every $t>0$. Hence $M(t)$ is not positive semidefinite for
any $t>0$. The case $\det B=0$ is analogous, with the roles of the two
tensor factors interchanged. Therefore the largest admissible coefficient
is zero, as claimed.

\paragraph*{Reduction to diagonal $A$.}
We exploit the swap operator's $U\otimes U$ invariance to diagonalize
$A$ for free, without disturbing the $Q$ term. Specifically,
$\Swap$ commutes with $U\otimes U$ for every unitary $U$. Choosing $U$ so that
$UAU^\dagger=\operatorname{diag}(a,d)$ with $ad=\det A$, the conjugation
$$
(U\otimes U)\,M\,(U\otimes U)^\dagger
=\operatorname{diag}(a,d)\otimes\bigl(UB^TU^\dagger\bigr)+t\Swap
$$
preserves both positivity and the swap term.  The new ``$B^T$'' factor
$B':=UB^TU^\dagger$ is again a positive Hermitian matrix with
$\det B'=\det B$ (similarity preserves the determinant). We parameterize it as
\begin{equation*}
B'=\begin{pmatrix} p & \bar q\\ q & r\end{pmatrix},
\end{equation*}
with $p,r>0$, $p+r=1$, and $pr-|q|^2=\det B$.
It therefore suffices to prove $M\succeq 0$ for
$A=\operatorname{diag}(a,d)$ and $B^T$ replaced by $B'$ of the above form.

\paragraph*{Matrix form.}
In the computational basis ordered as $00,01,10,11$,
the nonzero entries of $\Swap$ are
$\Swap_{(00)(00)}=\Swap_{(11)(11)}=1$ and
$\Swap_{(01)(10)}=\Swap_{(10)(01)}=1$. Since
$A\otimes B'=\operatorname{diag}(a,d)\otimes B'$, we get
\begin{equation}\label{eq:supp-M}
M=\begin{pmatrix}
ap+t & a\bar q & 0 & 0\\
aq   & ar     & t & 0\\
0    & t      & dp & d\bar q\\
0    & 0      & dq & dr+t
\end{pmatrix}.
\end{equation}
The four added $t$ terms are exactly the nonzero entries of $t\Swap$.

\paragraph*{Block-permuted decomposition.}
Reorder the basis to $\{|00\rangle,|11\rangle,|01\rangle,|10\rangle\}$ by the
permutation $\Pi$ mapping $(v_{00},v_{01},v_{10},v_{11})^T$ to
$(v_{00},v_{11},v_{01},v_{10})^T$. In this order, all nonzero entries of
$t\Swap$ lie inside the diagonal blocks. Explicitly,
$$
M'=\Pi M \Pi^T=
\left(\begin{array}{cc|cc}
ap+t & 0 & a\bar q & 0\\
0 & dr+t & 0 & dq\\
\hline
aq & 0 & ar & t\\
0 & d\bar q & t & dp
\end{array}\right)
=\begin{pmatrix} P & R\\ R^\dagger & W\end{pmatrix},
$$
where the vertical and horizontal lines only partition the rows and columns
after the first two basis vectors. With this partition, $M'$ is a $2\times2$ block matrix
with
$P=\operatorname{diag}(ap+t,\,dr+t)$,
$R=\operatorname{diag}(a\bar q,\,dq)$, and
$W=\bigl(\begin{smallmatrix} ar & t\\ t & dp\end{smallmatrix}\bigr)$. The
diagonal entries of $P$ are positive, so the Schur complement criterion
applies. Since $P$ is diagonal,
$P^{-1}=\operatorname{diag}\!\bigl(\tfrac1{ap+t},\,\tfrac1{dr+t}\bigr)$.

\paragraph*{Schur complement.}
A direct computation gives
$$
R^\dagger P^{-1}R
=\operatorname{diag}\!\Bigl(\tfrac{a^2|q|^2}{ap+t},\,\tfrac{d^2|q|^2}{dr+t}\Bigr),
$$
so the Schur complement is
$$
W-R^\dagger P^{-1}R
=\begin{pmatrix}
\dfrac{a^2\det B+art}{ap+t} & t\\[6pt]
t & \dfrac{d^2\det B+dpt}{dr+t}
\end{pmatrix},
$$
where we used $ar(ap+t)-a^2|q|^2=a^2(pr-|q|^2)+art=a^2\det B+art$, and
analogously for the second diagonal entry.

\paragraph*{Determinant at the critical $t$.}
Let $\Delta=\det B$ and
$$
t_*=\sqrt{ad\Delta}=\sqrt{\det A\,\det B}.
$$
The determinant of the Schur complement is
$$
\frac{ad(a\Delta+rt)(d\Delta+pt)}{(ap+t)(dr+t)}-t^2.
$$
Since $(ap+t)(dr+t)>0$, it is enough to evaluate the numerator
$$
N(t)=ad(a\Delta+rt)(d\Delta+pt)-t^2(ap+t)(dr+t).
$$
Using $t_*^2=ad\Delta$, direct expansion gives the factorization
\begin{equation}\label{eq:supp-Nfactor}
N(t)=\bigl(t_*^2-t^2\bigr)\bigl[t_*^2+t^2+(ap+dr)t\bigr].
\end{equation}
In particular $N(t_*)=0$, so the determinant of the Schur complement
vanishes at $t=t_*$.

At $t=t_*$, the Schur complement is therefore a $2\times 2$ Hermitian matrix
with positive trace and zero determinant, so it is positive semidefinite of
rank one. Combined with $P\succ 0$, the Schur-complement criterion yields
$M(t_*)\succeq 0$.

\paragraph*{Sharpness of the inequality.}
It remains to show that no larger $t$ is feasible. In
\eqref{eq:supp-Nfactor}, the second factor is strictly positive for all
$t>0$ since $a,d,p,r>0$. Hence $N(t)<0$ for
every $t>t_*$. The Schur complement then has negative determinant and is
not positive semidefinite, so $M(t)$ is not positive semidefinite. Therefore
the largest admissible coefficient of $\Swap$ is $t_*$. Since $Q=\Swap/2$,
the largest admissible coefficient of $Q$ is $2t_*$, giving
$$
\EAC(A\otimes B)\,=\,2\sqrt{\det A\,\det B}
$$
for the fixed product state $A\otimes B$.  $\square$

\subsection{Filtered Bell-block inequality for arbitrary pure-state references}
\label{subsec:supp-filtered-block}

This subsection proves Lemma~\ref{lem:filtered-bell-block}, the
pure-reference extension of the product-state law.

Let $|\psi\rangle=\alpha|00\rangle+\beta|11\rangle$ with
$\alpha,\beta>0$ and $\alpha^2+\beta^2=1$, so the concurrence is
$C_\psi=2\alpha\beta$, and let $R=\mathrm{diag}(\alpha,\beta)$. Using
$|\psi\rangle=\sqrt2(I\otimes R)|\PhiP\rangle$ and $R^T=R$,
\begin{equation}
Q_\psi
:=(|\psi\rangle\!\langle\psi|)^{T_B}
=2(I\otimes R)\,Q\,(I\otimes R),
\label{eq:supp-Qpsi-via-R}
\end{equation}
where $Q=(|\PhiP\rangle\!\langle\PhiP|)^{T_B}$.

Set $\tilde B := R^{-1}BR^{-1}$. Since $R$ is real diagonal with
positive entries, $\tilde B\succeq 0$ and
$$
\det\tilde B = \frac{\det B}{(\alpha\beta)^2} = \frac{\det B}{(C_\psi/2)^2}.
$$
Applying the product-state law (Lemma~\ref{lem:bell-block} of
Section~\ref{sec:eval1}) to the pair $(A,\tilde B)$,
\begin{equation}
A\otimes\tilde B^T + 2\sqrt{\det A\,\det\tilde B}\,Q\,\succeq\,0.
\label{eq:supp-filtered-step1}
\end{equation}
Congruence by the Hermitian matrix $I\otimes R$ preserves positive
semidefiniteness: for any $X\succeq0$, writing $w=(I\otimes R)v$ gives
$v^\dagger(I\otimes R)X(I\otimes R)v=w^\dagger X w\ge0$. The two terms of
\eqref{eq:supp-filtered-step1} transform as
$$
\begin{aligned}
(I\otimes R)(A\otimes\tilde B^T)(I\otimes R)
&=A\otimes(R\tilde B^T R)=A\otimes B^T,\\[3pt]
2\sqrt{\det A\det\tilde B}\,(I\otimes R)\,Q\,(I\otimes R)
&=\frac{2\sqrt{\det A\det B}}{C_\psi}\,Q_\psi,
\end{aligned}
$$
using $\tilde B^T=R^{-1}B^TR^{-1}$ in the first and
\eqref{eq:supp-Qpsi-via-R} with $\sqrt{\det\tilde B}=2\sqrt{\det B}/C_\psi$
in the second. Conjugating \eqref{eq:supp-filtered-step1} therefore gives
\begin{equation}
A\otimes B^T + \frac{2\sqrt{\det A\det B}}{C_\psi}\,Q_\psi
\,\succeq\,0,
\label{eq:supp-filtered-final}
\end{equation}
the filtered Bell-block inequality \eqref{eq:filtered-bellblock} of the main
text.

\paragraph*{Sharpness of the inequality.}
The coefficient $2\sqrt{\det A\det B}/C_\psi$ is sharp. Suppose that
$A\otimes B^T+cQ_\psi\succeq0$ for some
$c>2\sqrt{\det A\det B}/C_\psi$. Applying the congruence
$I\otimes R^{-1}$ gives
$$
A\otimes\tilde B^T+2c\,Q\succeq0.
$$
However, Lemma~\ref{lem:bell-block} shows that the largest admissible
coefficient of $Q$ for the pair $(A,\tilde B)$ is
$2\sqrt{\det A\,\det\tilde B}$, while
$$
2c>\frac{4\sqrt{\det A\det B}}{C_\psi}
=2\sqrt{\det A\,\det\tilde B},
$$
which is a contradiction. When $C_\psi=0$, the reference is a product
state and the Bell-mixing line has no entanglement crossing.

\section{Teleportation-threshold derivations}
\label{sec:supp-sandwich-notes}

This appendix gives the product-noise teleportation calculation used in
the main text. For a Bell reference, the Horodecki formula gives a
quadratic equation for the Bell-mixing parameter. For a general pure-state
reference, the same calculation gives a quartic equation. The determinant
of the correlation tensor selects the branch of the Horodecki formula. The
threshold comes from the nonpositive-determinant branch, while the
positive-determinant branch gives no crossing. When $C_\psi=1$, the quartic
equation factors and recovers the Bell-reference quadratic equation. The
full separable $X$-state formula and the threshold gap are derived in
Section~\ref{subsec:xstate-fidelity}.

\paragraph*{Teleportation threshold (Horodecki formula).}
For any two-qubit state $\rho$ with fully entangled fraction
$$
f(\rho):=\max_{U_A,U_B}
\langle\PhiP|(U_A\otimes U_B)\rho(U_A\otimes U_B)^\dagger|\PhiP\rangle,
$$
the maximal teleportation fidelity that $\rho$ can provide is~\cite{HorodeckiTeleportationBell1996,HHH96tele}
\begin{equation}
F(\rho)=\frac{2f(\rho)+1}{3}.
\label{eq:supp-horodecki-F}
\end{equation}
Equivalently, if $T_{ij}=\Tr(\rho\,\sigma_i\otimes\sigma_j)$, with $\sigma_1,\sigma_2,\sigma_3$ the Pauli matrices, is the
correlation tensor and $s_1\ge s_2\ge s_3\ge 0$ are the singular
values of $T$, then
\begin{equation}
f(\rho)=
\begin{cases}
\dfrac{1+s_1+s_2+s_3}{4}, & \det T<0,\\[6pt]
\dfrac{1+s_1+s_2-s_3}{4}, & \det T\ge 0.
\end{cases}
\label{eq:supp-horodecki-f-singvals}
\end{equation}
At $\det T=0$ one has $s_3=0$, so the two branches agree.
The classical teleportation bound $F=2/3$ is therefore the level set
$f(\rho)=1/2$. The Bell-mixing teleportation threshold is
\begin{equation}
\lambda_F(\sigma):=\inf\bigl\{\lambda\in[0,1]:\ f(\rho_\lambda)> \tfrac12\bigr\},
\label{eq:supp-lambdaF-def}
\end{equation}
with $\rho_\lambda=\lambda|\PhiP\rangle\!\langle\PhiP|+(1-\lambda)\sigma$ as in
the main text.
The Bell-state correlation tensor is $T_{\PhiP}=\mathrm{diag}(1,-1,1)$, so
$T_{\rho_\lambda}=\lambda T_{\PhiP}+(1-\lambda)T_\sigma$ is affine in
$\lambda$, but $f(\rho_\lambda)$ is in general nonlinear because it
depends on the singular-value data of $T_{\rho_\lambda}$ together with
the sign of $\det T_{\rho_\lambda}$.

\paragraph*{Universal inequality.}
Let $|\beta\rangle$ be any maximally entangled two-qubit vector. There
exist local unitaries $U,V$ such that
$|\beta\rangle=(U\otimes V)|\PhiP\rangle$. Hence the largest product
overlap with $|\beta\rangle$ is the same as the largest product overlap
with $|\PhiP\rangle$. For normalized one-qubit vectors
$|x\rangle=x_0|0\rangle+x_1|1\rangle$ and
$|y\rangle=y_0|0\rangle+y_1|1\rangle$,
$$
|\langle\PhiP|x\otimes y\rangle|^2
=\frac12\,|\bar x_0\bar y_0+\bar x_1\bar y_1|^2
\le \frac12,
$$
by Cauchy-Schwarz. Therefore
$|\langle\beta|a\otimes b\rangle|^2\le1/2$ for every product vector
$|a\rangle\otimes|b\rangle$.
If $\rho$ is separable, write
$\rho=\sum_k p_k |a_k\rangle\!\langle a_k|\otimes
|b_k\rangle\!\langle b_k|$. Then
$$
\langle\beta|\rho|\beta\rangle
=\sum_k p_k |\langle\beta|a_k\otimes b_k\rangle|^2
\le\frac12 .
$$
Taking the maximum over maximally entangled $|\beta\rangle$ gives
$f(\rho)\le1/2$ on every separable state. Therefore any $\rho$ with
$f(\rho)>1/2$, equivalently $F(\rho)>2/3$, is entangled.
For two qubits, entanglement is equivalent to nonpositive partial
transpose. Along the Bell-mixing line, the partial-transpose crossing is
at $\lambda_*$ by Theorem~\ref{thm:mobius}. Hence the teleportation
threshold cannot be smaller than the entanglement threshold: for every
separable noise state $\sigma$,
\begin{equation}
\lambda_*(\sigma)\le\lambda_F(\sigma).
\label{eq:supp-sandwich}
\end{equation}
The next subsections determine when equality holds in the structured
families used in the main text.

\subsection{Product-family teleportation threshold}
\label{subsec:supp-product-tele}

For product noise $\sigma=A\otimes B$ with
$A=(I+\vec a\!\cdot\!\boldsymbol\sigma)/2$,
$B=(I+\vec b\!\cdot\!\boldsymbol\sigma)/2$, the correlation tensor is the
rank-one matrix $T_\sigma=\vec a\,\vec b^{\,T}$. The Bell state $\PhiP$ has
maximally mixed one-qubit marginals, so it has no one-qubit Pauli terms.
Along the mixing line
$\rho_\eta=(\sigma+\eta\,\PhiP)/(1+\eta)$ in the variable
$\eta=\lambda/(1-\lambda)$,
$$
T_{\rho_\eta}=\frac{\eta D_{\PhiP}+\vec a\,\vec b^{\,T}}{1+\eta},
\qquad D_{\PhiP}=\operatorname{diag}(1,-1,1).
$$
Set
$$
M_\eta:=\eta D_{\PhiP}+\vec a\,\vec b^{\,T}.
$$
The formula \eqref{eq:supp-horodecki-f-singvals} needs the singular values
of $T_{\rho_\eta}=M_\eta/(1+\eta)$, that is, those of $M_\eta$ scaled by
$1/(1+\eta)$. Left-multiplying by the orthogonal matrix $D_{\PhiP}$
preserves singular values, so with $\vec c:=D_{\PhiP}\vec a$ we analyze
$$
N_\eta:=D_{\PhiP}M_\eta=\eta I+\vec c\,\vec b^{\,T}.
$$
Choose a unit vector $w$ orthogonal to both $\vec b$ and $\vec c$. Such a
vector exists because two vectors in $\mathbb R^3$ span at most a plane. Then
$N_\eta w=\eta w$ and $N_\eta^T w=\eta w$. Hence
$N_\eta^TN_\eta w=\eta^2w$, so one singular value of $N_\eta$, and therefore
of $M_\eta$, is $\eta$ for $\eta\ge0$. Define
$$
h:=|\vec a|\,|\vec b|,
\qquad
m_{\PhiP}:=\vec a^{\,T}D_{\PhiP}\vec b=\vec c\!\cdot\!\vec b .
$$
Let $s_+$ and $s_-$ be the other two singular values of $M_\eta$. These
symbols do not impose an ordering. When an ordered list is needed, we write
$s_1\ge s_2\ge s_3$. The squared Frobenius norm is the sum of the squared
singular values:
$$
\|N_\eta\|_F^2=\eta^2+s_+^2+s_-^2.
$$
A direct calculation gives
$$
\|N_\eta\|_F^2
=3\eta^2+2\eta\,\vec c\!\cdot\!\vec b+|\vec c|^2|\vec b|^2
=3\eta^2+2\eta m_{\PhiP}+h^2,
$$
and therefore
$$
s_+^2+s_-^2=2\eta^2+2\eta m_{\PhiP}+h^2.
$$
The product $s_+s_-$ comes from the determinant. With
$\det D_{\PhiP}=-1$ and $N_\eta=D_{\PhiP}M_\eta$, $\det M_\eta=-\det N_\eta$.
For vectors $u,v\in\mathbb R^3$,
$$
\det(\eta I+uv^T)=\eta^2(\eta+v^Tu).
$$
Applying this identity to $N_\eta=\eta I+\vec c\,\vec b^{\,T}$ gives
$$
\det M_\eta
=-\eta^2(\eta+\vec b^{\,T}\vec c)
=-\eta^2(\eta+m_{\PhiP}).
$$
Therefore the three singular values multiply to
$|\det M_\eta|=\eta^2|\eta+m_{\PhiP}|$. The noise point $\eta=0$
($\rho_0=\sigma$) is separable, hence below threshold. For $\eta>0$,
dividing by the identified value $\eta$,
$$
s_+s_-=\eta|\eta+m_{\PhiP}|.
$$
With $s_+^2+s_-^2=2\eta^2+2\eta m_{\PhiP}+h^2$, the sum
$(s_++s_-)^2=s_+^2+s_-^2+2\eta|\eta+m_{\PhiP}|$ is
$$
(s_++s_-)^2=
\begin{cases}
h^2+4\eta(\eta+m_{\PhiP}), & \eta+m_{\PhiP}\ge0,\\
h^2, & \eta+m_{\PhiP}<0.
\end{cases}
$$
The determinant
$$
\det T_{\rho_\eta}
=\frac{\det M_\eta}{(1+\eta)^3}
=-\frac{\eta^2(\eta+m_{\PhiP})}{(1+\eta)^3}
$$
is negative when $\eta+m_{\PhiP}>0$. In that case
\eqref{eq:supp-horodecki-f-singvals} uses the sum of all three singular
values, and therefore
$$
f(\rho_\eta)=\frac14\left(1+
\frac{\eta+\sqrt{h^2+4\eta(\eta+m_{\PhiP})}}{1+\eta}\right).
$$
The teleportation threshold condition $f(\rho_\eta)=\tfrac12$ is therefore
\begin{equation}
h^2+4\eta(\eta+m_{\PhiP})=1.
\label{eq:supp-product-tele-threshold-eq}
\end{equation}
For a given product noise state, the nonnegative solution of
\eqref{eq:supp-product-tele-threshold-eq} is
\begin{equation}
\eta_F
=\frac{\sqrt{1-h^2+m_{\PhiP}^2}-m_{\PhiP}}{2},
\label{eq:supp-EACF-product}
\end{equation}
We now check that this solution is the threshold. Indeed,
$$
\eta_F+m_{\PhiP}
=\frac{\sqrt{1-h^2+m_{\PhiP}^2}+m_{\PhiP}}{2}\ge0,
$$
because $h\le1$. The inequality is strict unless one is on the pure-product
boundary $h=1$ with $m_{\PhiP}\le0$. Hence the threshold value always lies
in the case $\eta+m_{\PhiP}\ge0$. In the complementary case
$\eta+m_{\PhiP}<0$ the determinant is positive and the Horodecki formula uses
the branch
$$
f(\rho_\eta)=\frac14\left(1+\frac{s_1+s_2-s_3}{1+\eta}\right),
$$
where $s_1\ge s_2\ge s_3$ are the singular values of $M_\eta$. The case
formula above gives
$s_++s_-=h$, and
$s_1+s_2-s_3\le s_1+s_2+s_3=\eta+s_++s_-$, so the
positive-determinant branch obeys
$$
f(\rho_\eta)
\le \frac14\left(1+\frac{\eta+h}{1+\eta}\right)
\le \frac12,
$$
because $h=|\vec a|\,|\vec b|\le1$. Therefore no crossing occurs in the
case $\eta+m_{\PhiP}<0$. This validates
\eqref{eq:supp-product-tele-threshold-eq} as the equation determining the
threshold, and hence
$\EACF(A\otimes B;\PhiP)=\eta_F$. For fixed local radii, the only
orientation-dependent quantity
in this equation is $m_{\PhiP}=\vec a^{\,T}D_{\PhiP}\vec b$. The matrix
$D_{\PhiP}$ is not a multiple of the identity, so $m_{\PhiP}$ is fixed by
the orientation of $A$ and $B$ relative to the $\PhiP$ frame, not by their
orientation to each other alone. With $\lambda_F=\EACF/(1+\EACF)$, the
teleportation threshold therefore depends not only on the local radii but
also on this Bell-frame alignment. By contrast,
$\EAC(A\otimes B)=\sqrt{(1-\Tr A^2)(1-\Tr B^2)}$, which depends only on the
local radii $|\vec a|,|\vec b|$.

For another Bell reference $\beta$, replace $D_{\PhiP}$ by the corresponding
Bell correlation matrix $D_\beta$ and set
$m_\beta:=\vec a^{\,T}D_\beta\vec b$. The same calculation gives
$$
\EACF(A\otimes B;\beta)
=\frac{\sqrt{1-h^2+m_\beta^2}-m_\beta}{2}.
$$
Explicitly,
$$
D_{\PhiP}=\operatorname{diag}(1,-1,1),\quad
D_{\PhiM}=\operatorname{diag}(-1,1,1),
$$
$$
D_{\PsiP}=\operatorname{diag}(1,1,-1),\quad
D_{\PsiM}=\operatorname{diag}(-1,-1,-1).
$$

\paragraph*{Arbitrary pure-state reference.}
For a general pure-state reference the Bell-frame quadratic equation is replaced by
a quartic equation for the crossing. For product noise $\sigma=A\otimes B$ with
$A=(I+\vec a\cdot\vec\sigma)/2$ and
$B=(I+\vec b\cdot\vec\sigma)/2$, we work in the Schmidt frame of
$|\psi\rangle$. In this frame the reference correlation tensor is represented
by the matrix $K_\psi=\operatorname{diag}(C_\psi,-C_\psi,1)$. Along
$\rho_\eta=(\sigma+\eta|\psi\rangle\!\langle\psi|)/(1+\eta)$, the
correlation tensor entering the Horodecki formula is represented by
$$
T_{\rho_\eta}=\frac{M_\psi(\eta)}{L},\qquad
M_\psi(\eta):=\eta K_\psi+\vec a\,\vec b^{\,T},\qquad
L:=1+\eta,
$$
and write
$$
d(\eta):=\det M_\psi(\eta).
$$
The sign of $d(\eta)$ selects the branch of the Horodecki formula. The same
pattern appeared in the Bell-reference calculation above, where
$\det M_\eta=-\eta^2(\eta+m_{\PhiP})$ has its sign set by the factor
$\eta+m_{\PhiP}$, the prefactor $-\eta^2$ being negative. Accordingly $d(\eta)<0$ is the analogue of
$\eta+m_{\PhiP}>0$. The branch $d(\eta)>0$, the analogue of
$\eta+m_{\PhiP}<0$, gives no teleportation crossing. For a general pure reference, the
calculation is more technical, so the two cases must be treated separately.
When $d(\eta)=0$ one singular value vanishes and the two Horodecki formulas
agree, so we include this case with $d(\eta)\le0$ in the final threshold
statement.

\begin{theorem}[Pure-reference product-noise teleportation threshold]
\label{thm:pure-reference-product-teleportation}
For product noise $\sigma=A\otimes B$ and a pure reference $|\psi\rangle$
with $C_\psi>0$, use the Schmidt frame of $|\psi\rangle$ and the notation
above. Define
$$
\begin{gathered}
h:=|\vec a|\,|\vec b|,\qquad
m:=\vec a^{\,T}K_\psi\vec b,\qquad
\mu:=\vec b^{\,T}K_\psi^{-1}\vec a,\\
k_a:=\vec a^{\,T}K_\psi^2\vec a,\qquad
k_b:=\vec b^{\,T}K_\psi^2\vec b,\qquad
w:=|\vec a|^2k_b+|\vec b|^2k_a .
\end{gathered}
$$
Then the fidelity absorption capacity $\EACF(A\otimes B;\psi)$ is obtained
from the nonnegative roots of the quartic equation
\begin{equation}
\boxed{\sum_{k=0}^{4}A_k\,\eta^k=0},
\label{eq:EACF-product-pure-quartic}
\end{equation}
where
$$
\boxed{
\begin{gathered}
A_4=-16C_\psi^2,\qquad
A_3=-16C_\psi^2(1+\mu),\\
A_2=4(1-C_\psi^2)-8(m+C_\psi^2\mu)
-4(1+C_\psi^2)h^2+4w,\\
A_1=4(1-m)(1-h^2),\qquad
A_0=(1-h^2)^2.
\end{gathered}
}
$$
Among the nonnegative roots of \eqref{eq:EACF-product-pure-quartic}, keep the
roots that satisfy $f(\rho_\eta)=1/2$ when substituted into the original
mixing line. The capacity $\EACF(A\otimes B;\psi)$ is the smallest of these
roots. The selected root satisfies $d(\eta)\le0$. When $C_\psi=1$, the equation reduces
to the Bell-reference formula
$$
\EACF(A\otimes B;\PhiP)
=\frac{\sqrt{1-h^2+m_{\PhiP}^2}-m_{\PhiP}}{2}.
$$
\end{theorem}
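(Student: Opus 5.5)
The plan is to repeat the Bell-reference calculation of Appendix~\ref{subsec:supp-product-tele}, but to work with the symmetric functions of the squared singular values of $M_\psi(\eta)=\eta K_\psi+\vec a\,\vec b^{\,T}$ instead of locating one singular value explicitly. For a pure reference we no longer have the singular value $\eta$ that $w\perp\vec b,\vec c$ supplied, because $K_\psi$ is not orthogonal unless $C_\psi=1$. We therefore use three orthogonal invariants of $M:=M_\psi(\eta)$:
\begin{itemize}
\item $S_2:=\|M\|_F^2=\sum_i s_i^2$;
\item $S_4:=\|\operatorname{adj}M\|_F^2=\sum_{i<j}s_i^2s_j^2$;
\item $d(\eta)=\det M$, with $|d|=s_1s_2s_3$.
\end{itemize}
Each is a polynomial in $\eta$. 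First, $S_2=(1+2C_\psi^2)\eta^2+2\eta m+h^2$. Second, the matrix determinant lemma gives
$$d(\eta)=\det(\eta K_\psi)\bigl(1+\vec b^{\,T}(\eta K_\psi)^{-1}\vec a\bigr)=-C_\psi^2\eta^2(\eta+\mu).$$
Third, the rank-one update formula
$$\operatorname{adj}(X+uv^T)=\det X\bigl[(1+v^TX^{-1}u)X^{-1}-X^{-1}uv^TX^{-1}\bigr],\qquad X=\eta K_\psi,$$
expresses $\operatorname{adj}M$ as $\eta^2\operatorname{adj}K_\psi$ plus terms linear in $\eta$. Squaring its Frobenius norm produces the invariants $\mu$, $k_a$, $k_b$, $h$ and hence $w$.

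Next I eliminate the square roots in the threshold condition. By \eqref{eq:supp-horodecki-f-singvals}, $f(\rho_\eta)=\tfrac12$ means $\Sigma=L=1+\eta$, where $\Sigma$ is the branch-appropriate signed sum: $s_1+s_2+s_3$ when $d\le0$, and $s_1+s_2-s_3$ when $d\ge0$. In both cases let $e$ be the corresponding signed second elementary sum. Then $\Sigma^2=S_2+2e$ and $e^2=S_4-2d\,\Sigma$; the sign of $d$ is exactly what makes the two branches give the same identity. Substituting $\Sigma=L$ gives the single polynomial equation
$$(L^2-S_2)^2=4S_4-8\,d\,L,$$
which is valid on both Horodecki branches. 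Expanding with the expressions above and collecting powers of $\eta$ should reproduce $A_0,\dots,A_4$. Two checks on this expansion are that $A_0=(1-h^2)^2$ is just $\eta=0$, and that the leading $-16C_\psi^2$ comes from the $-8dL$ term. So every crossing on either branch is a root of \eqref{eq:EACF-product-pure-quartic}. Squaring can introduce spurious roots, for instance where $\Sigma=-L$ or where $e$ has the wrong sign, which is why the statement keeps only roots that satisfy $f(\rho_\eta)=1/2$ when substituted back.

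To identify $\EACF$ with the smallest admissible root, I use the convexity argument in the proof of Theorem~\ref{thm:mobius}. The set $\{\lambda:f(\rho_\lambda)\le\tfrac12\}$ is an interval $[0,\lambda_F]$, and $f$ is continuous. Hence the first $\eta>0$ with $f=\tfrac12$ is $\EACF$, provided $f$ does not sit at $\tfrac12$ on a whole segment. I would exclude that degenerate case by noting that a convex function equal to $\tfrac12$ on a segment and larger afterwards would force an affine piece, and then handle it separately or by perturbation.

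For $C_\psi=1$ we have $K_\psi$ orthogonal, $\mu=m$ and $k_a=|\vec a|^2$, $k_b=|\vec b|^2$. I expect the quartic to factor as a positive quadratic times $-4\bigl(4\eta^2+4\eta m+h^2-1\bigr)$, recovering \eqref{eq:supp-product-tele-threshold-eq}; I would verify this by direct substitution.

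The main obstacle is the claim that the selected root satisfies $d(\eta)\le0$. Since $d(\eta)=-C_\psi^2\eta^2(\eta+\mu)$, the region $d>0$ is the initial interval $0<\eta<-\mu$, and the claim is that $f(\rho_\eta)<\tfrac12$ there. The Bell-case trick, bounding $s_1+s_2-s_3$ by $\eta+h$, is unavailable. The naive variational bound via $\max_{\det O=-1}\operatorname{tr}(OM)$ gives $\eta(1+2C_\psi)+h$, which is too weak. What I would try first is to use the extra constraint $\eta<-\mu$ inside that variational bound: choose the optimizing $O$ adapted to $\vec a,\vec b$, and show that the contribution $\vec b^{\,T}O\vec a$ is reduced by at least $\eta\cdot2C_\psi$ whenever $\eta+\mu<0$. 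If that fails, the fallback is a continuity argument. On the boundary $\eta=-\mu$ the two branches agree, so one would show that the quartic has no admissible root in $(0,-\mu)$ by checking the sign of the left side minus the right side of the squared identity there. Failing both, the theorem as stated still holds if one simply selects roots by substitution, and the $d\le0$ property is then checked on the selected root.
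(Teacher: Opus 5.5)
\textbf{The route to the quartic is sound and cleaner than the paper's, but one clause of the theorem is left unproved.}

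Your route to the quartic works. In one respect it is sharper than the paper's. With the signed data $(\Sigma,e,-d)$, the relations $\Sigma^2=S_2+2e$ and $e^2=S_4-2d\Sigma$ hold on both Horodecki branches. So every crossing is a root of \eqref{eq:EACF-product-pure-quartic} without any branch analysis. The paper derives the equation only on the $d\le0$ side and needs Proposition~\ref{prop:threshold-branch} to place the threshold there.

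The same identity settles your degenerate-segment worry. If $f(\rho_\eta)\equiv\tfrac12$ on an interval, the quartic would vanish identically, which is impossible because $A_4=-16C_\psi^2\neq0$. (A small correction to your sanity check: the $\eta^4$ coefficient gets $-8C_\psi^2$ from $8dL$ and another $-8C_\psi^2$ from $(L^2-S_2)^2-4S_4$, after the $4C_\psi^4$ terms cancel. It does not come from $dL$ alone.)

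\textbf{The gap.} The statement includes the clause ``the selected root satisfies $d(\eta)\le0$,'' and you leave it open. Your fallback of checking it on the selected root is an instance-by-instance verification, not a proof. Neither of your two strategies contains a mechanism that would close it. Two ideas are missing.

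\emph{Reduce to one point.} You do not need to control $f$ on all of $(0,-\mu)$. Set $\Delta(\eta)=\Theta(\eta)-(1+\eta)$ with $\Theta(\eta)=\max_{R\in O(3),\,\det R=-1}\Tr(M_\psi(\eta)R)$. This is convex in $\eta$ as a maximum of affine functions, and $\Delta(0)=h-1\le0$. So it suffices to prove $\Delta(\eta_0)\le0$ at the single point $\eta_0=-\mu$. Your convexity of $f$ in $\lambda$ would serve equally well for this step.

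\emph{Compute $s_1+s_2$ exactly at that point.} At $\eta_0$ one has $d=0$, so $s_3=0$ and $\Theta=s_1+s_2$. Moreover $s_1^2+s_2^2=\|M_\psi(\eta_0)\|_F^2$ and $s_1s_2=\|\operatorname{adj}M_\psi(\eta_0)\|_F$.
\begin{itemize}
\item Your own rank-one update formula gives this adjugate directly, because the factor $1+\mu/\eta$ vanishes at $\eta_0$. The result is $\operatorname{adj}M_\psi(\eta_0)=C_\psi^2\eta_0\,(K_\psi^{-1}\vec a)(K_\psi^{-1}\vec b)^T$, which is rank one. Hence $s_1s_2=C_\psi^2\eta_0|K_\psi^{-1}\vec a|\,|K_\psi^{-1}\vec b|$.
\item Using $K_\psi-C_\psi^2K_\psi^{-1}=\operatorname{diag}(0,0,1-C_\psi^2)$, one gets $(s_1+s_2)^2-(1+\eta_0)^2=2\eta_0(T-1)-(1-h^2)$, where $T=(1-C_\psi^2)a_zb_z+C_\psi^2|K_\psi^{-1}\vec a|\,|K_\psi^{-1}\vec b|$.
\item The bound $C_\psi^2|K_\psi^{-1}\vec a|^2=|\vec a|^2-(1-C_\psi^2)a_z^2\le1-(1-C_\psi^2)a_z^2$ holds, and likewise for $\vec b$. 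Together with AM--GM this gives $T\le1-\tfrac{1-C_\psi^2}{2}(a_z-b_z)^2\le1$.
\end{itemize}

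Hence $f\le\tfrac12$ on $[0,\eta_0]$, so $\EACF\ge\eta_0$ and $d(\EACF)\le0$. This is exactly Lemma~\ref{lem:singular-sum-bound} combined with Proposition~\ref{prop:threshold-branch}.
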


The calculation is organized as follows.
\begin{enumerate}
\item On the $d(\eta)<0$ branch, derive
Eq.~\eqref{eq:EACF-product-pure-algebraic} for $\eta$ from the condition
$f(\rho_\eta)=1/2$, then substitute the explicit polynomials $\nu_1$,
$\nu_2$, and $d(\eta)$ in $\eta$ to obtain the
quartic equation~\eqref{eq:EACF-product-pure-quartic}.
\item Check that $C_\psi=1$ recovers the Bell-reference formula.
\item Show that $d(\eta)>0$ gives no teleportation threshold.
\end{enumerate}
On the $d(\eta)<0$ branch
of \eqref{eq:supp-horodecki-f-singvals} the fully entangled fraction is
$f(\rho_\eta)=\tfrac14\bigl(1+(s_1+s_2+s_3)/L\bigr)$, with
$s_1\ge s_2\ge s_3\ge0$ the singular values of $M_\psi$, so
$f(\rho_\eta)=\tfrac12$ is the sum-of-singular-values condition
$$
s_1+s_2+s_3=L.
$$

We now rewrite this condition using the following quantities from
$S:=M_\psi M_\psi^T$ (the entries of $M_\psi$ are real, so
$M_\psi^T=M_\psi^\dagger$). By definition of the singular values, the
eigenvalues of $S$ are $s_1^2,s_2^2,s_3^2$, so
$$
\nu_1:=\operatorname{Tr}S=\sum_is_i^2,\quad
\nu_2:=\tfrac12\bigl[(\operatorname{Tr}S)^2-\operatorname{Tr}S^2\bigr]
=\sum_{i<j}s_i^2s_j^2,\quad
\det S=d(\eta)^2.
$$
Let $p_1=\sum_is_i$, $p_2=\sum_{i<j}s_is_j$, and
$p_3=s_1s_2s_3$ be the elementary symmetric polynomials of
the singular values. On the $d(\eta)<0$ branch, $|\det M_\psi|=-d(\eta)$, so they
satisfy
\begin{subequations}\label{eq:sym-power-relations}
\begin{align}
p_1^2&=\nu_1+2p_2,\label{eq:sym-rel-a}\\
p_2^2&=\nu_2+2p_1p_3,\label{eq:sym-rel-b}\\
p_3&=|\det M_\psi|=-d(\eta),\label{eq:sym-rel-c}
\end{align}
\end{subequations}
This is the Horodecki branch where the formula uses
$s_1+s_2+s_3$, so the crossing condition is $p_1=L$. The
calculation below derives the equation for possible crossings in this case. We
discuss at the end why the $d(\eta)>0$ branch does not give this crossing
condition. Equation~\eqref{eq:sym-rel-a} gives
$p_2=(L^2-\nu_1)/2$, and substituting into \eqref{eq:sym-rel-b} with
$p_3=-d(\eta)$ eliminates the singular values from the crossing condition,
\begin{equation}
(L^2-\nu_1)^2=4\nu_2-8L\,d(\eta).
\label{eq:EACF-product-pure-algebraic}
\end{equation}
This equation uses only the Frobenius norm
$\nu_1=\|M_\psi\|_F^2$, the quantity
$\nu_2=\sum_{i<j}s_i^2s_j^2$, and the determinant $d(\eta)$. In terms of the
quantities $h$, $m$, $\mu$, $k_a$, $k_b$, and $w$ defined in
Theorem~\ref{thm:pure-reference-product-teleportation}, and with
$L=1+\eta$, the three functions $\nu_1$, $\nu_2$, and $d$ are explicit
polynomials in $\eta$, which we evaluate below. Substituting them into
\eqref{eq:EACF-product-pure-algebraic} gives the quartic equation
\eqref{eq:EACF-product-pure-quartic}, with the coefficients stated in
Theorem~\ref{thm:pure-reference-product-teleportation}. Only the four
numbers $h$, $m$, $\mu$, $w$ enter, together with the reference
concurrence $C_\psi$. We now verify those coefficients. Define
$$
\tau:=\operatorname{Tr}K_\psi^2=1+2C_\psi^2,\qquad
\kappa:=\tfrac12\bigl[(\operatorname{Tr}K_\psi^2)^2-\operatorname{Tr}K_\psi^4\bigr]
=C_\psi^2(C_\psi^2+2),
$$
and $m_3:=\vec a^{\,T}K_\psi^3\vec b$. From
$$
S=M_\psi M_\psi^T
=\eta^2K_\psi^2+\eta\bigl(K_\psi\vec b\,\vec a^{\,T}
+\vec a\,\vec b^{\,T}K_\psi\bigr)+|\vec b|^2\vec a\,\vec a^{\,T},
$$
the trace is
$$
\begin{aligned}
\nu_1=\operatorname{Tr}S
&=\eta^2\operatorname{Tr}K_\psi^2
+\eta\operatorname{Tr}\bigl(K_\psi\vec b\,\vec a^{\,T}
+\vec a\,\vec b^{\,T}K_\psi\bigr)
+|\vec b|^2\operatorname{Tr}(\vec a\,\vec a^{\,T})\\
&=\tau\eta^2+2\eta\,\vec a^{\,T}K_\psi\vec b+|\vec a|^2|\vec b|^2\\
&=\tau\eta^2+2m\eta+h^2 .
\end{aligned}
$$

Similarly, since $\vec a\,\vec b^{\,T}$ has rank one, the matrix determinant
lemma $\det(A+\vec u\,\vec v^{\,T})=\det A\,(1+\vec v^{\,T}A^{-1}\vec u)$
gives, for $C_\psi>0$,
$$
\begin{aligned}
d(\eta)=\det(\eta K_\psi+\vec a\,\vec b^{\,T})
&=\det(\eta K_\psi)\bigl(1+\vec b^{\,T}(\eta K_\psi)^{-1}\vec a\bigr)\\
&=\eta^3\det K_\psi\bigl(1+\eta^{-1}\vec b^{\,T}K_\psi^{-1}\vec a\bigr)\\
&=-C_\psi^2\eta^2(\eta+\mu).
\end{aligned}
$$

It remains to compute $\nu_2=\tfrac12(\nu_1^2-\operatorname{Tr}S^2)$.
Let $\vec x:=K_\psi\vec b$. Then the middle term in $S$ is
$\eta(\vec x\,\vec a^{\,T}+\vec a\,\vec x^{\,T})$. Using
$\operatorname{Tr}(uv^T)=v^Tu$ and
$\operatorname{Tr}(uv^Txy^T)=(v^Tx)(y^Tu)$, the terms in
$\operatorname{Tr}S^2$ are
$$
\begin{aligned}
\operatorname{Tr}(\eta^2K_\psi^2)^2
&=\operatorname{Tr}K_\psi^4\,\eta^4,\\
2\operatorname{Tr}\!\left[
\eta^2K_\psi^2\cdot\eta(\vec x\,\vec a^{\,T}+\vec a\,\vec x^{\,T})
\right]
&=4m_3\eta^3,\\
\operatorname{Tr}\!\left[\eta^2(\vec x\,\vec a^{\,T}+\vec a\,\vec x^{\,T})^2\right]
&=2(m^2+|\vec a|^2k_b)\eta^2,\\
2\operatorname{Tr}\!\left[
\eta^2K_\psi^2\cdot|\vec b|^2\vec a\,\vec a^{\,T}
\right]
&=2|\vec b|^2k_a\eta^2,\\
2\operatorname{Tr}\!\left[
\eta(\vec x\,\vec a^{\,T}+\vec a\,\vec x^{\,T})
\cdot|\vec b|^2\vec a\,\vec a^{\,T}
\right]
&=4mh^2\eta,\\
\operatorname{Tr}\bigl(|\vec b|^2\vec a\,\vec a^{\,T}\bigr)^2
&=h^4.
\end{aligned}
$$
Therefore
$$
\operatorname{Tr}S^2
=\operatorname{Tr}K_\psi^4\,\eta^4
+4m_3\eta^3
+2\bigl(m^2+|\vec a|^2k_b+|\vec b|^2k_a\bigr)\eta^2
+4mh^2\eta+h^4.
$$
Squaring $\nu_1$ and subtracting $\operatorname{Tr}S^2$ gives
$$
\nu_2=\kappa\eta^4+2(\tau m-m_3)\eta^3
+(\tau h^2+m^2-w)\eta^2.
$$

Everything entering \eqref{eq:EACF-product-pure-algebraic} is now
available:
$$
\begin{aligned}
L&=1+\eta,\\
\nu_1&=\tau\eta^2+2m\eta+h^2,\\
d(\eta)&=-C_\psi^2\eta^2(\eta+\mu),\\
\nu_2&=\kappa\eta^4+2(\tau m-m_3)\eta^3+(\tau h^2+m^2-w)\eta^2.
\end{aligned}
$$
We also need
$$
\begin{aligned}
L^2-\nu_1
&=-2C_\psi^2\eta^2+2(1-m)\eta+(1-h^2),\\
8L\,d(\eta)
&=-8C_\psi^2\bigl[\eta^4+(1+\mu)\eta^3+\mu\eta^2\bigr],
\end{aligned}
$$
where the first line uses $1-\tau=-2C_\psi^2$. Expanding the square gives
$$
\begin{aligned}
(L^2-\nu_1)^2
&=4C_\psi^4\eta^4
-8C_\psi^2(1-m)\eta^3\\
&\quad
+\bigl[4(1-m)^2-4C_\psi^2(1-h^2)\bigr]\eta^2\\
&\quad
+4(1-m)(1-h^2)\eta+(1-h^2)^2.
\end{aligned}
$$
Substitute these expressions into
$(L^2-\nu_1)^2-4\nu_2+8L\,d(\eta)=0$ and collect powers of $\eta$.
The coefficient of $\eta^4$ is
$$
4C_\psi^4-4\kappa-8C_\psi^2=-16C_\psi^2.
$$
The coefficient of $\eta^3$ is
$$
-8C_\psi^2(1-m)-8(\tau m-m_3)-8C_\psi^2(1+\mu)
=-8\bigl[2C_\psi^2+(1+C_\psi^2)m-m_3+C_\psi^2\mu\bigr],
$$
where $\tau=1+2C_\psi^2$. The Cayley-Hamilton relation
$K_\psi^3-K_\psi^2-C_\psi^2K_\psi+C_\psi^2I=0$ follows from the
characteristic polynomial $(x-C_\psi)(x+C_\psi)(x-1)$. Equivalently,
$$
K_\psi^3=K_\psi^2+C_\psi^2K_\psi-C_\psi^2I.
$$
Rearranging gives
$$
(1+C_\psi^2)K_\psi-K_\psi^3=K_\psi-K_\psi^2+C_\psi^2I.
$$
Multiplying the Cayley-Hamilton identity by $K_\psi^{-1}$ gives
$$
K_\psi-K_\psi^2+C_\psi^2I=C_\psi^2K_\psi^{-1}.
$$
Therefore
$$
(1+C_\psi^2)K_\psi-K_\psi^3=C_\psi^2K_\psi^{-1}.
$$
Contracting with $\vec a^{\,T}$ and $\vec b$, and using
$\vec a^{\,T}K_\psi^{-1}\vec b=\vec b^{\,T}K_\psi^{-1}\vec a=\mu$, gives
$(1+C_\psi^2)m-m_3=C_\psi^2\mu$. Hence the cubic coefficient becomes
$$
-16C_\psi^2(1+\mu).
$$
The coefficient of $\eta^2$ is
$$
\begin{aligned}
&\bigl[4(1-m)^2-4C_\psi^2(1-h^2)\bigr]
-4(\tau h^2+m^2-w)-8C_\psi^2\mu\\
&=4(1-C_\psi^2)-8m-8C_\psi^2\mu
+(4C_\psi^2-4\tau)h^2+4w\\
&=4(1-C_\psi^2)-8(m+C_\psi^2\mu)
-4(1+C_\psi^2)h^2+4w.
\end{aligned}
$$
The last step uses $\tau=1+2C_\psi^2$. Finally, only
$(L^2-\nu_1)^2$ contributes to the coefficients of $\eta$ and $\eta^0$,
so
$$
A_1=4(1-m)(1-h^2),\qquad A_0=(1-h^2)^2.
$$
These are exactly the coefficients in
\eqref{eq:EACF-product-pure-quartic}.

At $C_\psi=1$ one has $K_\psi=D_{\PhiP}$, $\mu=m=m_{\PhiP}$, and $w=2h^2$,
and the left-hand side of \eqref{eq:EACF-product-pure-quartic} factors as
$$
-\bigl(4\eta^2+4m_{\PhiP}\eta-(1-h^2)\bigr)
\bigl(4\eta^2+4\eta+(1-h^2)\bigr).
$$
The first factor is exactly the Bell-reference equation
\eqref{eq:supp-product-tele-threshold-eq}, so its nonnegative root is
\eqref{eq:supp-EACF-product}. The second factor has only the roots
$\eta=(-1\pm h)/2\le0$, so it does not give an additional positive crossing
(when $h=1$, its zero root is the starting point). For
$C_\psi\neq1$ the quartic equation does not factor so simply, and a positive
root of \eqref{eq:EACF-product-pure-quartic} need not lie on the
$d(\eta)<0$ branch from which the equation was derived. We now show that the
teleportation threshold always lies on the $d(\eta)\le0$ branch, so that
\eqref{eq:EACF-product-pure-quartic} can be used to compute it for every
orientation.

\begin{lemma}[Correlation matrices of maximally entangled two-qubit states]
\label{lem:max-ent-correlation}
A real $3\times3$ matrix $R$ is the correlation matrix
$R_{ij}=\langle\beta|\sigma_i\otimes\sigma_j|\beta\rangle$ of some maximally
entangled two-qubit state $|\beta\rangle$ if and only if $R$ is orthogonal
with $\det R=-1$.
\end{lemma}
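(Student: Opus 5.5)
The plan is to reduce both directions to the reference state $|\PhiP\rangle$ and the standard correspondence between local unitaries and rotations. We have $R_{\PhiP}=D_{\PhiP}=\operatorname{diag}(1,-1,1)$, which is orthogonal with determinant $-1$. Every maximally entangled vector can be written $|\beta\rangle=(U\otimes V)|\PhiP\rangle$, as used in the ``Universal inequality'' paragraph above. Under conjugation, $U^\dagger\sigma_iU=\sum_k O^U_{ik}\sigma_k$ with $O^U\in SO(3)$, and likewise for $V$. Hence
$$
R_\beta=O^U D_{\PhiP}(O^V)^T .
$$
This is a product of orthogonal matrices with determinant $(+1)(-1)(+1)=-1$. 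That settles the ``only if'' direction.

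For the ``if'' direction, take $R$ orthogonal with $\det R=-1$. Then $R D_{\PhiP}$ is orthogonal with determinant $+1$, so it lies in $SO(3)$. Because $SU(2)\to SO(3)$ is surjective, there is a unitary $U$ with $O^U=RD_{\PhiP}$. Set $V=I$ and $|\beta\rangle=(U\otimes I)|\PhiP\rangle$. Using $D_{\PhiP}^2=I$, this state has
$$
R_\beta=O^U D_{\PhiP}=R ,
$$
and it is maximally entangled because its reduced state is still $I/2$.

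Several routine checks are needed along the way:
\begin{itemize}
\item the index convention, i.e.\ whether conjugation yields $O^U$ or its transpose acting on the left;
\item the fact that for a maximally entangled state the local Bloch vectors vanish, so the full Pauli expansion of $|\beta\rangle\!\langle\beta|$ is $\tfrac14\bigl(I+\sum_{ij}R_{ij}\sigma_i\otimes\sigma_j\bigr)$;
\item the verification $R_{\PhiP}=\operatorname{diag}(1,-1,1)$ from $\langle\PhiP|\sigma_y\otimes\sigma_y|\PhiP\rangle=-1$.
\end{itemize}

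The main obstacle is stating the surjectivity of the adjoint map $SU(2)\to SO(3)$ cleanly and keeping the ordering of transposes straight. I would cite the adjoint representation as standard. As a fallback, a direct argument is available: any $|\beta\rangle$ with $MM^\dagger=I/2$ has the form $(I\otimes W)|\PhiP\rangle$ with $W=\sqrt2\,M^T$ unitary, so the $U\otimes V$ form is automatic.
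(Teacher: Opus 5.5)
Your proposal is correct and takes essentially the same route as the paper: you write $|\beta\rangle=(U\otimes V)|\PhiP\rangle$ and obtain $R=O_U\operatorname{diag}(1,-1,1)\,O_V^{\mathsf T}$ for the forward direction, and for the converse you lift $R\operatorname{diag}(1,-1,1)\in SO(3)$ through the surjection $SU(2)\to SO(3)$ with $V=I$. The only detail the paper spells out that you take as standard is the continuity argument giving $\det O_U=+1$.
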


\begin{proof}
Every maximally entangled state is $|\beta\rangle=(U\otimes V)|\PhiP\rangle$
for some local unitaries $U,V$. The traceless Hermitian $2\times2$ matrices
form a real vector space with inner product
$\langle A,B\rangle=\tfrac12\Tr(AB)$, in which $\sigma_1,\sigma_2,\sigma_3$
are an orthonormal basis. Conjugation $A\mapsto UAU^\dagger$ maps this space
to itself and preserves the inner product, so it is an orthogonal
transformation. Its matrix $O_U$ in the Pauli basis, given by
$U\sigma_i U^\dagger=\sum_k(O_U)_{ki}\sigma_k$, is therefore orthogonal, and
$\det O_U=+1$ by continuity from $U=I$, so $O_U\in SO(3)$. The same holds for
$V$. Moving the local unitaries onto the Pauli operators and using
$U^\dagger\sigma_i U=\sum_k(O_U)_{ik}\sigma_k$ (and likewise for $V$),
$$
\begin{aligned}
R_{ij}&=\Tr\bigl[(U\otimes V)|\PhiP\rangle\!\langle\PhiP|(U\otimes V)^\dagger
\,\sigma_i\otimes\sigma_j\bigr]\\
&=\Tr\bigl[|\PhiP\rangle\!\langle\PhiP|\,
(U^\dagger\sigma_i U)\otimes(V^\dagger\sigma_j V)\bigr]\\
&=\sum_{k,l}(O_U)_{ik}(T_{\PhiP})_{kl}(O_V)_{jl}
=(O_U\,T_{\PhiP}\,O_V^{\mathsf T})_{ij},
\end{aligned}
$$
where $T_{\PhiP}=\operatorname{diag}(1,-1,1)$ is the Bell correlation matrix.
The matrix $R=O_U\operatorname{diag}(1,-1,1)\,O_V^{\mathsf T}$ is orthogonal with
$\det R=(+1)(-1)(+1)=-1$. Conversely, take any orthogonal $R$ with
$\det R=-1$. The product $O:=R\operatorname{diag}(1,-1,1)$ is orthogonal, and
$\det O=\det R\,\det\operatorname{diag}(1,-1,1)=(-1)(-1)=+1$, so $O\in SO(3)$.
Since $\operatorname{diag}(1,-1,1)^2=I$, we have
$O\operatorname{diag}(1,-1,1)=R$. The map $U\mapsto O_U$ from $SU(2)$ onto
$SO(3)$ is surjective, so $O=O_U$ for some single-qubit unitary $U$. Taking
$V=I$, the state $|\beta\rangle=(U\otimes I)|\PhiP\rangle$ then has
correlation matrix $O_U\operatorname{diag}(1,-1,1)=R$ by the forward
computation.
\end{proof}

To find when $f(\rho_\eta)$ exceeds $\tfrac12$, we recast the fully entangled
fraction as a maximization over maximally entangled states. It is the largest overlap of
$\rho_\eta$ with a maximally entangled two-qubit state $|\beta\rangle$. Such a
state has maximally mixed marginals, so the local terms $\sigma_i\otimes I$
and $I\otimes\sigma_j$ are absent from its Pauli expansion, leaving
$|\beta\rangle\!\langle\beta|
=\tfrac14\bigl(I\otimes I+\sum_{ij}R_{ij}\,\sigma_i\otimes\sigma_j\bigr)$,
with correlation matrix
$R_{ij}=\Tr(|\beta\rangle\!\langle\beta|\,\sigma_i\otimes\sigma_j)$. By
Lemma~\ref{lem:max-ent-correlation}, $R$ runs over exactly the orthogonal
$3\times3$ matrices of determinant $-1$. Expanding $\rho_\eta$ in the same
basis, the orthogonality $\Tr[(\sigma_i\otimes\sigma_j)(\sigma_k\otimes
\sigma_l)]=4\delta_{ik}\delta_{jl}$ makes the local terms of
$\rho_\eta$ drop out, and the overlap reduces to the correlation tensors
alone,
$$
\langle\beta|\rho_\eta|\beta\rangle
=\frac14\Bigl(1+\sum_{ij}(T_{\rho_\eta})_{ij}R_{ij}\Bigr)
=\frac14\Bigl(1+\frac{\Tr(M_\psi(\eta)R^{\mathsf T})}{L}\Bigr),
$$
with $T_{\rho_\eta}=M_\psi(\eta)/L$ the correlation tensor of $\rho_\eta$.
Maximizing the second term over $R$ (the transpose drops out, as $R^{\mathsf T}$
ranges over the same set) gives
$$
\Theta(\eta):=\max_{R\in O(3),\,\det R=-1}
\operatorname{Tr}\bigl(M_\psi(\eta)R\bigr),
\qquad
f(\rho_\eta)=\frac14\Bigl(1+\frac{\Theta(\eta)}{L}\Bigr).
$$
Evaluating the maximum over $R$ gives
$\Theta=s_1+s_2+s_3$ when $\det M_\psi<0$ and
$\Theta=s_1+s_2-s_3$ when $\det M_\psi\ge0$, which is exactly
\eqref{eq:supp-horodecki-f-singvals}. For each fixed $R$,
$\operatorname{Tr}(M_\psi(\eta)R)$ is affine in $\eta$. Taking the maximum over
$R$ gives a convex function, so $\Theta$ is convex, and so is
$$
\Delta(\eta):=\Theta(\eta)-(1+\eta).
$$
This quantity measures the difference between the optimized Horodecki
numerator and the value needed to reach $f=\tfrac12$, because
$$
f(\rho_\eta)-\frac12=\frac{\Delta(\eta)}{4L}.
$$
Since $L=1+\eta>0$, $f(\rho_\eta)-\tfrac12$ has the same sign as
$\Delta(\eta)$.
At $\eta=0$, $M_\psi(0)=\vec a\,\vec b^{\,T}$ is a rank-one outer product.
Its only nonzero singular value is $|\vec a|\,|\vec b|=h$, so
$\Theta(0)=h$ and $\Delta(0)=h-1\le0$.

\begin{proposition}[The threshold has $d(\eta)\le0$]
\label{prop:threshold-branch}
For every product noise state $\sigma=A\otimes B$ and every pure-state
reference $|\psi\rangle$ with $C_\psi>0$, let
$d(\eta):=\det M_\psi(\eta)$. Then $f(\rho_\eta)\le\tfrac12$ whenever
$d(\eta)>0$. Hence the threshold value $\eta=\EACF(\sigma;\psi)$ satisfies
$d(\eta)\le0$. For $d(\eta)\le0$, the crossing condition is the one that led to
\eqref{eq:EACF-product-pure-quartic}.
\end{proposition}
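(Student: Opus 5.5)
The plan is to exploit the convexity of $\Delta(\eta)=\Theta(\eta)-(1+\eta)$, which the text before the proposition already establishes, together with the explicit form of $d(\eta)$. By the matrix-determinant computation above, $d(\eta)=-C_\psi^2\eta^2(\eta+\mu)$. For $\eta>0$, therefore, $d(\eta)>0$ holds exactly when $\eta<-\mu$. So the set $\{\eta>0:\ d(\eta)>0\}$ is empty when $\mu\ge0$, in which case there is nothing to prove. When $\mu<0$ it is the single interval $(0,\eta_0)$ with $\eta_0:=-\mu$.

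Since $\Delta$ is convex on $[0,\eta_0]$, it is bounded there by $\max\{\Delta(0),\Delta(\eta_0)\}$. We already have $\Delta(0)=h-1\le0$. The whole statement therefore reduces to one endpoint inequality, $\Delta(\eta_0)\le0$. This is the pure-reference analogue of the Bell-reference argument, in which the branch $\eta+m_{\PhiP}<0$ was controlled by $s_++s_-=h$. Once it is shown, $f(\rho_\eta)-\tfrac12=\Delta(\eta)/(4L)\le0$ on the whole $d>0$ region. Because the feasible set is an interval $[0,\EACF]$ (proof of Theorem~\ref{thm:mobius}), the crossing $\eta=\EACF$ cannot lie in $\{d>0\}$ unless $f\le\tfrac12$ persists beyond it. In that case the crossing occurs at some $\eta\ge\eta_0$, where $d\le0$. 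On that branch the Horodecki numerator is $s_1+s_2+s_3$, and the derivation leading to \eqref{eq:EACF-product-pure-algebraic} and the quartic \eqref{eq:EACF-product-pure-quartic} applies verbatim.

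For the endpoint, at $\eta_0$ we have $d=0$, so $s_3=0$ and $\Theta(\eta_0)=s_1+s_2$. The inequality $s_1+s_2\le L_0:=1+\eta_0$ is equivalent to the pair
\[
L_0^2-\nu_1\ge 0,\qquad (L_0^2-\nu_1)^2\ge 4\nu_2 ,
\]
using $(s_1+s_2)^2=\nu_1+2s_1s_2$ and $s_1^2s_2^2=\nu_2$ when $s_3=0$. Since $d(\eta_0)=0$, the second condition says precisely that the quartic polynomial $P(\eta)=\sum_k A_k\eta^k$ satisfies $P(\eta_0)\ge0$. I would substitute $\eta_0=-\mu$ into the explicit polynomials $\nu_1=\tau\eta^2+2m\eta+h^2$ and $\nu_2=\kappa\eta^4+2(\tau m-m_3)\eta^3+(\tau h^2+m^2-w)\eta^2$. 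Then I would simplify using the Cayley--Hamilton identity $(1+C_\psi^2)m-m_3=C_\psi^2\mu$ and the constraints $|\vec a|,|\vec b|\le1$.

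I expect this endpoint check to be the main obstacle. Ky Fan-type triangle bounds such as $s_1+s_2\le\eta_0(1+C_\psi)+h$ are too weak, so a structural argument is needed.

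The route I would try first uses the null vector of $M_\psi(\eta_0)$. Singularity means there is $x\neq0$ with $\eta_0K_\psi x=-(\vec b\cdot x)\vec a$, so $x\propto K_\psi^{-1}\vec a$. Restricting $M_\psi(\eta_0)$ to $x^\perp$ gives a $2\times2$ problem. There I would aim to show that the nuclear norm is at most $\eta_0+h$, mirroring the Bell case, where the two surviving singular values are $\eta$ and $h$. This would give $\Delta(\eta_0)\le h-1\le0$.

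If that direct bound fails, the fallback is the brute-force polynomial verification of $P(-\mu)\ge0$ and $L_0^2\ge\nu_1$, carried out in coordinates adapted to the Schmidt frame, where $K_\psi=\operatorname{diag}(C_\psi,-C_\psi,1)$.
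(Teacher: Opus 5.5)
\textbf{Gap: the endpoint inequality is not proved.} Your reduction matches the paper's proof of the proposition. From $d(\eta)=-C_\psi^2\eta^2(\eta+\mu)$, the region $d>0$ is $(0,\eta_0)$ with $\eta_0=-\mu$ when $\mu<0$. Convexity of $\Delta$ together with $\Delta(0)=h-1\le0$ then reduces everything to one inequality, $\Delta(\eta_0)\le0$, i.e.\ $s_1+s_2\le1+\eta_0$ at the determinant-zero point. That inequality is the entire substance of the proposition; the paper isolates it as Lemma~\ref{lem:singular-sum-bound}. Your proposal does not establish it: the null-vector route and the brute-force check of $P(-\mu)\ge0$ and $L_0^2\ge\nu_1$ are only announced. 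As you note, the Ky Fan bound does not close it. So the proposition is not yet proved.

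\textbf{The missing idea: compute $s_1s_2$ exactly.} At $\eta_0$ you have $s_3=0$. Hence $s_1s_2$ equals the Frobenius norm of the adjugate of $M_\psi(\eta_0)$, the polynomial extension of $(\det X)X^{-1}$, whose singular values are the pairwise products $s_is_j$. Sherman--Morrison gives $d(\eta)M_\psi(\eta)^{-1}=-C_\psi^2\eta(\eta+\mu)K_\psi^{-1}+C_\psi^2\eta\,(K_\psi^{-1}\vec a)(K_\psi^{-1}\vec b)^T$. So at $\eta_0$ the adjugate is rank one, built from your null vector $K_\psi^{-1}\vec a$ and its left counterpart $K_\psi^{-1}\vec b$, and $s_1s_2=C_\psi^2\eta_0|K_\psi^{-1}\vec a|\,|K_\psi^{-1}\vec b|$. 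Combine this with $s_1^2+s_2^2=\|M_\psi(\eta_0)\|_F^2=(1+2C_\psi^2)\eta_0^2+2m\eta_0+h^2$. Then use the identity $K_\psi-C_\psi^2K_\psi^{-1}=\operatorname{diag}(0,0,1-C_\psi^2)$, which turns $C_\psi^2\eta_0+m$ into $t\,a_zb_z$ with $t:=1-C_\psi^2$. This yields
\[
(s_1+s_2)^2-(\eta_0+h)^2=2\eta_0\,(T-h),\qquad T=t\,a_zb_z+\sqrt{(|\vec a|^2-t\,a_z^2)(|\vec b|^2-t\,b_z^2)}.
\]
Here $T$ is the inner product of the planar vectors $(\sqrt t\,a_z,\sqrt{|\vec a|^2-t\,a_z^2})$ and $(\sqrt t\,b_z,\sqrt{|\vec b|^2-t\,b_z^2})$, whose lengths are $|\vec a|$ and $|\vec b|$. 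Cauchy--Schwarz therefore gives $T\le h$.

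So your conjectured target $s_1+s_2\le\eta_0+h$ is true, and it gives $\Delta(\eta_0)\le h-1\le0$, mirroring the Bell-reference branch. The paper compares with $(1+\eta_0)^2$ instead, obtaining $2\eta_0(T-1)-(1-h^2)$. It then proves the weaker bound $T\le1$ by AM--GM using $|\vec a|,|\vec b|\le1$.
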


\begin{proof}
Recall $d(\eta)=-C_\psi^2\eta^2(\eta+\mu)$ with
$\mu=\vec b^{\,T}K_\psi^{-1}\vec a$, so $d(\eta)>0$ requires $\mu<0$ and then
holds exactly for $\eta\in(0,-\mu)$. If $\mu\ge0$ there is nothing to prove.
Assume $\mu<0$ and put $\eta_0=-\mu$. At $\eta_0$, $\det M_\psi(\eta_0)=0$,
so the smallest singular value vanishes and
$\Theta(\eta_0)=s_1+s_2$.
Lemma~\ref{lem:singular-sum-bound} gives $s_1+s_2\le1+\eta_0$, that is
$\Delta(\eta_0)\le0$. As $\Delta$ is convex with $\Delta(0)\le0$ and
$\Delta(\eta_0)\le0$, it is nonpositive on $[0,\eta_0]$, so
$f(\rho_\eta)\le\tfrac12$ for all $\eta$ with $d(\eta)>0$. The onset of
$f>\tfrac12$ therefore occurs at some $\eta\ge\eta_0$, where $d(\eta)\le0$. At that point
the fully entangled fraction uses $s_1+s_2+s_3$. The
threshold condition is therefore $s_1+s_2+s_3=L$. The
calculation above rewrites this condition as the polynomial
equation \eqref{eq:EACF-product-pure-quartic}.
\end{proof}

By Proposition~\ref{prop:threshold-branch}, $\EACF(\sigma;\psi)$ is the
smallest nonnegative root of \eqref{eq:EACF-product-pure-quartic} at which
$f(\rho_\eta)=\tfrac12$. The root check removes roots introduced by
squaring. In the Bell case, for instance, the extra factor
$4\eta^2+4\eta+(1-h^2)$ has no positive roots, and its possible zero root
at $h=1$ is just the starting point. In the Bell-reference
case and the structured subfamilies considered next the selected root satisfies
$d(\eta)\le0$, and the formulas reduce to the closed forms quoted in the main
text.

It remains to prove the singular-value bound used in
Proposition~\ref{prop:threshold-branch}.

\begin{lemma}[Singular-value bound at the determinant-zero point]
\label{lem:singular-sum-bound}
If $\mu=\vec b^{\,T}K_\psi^{-1}\vec a<0$ and $\eta_0=-\mu$, the
$3\times3$ matrix $M_\psi(\eta_0)=\eta_0 K_\psi+\vec a\,\vec b^{\,T}$ has
$\det M_\psi(\eta_0)=0$, and its two largest singular values satisfy
$s_1+s_2\le 1+\eta_0$, with equality only if
$|\vec a|=|\vec b|=1$.
\end{lemma}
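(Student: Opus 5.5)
The plan is to reduce the claim to a polynomial inequality in the Bloch data, and to use the determinant-zero condition to turn the nuclear-norm-type quantity $s_1+s_2$ into something computable from $\nu_1,\nu_2$. First, $\det M_\psi(\eta_0)=0$ is immediate from the factorization $d(\eta)=-C_\psi^2\eta^2(\eta+\mu)$ derived above, evaluated at $\eta_0=-\mu$.

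Since $s_3=0$ at $\eta_0$, we have $s_1^2+s_2^2=\nu_1$ and $s_1^2s_2^2=\nu_2$. Hence
\[
(s_1+s_2)^2=\nu_1+2\sqrt{\nu_2}.
\]
The target $s_1+s_2\le L_0:=1+\eta_0$ is therefore equivalent to the pair of conditions
\[
L_0^2-\nu_1\ge 0,\qquad (L_0^2-\nu_1)^2\ge 4\nu_2 .
\]
These are exactly the two ingredients of \eqref{eq:EACF-product-pure-algebraic}, now with the $d$-term switched off. Since $d(\eta_0)=0$, the second condition says that the quartic $P(\eta):=\sum_k A_k\eta^k$ of Theorem~\ref{thm:pure-reference-product-teleportation} satisfies $P(\eta_0)\ge0$. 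Here $P(\eta)$ is $(L^2-\nu_1)^2-4\nu_2$ shifted by the term $8Ld(\eta)$, which vanishes at $\eta_0$.

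Substituting $\mu=-\eta_0$ into the coefficients simplifies $P$: the $\eta^4$ contributions cancel and leave
\[
P(\eta_0)=-16C_\psi^2\eta_0^3+\tilde A_2\eta_0^2+4(1-m)(1-h^2)\eta_0+(1-h^2)^2,
\]
with
\[
\tilde A_2=4(1-C_\psi^2)-8m+8C_\psi^2\eta_0-4(1+C_\psi^2)h^2+4w .
\]
The first condition similarly reads
\[
-2C_\psi^2\eta_0^2+2(1-m)\eta_0+(1-h^2)\ge0 .
\]
To prove both I will write $\vec a=(a_1,a_2,a_3)$, $\vec b=(b_1,b_2,b_3)$ in the Schmidt frame. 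Then the constraint becomes
\[
\eta_0=-\bigl[(a_1b_1-a_2b_2)/C_\psi+a_3b_3\bigr],
\]
and $m=C_\psi(a_1b_1-a_2b_2)+a_3b_3$. This allows eliminating $a_1b_1-a_2b_2$ in favour of $\eta_0$ and $a_3b_3$. The task then becomes showing that each expression is a sum of nonnegative terms built from $1-|\vec a|^2$, $1-|\vec b|^2$ and squares, for instance
\[
(1-C_\psi^2)(\cdots)^2,\qquad (1-|\vec a|^2)(1-|\vec b|^2),\qquad (\eta_0C_\psi-\cdots)^2 .
\]
This is the main obstacle: finding the right sum-of-squares certificate, which is also where the equality case $|\vec a|=|\vec b|=1$ should appear, since every slack term carries a factor $1-|\vec a|^2$ or $1-|\vec b|^2$. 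As a sanity check, at $C_\psi=1$ the quartic factors as shown above. There the first factor at $\eta_0=-m$ equals $1-h^2\ge0$, and the second is positive, so the inequality holds in that case and suggests the general certificate.

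If the direct algebra stalls, the fallback is variational. Because $\det M_\psi(\eta_0)=0$, $s_1+s_2$ equals the maximum of $\Tr(M_\psi(\eta_0)R)$ over all $R\in O(3)$. The null vectors are explicit: $K_\psi^{-1}\vec a$ on the right and $K_\psi^{-1}\vec b$ on the left. One then restricts $R$ to map the orthogonal complement of the right null vector to that of the left one, and bounds $\eta_0\Tr(K_\psi R)+\vec b^{\,T}R\vec a$ by $\eta_0+|\vec a||\vec b|$ using the constraint $\vec b^{\,T}K_\psi^{-1}\vec a=-\eta_0$. Together with $|\vec a||\vec b|\le1$ this gives $s_1+s_2\le1+\eta_0$, with equality forcing $|\vec a|=|\vec b|=1$.
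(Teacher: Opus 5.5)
Your reduction is sound. With $s_3=0$ you have $(s_1+s_2)^2=\nu_1+2\sqrt{\nu_2}$, so $s_1+s_2\le L_0$ is equivalent to $L_0^2-\nu_1\ge0$ together with $(L_0^2-\nu_1)^2\ge4\nu_2$. Since $d(\eta_0)=0$, the second condition is $P(\eta_0)\ge0$. But this only restates the lemma. The inequality itself is the whole content, and it is exactly what you leave open as the ``main obstacle''; neither route supplies it.

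The variational fallback has the same hole. You assert $\eta_0\Tr(K_\psi R)+\vec b^{\,T}R\vec a\le\eta_0+|\vec a||\vec b|$ with no mechanism. It cannot come from bounding the two terms separately, because $\Tr(K_\psi R)$ reaches $1+2C_\psi$ at $R=\operatorname{diag}(1,-1,1)$. The constraint $\vec b^{\,T}K_\psi^{-1}\vec a=-\eta_0$ must do real work, and you do not say how. The bound $s_1+s_2\le\eta_0+h$ is in fact true, but proving it is the same missing step.

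The idea you are missing is that $\nu_2$ factors at $\eta_0$. The paper gets $s_1s_2$ from the adjugate, which at the determinant-zero point is the rank-one matrix $C_\psi^2\eta_0(K_\psi^{-1}\vec a)(K_\psi^{-1}\vec b)^T$. Hence $\nu_2(\eta_0)=\eta_0^2XY$, with $X=|\vec a|^2-t a_z^2$, $Y=|\vec b|^2-t b_z^2$ and $t=1-C_\psi^2$. Your elimination step amounts to $m+C_\psi^2\eta_0=t\,a_zb_z$. Combining the two gives $L_0^2-\nu_1=(1-h^2)+2\eta_0(1-t a_zb_z)$, which settles your first condition immediately, and
\[
P(\eta_0)=(1-h^2)^2+4\eta_0(1-h^2)(1-t a_zb_z)+4\eta_0^2\bigl[(1-t a_zb_z)^2-XY\bigr].
\]
The bracket is nonnegative by Cauchy--Schwarz applied to $(\sqrt t\,a_z,\sqrt X)$ and $(\sqrt t\,b_z,\sqrt Y)$, which gives $t a_zb_z+\sqrt{XY}\le|\vec a||\vec b|\le1$. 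The paper closes the equivalent inequality $T\le1$ using $|\vec a|,|\vec b|\le1$ and AM--GM. That is the certificate you were looking for.

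Contrary to your expectation, the last slack term carries no factor $1-|\vec a|^2$ or $1-|\vec b|^2$. The equality clause comes from the $(1-h^2)^2$ term alone, which forces $h=1$. Working only from the generic quartic coefficients $m,w$, without this factorization of $\nu_2(\eta_0)$, leaves the search for a sum-of-squares certificate unguided, which is why your proposal stalls where it does.
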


\begin{proof}
The point of the proof is to control $s_1+s_2$ at the endpoint of the
positive-determinant interval. At $\eta_0=-\mu$ one has
$\det M_\psi(\eta_0)=0$, so $s_3=0$ and
$$
(s_1+s_2)^2=s_1^2+s_2^2+2s_1s_2.
$$
We compute the two terms on the right separately. The first is immediate:
the sum of all three squared singular values is the squared Frobenius norm,
and $s_3=0$ here, so
$$
s_1^2+s_2^2=\|M_\psi(\eta_0)\|_F^2
=(1+2C_\psi^2)\eta_0^2+2m\eta_0+h^2,\qquad
m=\vec a^{\,T}K_\psi\vec b,\quad
\eta_0=-\vec a^{\,T}K_\psi^{-1}\vec b .
$$

The product $s_1s_2$ takes more work. For an invertible $3\times3$ matrix
$X$, the matrix $(\det X)X^{-1}$ has entries that are polynomial functions of
the entries of $X$. Use $\Gamma(X)$ for that same polynomial matrix for every
$3\times3$ matrix $X$. It satisfies
$$
X\Gamma(X)=(\det X)I,
$$
and equals $(\det X)X^{-1}$ when $X$ is invertible. This matrix
helps because its singular values are the pairwise products of those of $X$.
Indeed, if $X$ has singular values $r_1,r_2,r_3$, then for invertible $X$ the
singular values of $X^{-1}$ are $1/r_1,1/r_2,1/r_3$, and multiplying by the
scalar $\det X$ rescales them by $|\det X|=r_1r_2r_3$, so $\Gamma(X)$ has
singular values $r_2r_3$, $r_1r_3$, $r_1r_2$. The singular case follows by
continuity. For
$M_\psi(\eta_0)$, with singular values $s_1,s_2,0$, only the product
$s_1s_2$ survives, so $\|\Gamma(M_\psi(\eta_0))\|_F=s_1s_2$ is exactly what
we need.

It remains to evaluate $\Gamma(M_\psi(\eta_0))$ in closed form. This is direct
because $M_\psi(\eta)=\eta K_\psi+\vec a\,\vec b^{\,T}$ adds the rank-one
matrix $\vec a\,\vec b^{\,T}$ to $\eta K_\psi$, and the inverse is explicit:
for $\eta\ne\eta_0$,
$$
M_\psi(\eta)^{-1}
=\frac{1}{\eta}K_\psi^{-1}
-\frac{1}{\eta(\eta+\mu)}
K_\psi^{-1}\vec a\,\vec b^{\,T}K_\psi^{-1}.
$$
Also,
$d(\eta)=\det M_\psi(\eta)=-C_\psi^2\eta^2(\eta+\mu)$, so
$$
d(\eta)M_\psi(\eta)^{-1}
=-C_\psi^2\eta(\eta+\mu)K_\psi^{-1}
+C_\psi^2\eta\,(K_\psi^{-1}\vec a)(K_\psi^{-1}\vec b)^T .
$$
Taking $\eta\to\eta_0=-\mu$ in this polynomial matrix, the first term
vanishes and
$$
\Gamma(M_\psi(\eta_0))
=C_\psi^2\eta_0\,(K_\psi^{-1}\vec a)(K_\psi^{-1}\vec b)^T.
$$
A rank-one matrix $c\,\vec u\,\vec v^{\,T}$ has Frobenius norm
$|c|\,|\vec u|\,|\vec v|$: indeed
$$
\|c\,\vec u\,\vec v^{\,T}\|_F^2=\sum_{i,j}(c\,u_iv_j)^2=c^2|\vec u|^2|\vec v|^2 .
$$
Hence
$$
s_1s_2=\|\Gamma(M_\psi(\eta_0))\|_F
=C_\psi^2\eta_0\,|K_\psi^{-1}\vec a|\,|K_\psi^{-1}\vec b|.
$$

Combining the two computed pieces gives
$$
(s_1+s_2)^2=(1+2C_\psi^2)\eta_0^2+2m\eta_0+h^2
+2C_\psi^2\eta_0\,|K_\psi^{-1}\vec a|\,|K_\psi^{-1}\vec b|.
$$
We want the sign of $(s_1+s_2)^2-(1+\eta_0)^2$. Subtracting
$(1+\eta_0)^2=1+2\eta_0+\eta_0^2$ from the line above and collecting the
terms proportional to $\eta_0$,
$$
(s_1+s_2)^2-(1+\eta_0)^2=2\eta_0\,(T-1)-(1-h^2),
$$
where
$$
T:=(C_\psi^2\eta_0+m)+C_\psi^2\,|K_\psi^{-1}\vec a|\,|K_\psi^{-1}\vec b| .
$$
In $T$ the term $C_\psi^2|K_\psi^{-1}\vec a|\,|K_\psi^{-1}\vec b|$ is already
a product of lengths. Only the leftover $C_\psi^2\eta_0+m$ still mixes
$K_\psi^{-1}$ (through $\eta_0=-\vec a^{\,T}K_\psi^{-1}\vec b$) and $K_\psi$
(through $m=\vec a^{\,T}K_\psi\vec b$). It collapses because
$K_\psi-C_\psi^2K_\psi^{-1}=\operatorname{diag}(0,0,1-C_\psi^2)$:
$$
C_\psi^2\eta_0+m
=\vec a^{\,T}(K_\psi-C_\psi^2K_\psi^{-1})\vec b
=(1-C_\psi^2)a_zb_z,
$$
so that
$$
T=(1-C_\psi^2)\,a_zb_z+C_\psi^2\,|K_\psi^{-1}\vec a|\,|K_\psi^{-1}\vec b| .
$$

Since $h=|\vec a|\,|\vec b|\le1$, the right-hand side is nonpositive as soon
as $T\le1$, which we now establish. Put $t:=1-C_\psi^2\in[0,1)$ (as $0<C_\psi\le1$). In coordinates,
$$
C_\psi^2|K_\psi^{-1}\vec a|^2=a_x^2+a_y^2+C_\psi^2a_z^2=|\vec a|^2-t\,a_z^2\le1-t\,a_z^2,
$$
and likewise $C_\psi^2|K_\psi^{-1}\vec b|^2\le1-t\,b_z^2$, both from
$|\vec a|,|\vec b|\le1$ (with both right-hand sides positive, as $t<1$).
Multiplying and taking square roots,
$$
C_\psi^2\,|K_\psi^{-1}\vec a|\,|K_\psi^{-1}\vec b|\le\sqrt{(1-t\,a_z^2)(1-t\,b_z^2)} .
$$
By the AM-GM inequality $\sqrt{PQ}\le\tfrac12(P+Q)$, the right-hand side is
at most $\tfrac12\bigl[(1-t\,a_z^2)+(1-t\,b_z^2)\bigr]=1-\tfrac t2(a_z^2+b_z^2)$,
so, with $T=t\,a_zb_z+C_\psi^2|K_\psi^{-1}\vec a|\,|K_\psi^{-1}\vec b|$,
$$
T\le1+t\,a_zb_z-\tfrac t2(a_z^2+b_z^2)=1-\tfrac t2(a_z-b_z)^2\le1 .
$$
Therefore $s_1+s_2\le1+\eta_0$. Equality forces $h=1$ (so
$|\vec a|=|\vec b|=1$) and $T=1$ (so $C_\psi=1$ or $a_z=b_z$).
\end{proof}

\subsection{Illustrative subfamily checks}
\label{subsec:supp-subfamily-checks}

This subsection illustrates the product law of
Theorem~\ref{thm:product} and Section~\ref{sec:eval1}: unlike $\lambda_*$,
the teleportation threshold $\lambda_F$ depends on the orientation of
$\sigma$ relative to the Bell frame. For a product state $\sigma=A\otimes B$ the
entanglement threshold is $\lambda_*=g/(2+g)$ with impurity factor
$g=\sqrt{(1-|\vec a|^2)(1-|\vec b|^2)}$
\eqref{eq:product-threshold}. This value is the same for every
orientation. We specialize \eqref{eq:supp-EACF-product} to two
orientations.

\paragraph*{Aligned-axis products.}
Take $\vec a=|\vec a|\hat z$ and $\vec b=|\vec b|\hat z$. Then
$h=m_{\PhiP}=|\vec a|\,|\vec b|$, so \eqref{eq:supp-EACF-product} gives
$\EACF=(1-|\vec a|\,|\vec b|)/2$ and
\begin{equation}
\lambda_F=\frac{1-|\vec a|\,|\vec b|}{3-|\vec a|\,|\vec b|}.
\label{eq:supp-lambdaF-aligned}
\end{equation}
Setting $\lambda_F=\lambda_*$ and cross-multiplying reduces to
$(|\vec a|-|\vec b|)^2=0$, so the two thresholds agree exactly on the
equal-radius diagonal $|\vec a|=|\vec b|=r$, where
\begin{equation}
\lambda_F(r,r)=\lambda_*(r,r)=\frac{1-r^2}{3-r^2},
\qquad r\in[0,1],
\label{eq:supp-equality-curve}
\end{equation}
interpolating from the Werner value $\lambda=1/3$ at $r=0$ to the
pure-product corner $\lambda=0$ at $r=1$.

\paragraph*{Perpendicular products.}
Take $\vec a=|\vec a|\hat z$ and $\vec b=|\vec b|\hat x$. Now the Bell-frame
alignment vanishes, $m_{\PhiP}=0$, while $h=|\vec a|\,|\vec b|$, so
\eqref{eq:supp-EACF-product} gives
$$
\lambda_F^{\perp}(\sigma)
=\frac{\sqrt{1-|\vec a|^2|\vec b|^2}}
{2+\sqrt{1-|\vec a|^2|\vec b|^2}}
\,\ge\,\lambda_*(\sigma).
$$
The inequality follows from the monotonicity of $x/(2+x)$ together with
$1-|\vec a|^2|\vec b|^2\ge(1-|\vec a|^2)(1-|\vec b|^2)$. Equality forces
$$
1-|\vec a|^2|\vec b|^2=(1-|\vec a|^2)(1-|\vec b|^2),
$$
that is $|\vec a|^2+|\vec b|^2=2|\vec a|^2|\vec b|^2$. With
$|\vec a|,|\vec b|\in[0,1]$, the only solutions are the maximally mixed
point $|\vec a|=|\vec b|=0$, where both thresholds equal $1/3$, and the
pure-product boundary point $|\vec a|=|\vec b|=1$ (so $h=1$), where both
thresholds vanish.

For perpendicular products, the equality set is only these two endpoints.
For aligned products, it is the whole equal-radius diagonal
\eqref{eq:supp-equality-curve}. The equality set $\lambda_F=\lambda_*$ on
the product family therefore depends on the orientation relative to the Bell
frame. On the $X$-state sector, by contrast, Theorem~\ref{thm:xstate-tele}
identifies the equality criterion as ``equal middle populations.''

\section{Channel evolution and unital monotonicity}
\label{sec:supp-channel-calculus}

This appendix records how the product formula for $\EAC$ changes under
three standard one-qubit channels: depolarizing, dephasing, and amplitude
damping. Since \eqref{eq:impurity} depends only on the local Bloch radii,
the channel calculation is obtained by substituting the evolved radii. The
same formulas give a monotonicity statement: local unital channels cannot
decrease $\EAC$ on the product family, while nonunital channels can.

\subsection{Local depolarizing channel}
\label{subsec:supp-depolarizing}

The single-qubit depolarizing channel
$\mathcal{D}_p(\rho)=(1-p)\rho+p\,I/2$ sends a Bloch vector $\vec a$
to $(1-p)\vec a$, leaving the maximally mixed state invariant. For a product
noise state $\sigma=A\otimes B$ with local
Bloch vectors $\vec a,\vec b$, applying
$\mathcal{D}_{p_A}\otimes\mathcal{D}_{p_B}$ yields another product state
with marginal radii $(1-p_A)|\vec a|$ and $(1-p_B)|\vec b|$.
The product law of Eq.~\eqref{eq:impurity} then gives
\begin{equation}
\begin{aligned}
&\EAC\!\bigl(\mathcal{D}_{p_A}(A)\otimes\mathcal{D}_{p_B}(B)\bigr)\\
&\quad=\tfrac{1}{2}\sqrt{
\bigl(1-(1-p_A)^2|\vec a|^2\bigr)
\bigl(1-(1-p_B)^2|\vec b|^2\bigr)}.
\end{aligned}
\label{eq:supp-EAC-depolarizing}
\end{equation}
At fixed input state and fixed value of the other channel parameter, the
corresponding factor is strictly increasing in $p_X$ on $[0,1]$ whenever
the input Bloch vector is nonzero. The capacity $\EAC$ is non-decreasing in each
depolarizing strength, and the increase is strict whenever the corresponding
marginal is not already maximally mixed. By the M\"obius transform
\eqref{eq:mobius}, the Bell-mixing threshold $\lambda_*$ is then a
non-decreasing function of every depolarizing parameter applied locally.

\subsection{Local dephasing channel}
\label{subsec:supp-dephasing}

The single-qubit dephasing channel
$\mathcal{Z}_p(\rho)=(1-p)\rho+p\,\sigma_z\rho\sigma_z$ acts on Bloch
components as $(a_x,a_y,a_z)\mapsto((1-2p)a_x,(1-2p)a_y,a_z)$.  It is
unital but anisotropic. It contracts the off-axis components while leaving
the $z$-axis component invariant.

For a product noise state with Bloch vectors of arbitrary orientation, the
post-channel marginal radii are
$$
\begin{aligned}
|\mathcal{Z}_{p_A}(\vec a)|^2&=(1-2p_A)^2(a_x^2+a_y^2)+a_z^2,\\
|\mathcal{Z}_{p_B}(\vec b)|^2&=(1-2p_B)^2(b_x^2+b_y^2)+b_z^2,
\end{aligned}
$$
and substituting these into the product law gives
\begin{equation}
\EAC\!\bigl(\mathcal{Z}_{p_A}(A)\otimes\mathcal{Z}_{p_B}(B)\bigr)
=\tfrac{1}{2}\sqrt{\bigl(1-|\mathcal{Z}_{p_A}\vec a|^2\bigr)\bigl(1-|\mathcal{Z}_{p_B}\vec b|^2\bigr)}.
\label{eq:supp-EAC-dephasing}
\end{equation}
Two special cases show the role of orientation. For aligned-$\hat z$ product
noise states ($\vec a,\vec b\parallel\hat z$),
$|\mathcal{Z}\vec a|=|\vec a|$, and $\EAC$ is unchanged by
local dephasing. For purely off-axis product noise states
($a_z=b_z=0$), local dephasing contracts both marginals isotropically
within the equatorial plane, and $\EAC$ increases monotonically up to
the Werner value $1/2$ as $p_A,p_B\to 1/2$ (the completely
dephasing limit, which for these off-axis marginals contracts the
equatorial Bloch vector fully to the origin, sending each marginal to $I/2$).

\subsection{Local amplitude-damping channel}
\label{subsec:supp-amplitude-damping}

The amplitude-damping channel $\mathcal{A}_\gamma$ with damping
strength $\gamma\in[0,1]$ has Kraus operators $K_0=\mathrm{diag}(1,\sqrt{1-\gamma})$
and $K_1=\sqrt\gamma\,|0\rangle\!\langle 1|$, sending Bloch components
to $(\sqrt{1-\gamma}\,a_x,\sqrt{1-\gamma}\,a_y,(1-\gamma)a_z+\gamma)$.
This channel is not unital: $\mathcal{A}_\gamma(I/2)
=\tfrac{1+\gamma}{2}|0\rangle\!\langle 0|+\tfrac{1-\gamma}{2}|1\rangle\!\langle 1|
\ne I/2$ for $\gamma>0$.

Take the aligned-$\hat z$ product noise state
$\vec a=|\vec a|\hat z$. Its marginal after amplitude damping
has Bloch vector $\vec a'=(0,0,(1-\gamma)|\vec a|+\gamma)$ with radius
$|\vec a'|=(1-\gamma)|\vec a|+\gamma$. This gives
$|\vec a'|\ge |\vec a|$ with strict inequality whenever
$\gamma>0$ and $|\vec a|<1$. Then
$$
1-|\vec a'|^2=1-\bigl((1-\gamma)|\vec a|+\gamma\bigr)^2,
$$
and an analogous formula on the $B$ side.  Substituting into the
product law,
\begin{equation}
\begin{aligned}
&\EAC\!\bigl(\mathcal{A}_{\gamma_A}(A)\otimes\mathcal{A}_{\gamma_B}(B)\bigr)\\
&\,=\tfrac{1}{2}\sqrt{
\bigl(1-((1-\gamma_A)|\vec a|+\gamma_A)^2\bigr)
\bigl(1-((1-\gamma_B)|\vec b|+\gamma_B)^2\bigr)},
\end{aligned}
\label{eq:supp-EAC-AD-aligned}
\end{equation}
which is closed-form on the aligned-$\hat z$ product family.

\paragraph*{Aligned-$+\hat z$ product noise states.}
For aligned-$+\hat z$ product noise, amplitude damping changes the local
radii to $(1-\gamma_A)|\vec a|+\gamma_A$ and
$(1-\gamma_B)|\vec b|+\gamma_B$. Each radius increases whenever the
corresponding input radius is below $1$. The factors $1-|\vec a|^2$ and
$1-|\vec b|^2$ in the product law then decrease, so $\EAC$ decreases
strictly whenever at least one nonpure marginal is damped and the other
side has nonzero impurity. As $\gamma_A,\gamma_B\to1$, both marginals
approach $|0\rangle\!\langle0|$ and $\EAC\to0$: the channel output is a
pure product, so no Bell weight can be added while staying separable.

\paragraph*{Negative-$z$ product noise states.}
For $\vec a=-r\hat z$ with $r>0$, amplitude damping gives radius
$|\gamma-(1-\gamma)r|$. This is smaller than $r$ for
$0<\gamma<2r/(1+r)$, as the channel moves the vector toward $+\hat z$
and through the origin. On this interval, the local impurity increases
and so can $\EAC$. After $\gamma>2r/(1+r)$ the radius grows again,
$\EAC$ then falls to $0$ as $\gamma\to1$ and the marginal becomes pure.
The initial direction of change depends on the sign of the Bloch
component, reflecting the nonunital shift of $\mathcal{A}_\gamma$.

\subsection{Monotonicity on products, and what fails beyond}
\label{subsec:supp-monotonicity}

The three channel calculations above all concern product noise states. On
that family, local unital noise can only raise $\EAC$.

\begin{theorem}[Channel monotonicity of $\EAC$ on products]
\label{thm:channel-monotonicity}
Let $\mathcal{N}_A,\mathcal{N}_B$ be any single-qubit
trace-preserving completely positive (CP) maps acting locally on
$A$ and $B$. If both maps are unital, then
$$
\EAC\!\bigl(\mathcal{N}_A(A)\otimes\mathcal{N}_B(B)\bigr)
\ge \EAC(A\otimes B).
$$
In Bloch-vector notation, the inequality is strict whenever one marginal
is strictly contracted, $|\mathcal{N}_A(\vec a)|<|\vec a|$, and the
other is not already pure, $|\vec b|<1$ (and symmetrically).
\end{theorem}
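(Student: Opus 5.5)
The plan is to reduce the statement to a fact about single-qubit Bloch radii, using the product law in its Bloch form \eqref{eq:bloch-form},
\[
\EAC(A\otimes B)=\tfrac12\sqrt{(1-|\vec a|^2)(1-|\vec b|^2)} .
\]
Since $\mathcal N_A\otimes\mathcal N_B$ maps $A\otimes B$ to the product state $\mathcal N_A(A)\otimes\mathcal N_B(B)$, the same formula applies on the output side, with the evolved Bloch vectors $\vec a\,'$ and $\vec b\,'$. The inequality then follows from two facts: each factor $1-|\vec x|^2$ is decreasing in $|\vec x|$, and the square root of a product of nonnegative numbers is monotone in each factor.

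\textbf{Step 1: unital qubit channels contract Bloch vectors.} For a unital CPTP qubit map, the Bloch action is affine, $\vec a\mapsto T\vec a+\vec t$. Unitality, $\mathcal N(I/2)=I/2$, forces $\vec t=0$. It remains to show that the linear part $T$ has operator norm at most $1$. I would argue this directly: the channel maps every pure state to a valid state, so $|T\hat n|\le1$ for all unit vectors $\hat n$. By linearity, $|T\vec a|\le|\vec a|$ for every $\vec a$. A cleaner alternative invokes the standard King--Ruskai normal form, in which $T=O_1\operatorname{diag}(\lambda_1,\lambda_2,\lambda_3)O_2$ with $|\lambda_i|\le1$. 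Either route gives $|\vec a\,'|\le|\vec a|$ and $|\vec b\,'|\le|\vec b|$. I expect this step to be the main conceptual point: it is exactly where unitality is used, since a nonzero translation $\vec t$ can enlarge the radius.

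\textbf{Step 2: combine.} From Step 1, $1-|\vec a\,'|^2\ge1-|\vec a|^2\ge0$, and likewise on the $B$ side. Multiplying these two nonnegative inequalities and taking square roots gives the non-strict claim.

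\textbf{Step 3: strictness.} Suppose $|\vec a\,'|<|\vec a|$ and $|\vec b|<1$. Then $1-|\vec a\,'|^2>1-|\vec a|^2$, and $1-|\vec b\,'|^2\ge1-|\vec b|^2>0$. Hence
\[
(1-|\vec a\,'|^2)(1-|\vec b\,'|^2)>(1-|\vec a|^2)(1-|\vec b|^2),
\]
because the left side strictly exceeds $(1-|\vec a|^2)(1-|\vec b\,'|^2)$, which in turn is at least the right side. The square root is strictly increasing, so the inequality for $\EAC$ is strict. The symmetric case, with the roles of $A$ and $B$ exchanged, is identical.

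The amplitude-damping computation of Appendix~\ref{subsec:supp-amplitude-damping} is the companion remark showing that the unital hypothesis cannot be dropped.
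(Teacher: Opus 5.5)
Your proposal is correct and follows essentially the same route as the paper's proof. Both use the Bloch-radius form of the product law and kill the translation $\vec t$ by unitality. Both use positivity to get $|T\vec u|\le1$ on unit vectors, hence $|T\vec a|\le|\vec a|$, and obtain strictness from one strictly increased factor multiplied by a positive other factor; your Step 3 writes out explicitly the inequality chain the paper states in one line.
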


\paragraph*{Proof.}
The product law of Eq.~\eqref{eq:impurity} makes $\EAC$ depend on each
marginal only through its Bloch radius. Write a one-qubit state as
$\rho(\vec a)=(I+\vec a\cdot\vec\sigma)/2$. A trace-preserving qubit map
acts on Bloch vectors as an affine map $\vec a\mapsto T\vec a+\vec t$.
Unitality, $\mathcal{N}(I/2)=I/2$, is exactly $\vec t=0$. Since the map is
positive, it sends every density matrix to a density matrix. Equivalently,
$T$ sends the Bloch ball into itself. Hence $|T\vec u|\le1$ for every unit
vector $\vec u$. For a nonzero Bloch vector, write
$\vec a=|\vec a|\vec u$ with $|\vec u|=1$. Linearity gives
$|T\vec a|=|\vec a|\,|T\vec u|\le|\vec a|$. The same inequality is trivial
for $\vec a=0$~\cite{RuskaiSzarekWerner2002}. It follows that
$1-|T\vec a|^2\ge 1-|\vec a|^2$. Applying the product law
\eqref{eq:impurity} to the two local factors gives the inequality.
Strictness follows if one factor increases strictly and the other factor is
positive, for example if $|T_A\vec a|<|\vec a|$ and $|\vec b|<1$.
$\square$

\paragraph*{Non-unital counterexample.}
The amplitude-damping formula \eqref{eq:supp-EAC-AD-aligned} gives a
direct counterexample beyond unital channels. For aligned-$+\hat z$
product noise with $0<|\vec a|,|\vec b|<1$, local amplitude damping strictly
reduces $\EAC$. Hence the product-family monotonicity statement is genuinely
unital. It does not extend to arbitrary local channels.

\paragraph*{Scope.}
The theorem applies only to product noise. Non-product separable noise
states are outside this monotonicity statement.

\section{Evaluating the two capacities}
\label{subsec:evaluating-capacities}
This appendix records the general evaluation formulas used when no product or
$X$-state closed form is available.

\paragraph*{Entanglement absorption.}
The entanglement absorption capacity is the largest amount of Bell mixing
before $\sigma^{T_B}+\eta Q$ loses positivity. Here
$\sigma^{T_B}\succeq0$ on the separable set, while $Q=\Swap/2$ has a
one-dimensional negative eigenspace spanned by $|\PsiM\rangle$ with
eigenvalue $-\tfrac12$. Positivity can fail only along
directions where $Q$ has a negative expectation. Each such direction sets
its own bound: every $|x\rangle$ with $\langle x|Q|x\rangle<0$
forces $\eta\le\langle x|\sigma^{T_B}|x\rangle/(-\langle x|Q|x\rangle)$,
while directions with $\langle x|Q|x\rangle\ge0$ impose no bound. Taking
the tightest of these bounds gives the capacity,
\begin{equation}
\EAC(\sigma)=\min_{\langle x|Q|x\rangle<0}
\frac{\langle x|\sigma^{T_B}|x\rangle}{-\langle x|Q|x\rangle}.
\label{eq:EAC-Rayleigh}
\end{equation}
In the interior of the separable set, where $\sigma^{T_B}\succ0$, the same
quantity can be computed from an ordinary Hermitian eigenvalue problem. Set
$P:=\sigma^{T_B}$ and
$H:=P^{-1/2}QP^{-1/2}$. Then
$$
P+\eta Q
=P^{1/2}(I+\eta H)P^{1/2}.
$$
The matrix $P+\eta Q$ is positive semidefinite exactly when
$I+\eta H$ is positive semidefinite. If $\mu_{\min}<0$ is the smallest
eigenvalue of $H$, the first loss of positivity occurs at
$1+\eta\mu_{\min}=0$, hence
\begin{equation}
\EAC(\sigma)=-\frac{1}{\mu_{\min}},
\qquad
\mu_{\min}=\lambda_{\min}(H),
\label{eq:EAC-rescaled-eigenvalue}
\end{equation}
where $\lambda_{\min}(H)$ denotes the smallest eigenvalue of
$H=(\sigma^{T_B})^{-1/2}Q(\sigma^{T_B})^{-1/2}$.
Equivalently $\det(\sigma^{T_B}+\eta Q)=0$ locates the same crossing,
$\EAC$ being its first positive root. At the edge of the separable set
$\sigma^{T_B}$ can be singular, meaning that it has a zero eigenvalue. Then
$\det(\sigma^{T_B})=0$, so $\eta=0$ is automatically a root of
$\det(\sigma^{T_B}+\eta Q)=0$. This root only says that the starting point
lies on the boundary of the positive cone. It need not be the Bell-mixing
threshold. In this boundary case the square-root reduction above is also no
longer available, and \eqref{eq:EAC-Rayleigh} remains the correct
evaluation. The product, pure-reference, and $X$-state formulas of
Sections~\ref{sec:eval1} and~\ref{sec:eval2} are the cases where
$\sigma^{T_B}+\eta Q$ has enough structure for a closed-form solution.

\paragraph*{Fidelity absorption.}
For $\EACF$ the same parameter $\eta$ is used, but the condition is
$f(\rho_\eta)\le\tfrac12$ for
$\rho_\eta=(\sigma+\eta\PhiP)/(1+\eta)$. Expanding this inequality for
each maximally entangled $|\beta\rangle\in\mathcal M$ gives a witness
bound. For a fixed $|\beta\rangle$,
\begin{align*}
\langle\beta|\rho_\eta|\beta\rangle\le\frac12
&\Longleftrightarrow
\frac{\langle\beta|\sigma|\beta\rangle
+\eta\langle\beta|\PhiP|\beta\rangle}{1+\eta}
\le\frac12\\
&\Longleftrightarrow
\eta\Bigl(\langle\beta|\PhiP|\beta\rangle-\frac12\Bigr)
\le
\frac12-\langle\beta|\sigma|\beta\rangle .
\end{align*}
Since $\sigma$ is separable, $\langle\beta|\sigma|\beta\rangle\le1/2$.
Therefore a test state with $\langle\beta|\PhiP|\beta\rangle\le1/2$
imposes no upper bound on $\eta$. A test state with
$\langle\beta|\PhiP|\beta\rangle>1/2$ gives
$$
\eta\le
\frac{\tfrac12-\langle\beta|\sigma|\beta\rangle}
{\langle\beta|\PhiP|\beta\rangle-\tfrac12}.
$$
Taking the tightest such bound over all maximally entangled witnesses gives
\begin{equation}
\EACF(\sigma;\PhiP)=
\inf_{\substack{|\beta\rangle\in\mathcal M\\
\langle\beta|\PhiP|\beta\rangle>1/2}}
\frac{\tfrac12-\langle\beta|\sigma|\beta\rangle}
{\langle\beta|\PhiP|\beta\rangle-\tfrac12}.
\label{eq:EACF-witness-evaluation}
\end{equation}
This is the teleportation analogue of \eqref{eq:EAC-Rayleigh}: it finds
the first maximally entangled witness that reaches the level
$f(\rho_\eta)=\tfrac12$. The reductions to the closed forms can be seen
directly from this expression.

For product noise, let
$A=(I+\vec a\!\cdot\!\boldsymbol\sigma)/2$ and
$B=(I+\vec b\!\cdot\!\boldsymbol\sigma)/2$. If $R$ is the correlation
matrix of the maximally entangled witness $|\beta\rangle$, Lemma~\ref{lem:max-ent-correlation}
gives $R\in O(3)$ with $\det R=-1$. Since maximally entangled states have
maximally mixed one-qubit marginals, only the correlation tensors enter:
$$
\langle\beta|A\otimes B|\beta\rangle
=\frac14\left(1+\Tr(\vec a\,\vec b^{\,T}R^T)\right),
\qquad
\langle\beta|\PhiP|\beta\rangle
=\frac14\left(1+\Tr(D_{\PhiP}R^T)\right).
$$
Substituting these two overlaps in
\eqref{eq:EACF-witness-evaluation} gives
$$
\EACF(A\otimes B;\PhiP)
=\inf_{\substack{R\in O(3),\ \det R=-1\\
\Tr(D_{\PhiP}R^T)>1}}
\frac{1-\Tr(\vec a\,\vec b^{\,T}R^T)}
{\Tr(D_{\PhiP}R^T)-1}.
$$
The same reduction can be written as a feasibility condition. For a fixed
maximally entangled witness with correlation matrix $R$,
$$
\langle\beta|\rho_\eta|\beta\rangle
=\frac14\left(1+
\frac{\Tr((\vec a\,\vec b^{\,T}+\eta D_{\PhiP})R^T)}{1+\eta}\right).
$$
Since $1+\eta>0$, the inequality
$\langle\beta|\rho_\eta|\beta\rangle\le\tfrac12$ is equivalent to
$$
\Tr((\vec a\,\vec b^{\,T}+\eta D_{\PhiP})R^T)\le 1+\eta .
$$
Requiring this for every maximally entangled witness gives
$$
\max_{\substack{R\in O(3)\\ \det R=-1}}
\Tr\!\left((\vec a\,\vec b^{\,T}+\eta D_{\PhiP})R^T\right)
\le 1+\eta .
$$
Therefore $\EACF(A\otimes B;\PhiP)$ is the largest $\eta$ for which the
displayed inequality holds. The two sides are continuous in $\eta$, so the
largest feasible value is found by setting them equal.
This is precisely the Horodecki singular-value optimization carried out
in Appendix~\ref{subsec:supp-product-tele}. Solving it gives the
$\beta=\PhiP$ case of the Bell-frame product formula
\eqref{eq:EACF-product}.

For separable $X$ states, using \eqref{eq:EACF-witness-evaluation} is
equivalent to imposing $f(\rho_\eta)\le\tfrac12$. Section~\ref{subsec:xstate-fidelity}
evaluates this maximization directly on the $X$ family. In the
$|\PhiP\rangle$ frame the resulting crossing gives \eqref{eq:EAC-F-X}.

\section{Additional closed forms}
\label{sec:supp-extra-closed-forms}

The two main families are not the only noise states with closed-form
thresholds. This appendix records two additional closed forms: noise
diagonal in the computational basis, states with a rank-one correlation
tensor. Both follow by specializing the evaluation formulas in
Appendix~\ref{subsec:evaluating-capacities}.

\subsection{Computational-basis diagonal noise}
\label{subsec:extra-diagonal}
For a state diagonal in the computational basis,
$\sigma_{\rm diag}=\operatorname{diag}(a,b,c,d)$ with $a+b+c+d=1$,
the computation separates the two coordinates. In the $\Phi^\pm$ frames,
\begin{equation}
\EAC(\sigma_{\rm diag};\Phi^\pm)=2\sqrt{bc},
\qquad
\EACF(\sigma_{\rm diag};\Phi^\pm)=b+c,
\label{eq:extra-diag-Phi}
\end{equation}
and in the $\Psi^\pm$ frames,
\begin{equation}
\EAC(\sigma_{\rm diag};\Psi^\pm)=2\sqrt{ad},
\qquad
\EACF(\sigma_{\rm diag};\Psi^\pm)=a+d.
\label{eq:extra-diag-Psi}
\end{equation}
In each frame the entanglement threshold sees the geometric mean of the
opposite-block populations, while the teleportation threshold sees their
sum. On the diagonal
product slice
$A=(I+|\vec a|\sigma_z)/2$, $B=(I+|\vec b|\sigma_z)/2$, for which
Eq.~\eqref{eq:extra-diag-Phi} gives
$$
\EACF(A\otimes B;\Phi^\pm)=\frac{1-|\vec a|\,|\vec b|}{2},
\qquad
\lambda_F(A\otimes B;\Phi^\pm)
=
\frac{1-|\vec a|\,|\vec b|}{3-|\vec a|\,|\vec b|}.
$$

\subsection{Rank-one correlation tensors}
\label{subsec:extra-rank-one}
The product formula for $\EACF$ extends to a broader family. On the separable domain,
$\EACF$ depends on the noise state only through its correlation tensor
$(T_\sigma)_{ij}:=\Tr(\sigma\,\sigma_i\otimes\sigma_j)$. Any two noise
states with the same tensor share the same capacity. For every state whose
correlation tensor is rank one,
$$
T_\sigma=x\,y^T .
$$
Writing $h:=\|x\|\,\|y\|$ and the Bell-frame alignment $m_\beta:=x^TD_\beta y$,
where $D_\beta$ is the sign matrix \eqref{eq:Dbeta} of the reference
correlation tensor, the capacity is
\begin{equation}
\EACF(\sigma;\beta)
=
\frac{\sqrt{1-h^2+m_\beta^2}-m_\beta}{2}.
\label{eq:extra-rank-one-CF}
\end{equation}
Product noise is the case $x=\vec a$, $y=\vec b$, which recovers the
product law \eqref{eq:EACF-product}. Eq.~\eqref{eq:extra-rank-one-CF}
extends it to separable states with rank-one $T_\sigma$, including
non-product states. The capacity $\EAC$ is not fixed by $T_\sigma$ alone,
so it has no matching reduction.

\begin{proof}
For separable $\sigma$, the number $h$ is at most one. Indeed, for unit
vectors $u,v\in\mathbb R^3$,
$$
u^TT_\sigma v=\Tr\bigl(\sigma\,(u\cdot\boldsymbol\sigma)\otimes
(v\cdot\boldsymbol\sigma)\bigr),
$$
and the observable on the right has operator norm one. Hence
$|u^TT_\sigma v|\le1$. If $T_\sigma=x\,y^T$, the largest possible value of
$|u^TT_\sigma v|$ is $\|x\|\,\|y\|=h$, so $h\le1$.

The calculation uses the same singular-value step as
Appendix~\ref{subsec:supp-product-tele}. In that step only the numerator
correlation tensor is used, so $\vec a\,\vec b^{\,T}$ can be replaced by
$x\,y^T$. Along the mixing line
$\rho_\eta=(\sigma+\eta\beta)/(1+\eta)$ the numerator correlation tensor
is $M_\eta=\eta D_\beta+x\,y^T$. The fully entangled fraction is evaluated
from the singular values of $M_\eta$. Since $D_\beta$ is orthogonal,
left multiplication by it preserves singular values. Hence $M_\eta$ shares the
singular values of $\eta I+(D_\beta x)y^T$: one equals $\eta$, and on the
teleportation branch $\eta+m_\beta>0$ (where
$\det M_\eta=-\eta^2(\eta+m_\beta)<0$) the other two sum to
$\sqrt{h^2+4\eta(\eta+m_\beta)}$. The
fully entangled fraction is then the $\det<0$ case of
\eqref{eq:supp-horodecki-f-singvals},
$$
f(\rho_\eta)=\frac14\left(1+
\frac{\eta+\sqrt{h^2+4\eta(\eta+m_\beta)}}{1+\eta}\right),
$$
and the onset $f(\rho_\eta)=\tfrac12$ reduces to
$\sqrt{h^2+4\eta(\eta+m_\beta)}=1$, i.e.\
$\eta^2+m_\beta\eta-\tfrac14(1-h^2)=0$, whose nonnegative root is
Eq.~\eqref{eq:extra-rank-one-CF}.
If $\eta+m_\beta\le0$, the same estimate as in the product calculation gives
$f(\rho_\eta)\le \frac14(1+(\eta+h)/(1+\eta))\le\frac12$, since $h\le1$ on
separable states. No crossing is lost by using the
$\eta+m_\beta>0$ branch.
\end{proof}

\end{document}